\documentclass[11pt]{article}
\usepackage{graphicx} %

\usepackage{lineno}

\usepackage{pgfplots}
\usepackage{xcolor}
\usepackage{graphicx}
\pgfplotsset{compat=1.7}
\usetikzlibrary{tikzmark, arrows}
\usepackage{tkz-euclide}
\usetikzlibrary{calc}

\usepackage{natbib}

\usepackage{algorithmic}

\usepackage{enumitem}
\usepackage{float}
\usepackage[figurename=Fig.]{caption}

\usepackage{thmtools} 
\usepackage{thm-restate}

\usepackage[ruled,linesnumbered,vlined,noend]{algorithm2e}
\SetKwComment{Comment}{$\triangleright$\ }{}

\title{Near-optimal quantum query lower bounds on bipartiteness and expansion testing in the bounded-degree graph model}

\date{}

\usepackage[T1]{fontenc}
\usepackage{amsmath,amsthm,amssymb,xcolor}
\usepackage[margin=1in]{geometry}
\usepackage[utf8]{inputenc}
\usepackage{latexsym,bbm,graphicx,float,mathtools}

\usepackage{hyperref}

\usepackage{amsthm}
\usepackage[capitalise]{cleveref}

\usepackage{thm-restate}

\usepackage{apxproof}

\usepackage{authblk}

\usepackage[ruled,linesnumbered,vlined,noend]{algorithm2e}

\usepackage{braket}
\usepackage{stackrel}
\usepackage[all]{xy}

\usepackage{enumitem}
\setlist{itemsep=0em} %

\newcommand{\cA}{\mathcal{A}}

\newcommand{\cP}{\mathcal{P}}

\newcommand{\cO}{\mathcal{O}}

\newcommand{\eps}{\varepsilon}

\usepackage[disable]{todonotes}

\usepackage{todonotes}

\newcounter{mynotes}
\newtheorem{theorem}{Theorem}[section]
\newtheorem{lemma}[theorem]{Lemma}

\newtheorem{proposition}[theorem]{Proposition}
\newtheorem{corollary}[theorem]{Corollary}

\newtheorem{definition}[theorem]{Definition}
\newtheorem{remark}[theorem]{Remark}

\usepackage{abstract}

\usepackage{tikz}
\usetikzlibrary{quantikz2}

\newcommand{\Ftwo}{\mathbb F_2}
\newcommand{\Prb}{\mathbb P}
\newcommand{\YES}{\mathsf{yes}}
\newcommand{\NO}{\mathsf{no}}
\newcommand{\abs}[1]{\left|#1\right|}

\DeclareMathOperator{\fr}{fr}
\DeclareMathOperator{\comp}{comp}
\DeclareMathOperator{\Bin}{Bin}

\definecolor{matchblue}{RGB}{20,112,220}
\definecolor{matchorange}{RGB}{230,105,20}
\definecolor{matchgreen}{RGB}{20,145,58}

\usetikzlibrary{arrows.meta}

\author[1]{Chandrima Kayal\thanks{Email: chandrimakayal2012@gmail.com.}}

\author[2]{Sayantan Sen\thanks{Email: sayantan789@gmail.com.}}

\author[3]{D\'aniel Szab\'o\thanks{Email: szabo.daniel@lmu.de.}}

\affil[1]{Universit\'e Paris Cit\'e, CNRS, IRIF, Paris, France}
\affil[2]{Centre for Quantum Technologies, National University of Singapore, Singapore}
\affil[3]{Ludwig-Maximilians-Universität München \& MCQST, Munich, Germany}

\begin{document}

\maketitle

\begin{abstract}

In this work, we study bipartiteness and expansion testing, two canonical problems in graph property testing in the bounded-degree model through the lens of quantum query complexity. In the classical setting, it is known that $\widetilde{\Theta}(\sqrt{N})$ queries are necessary and sufficient for both these testing problems (Goldreich and Ron, 1999, 2000 \& 2002), where $N$ denotes the number of vertices of the input graph. Due to their significance,   (Ambainis, Childs, and Liu, 2011) initiated the study of these problems in the quantum setting and designed quantum algorithms for bipartiteness and expansion testing that perform $\widetilde{O}(N^{1/3})$ queries, showing a polynomial speedup. They also proved that $\widetilde{\Omega}(N^{1/4})$ queries are necessary for expansion testing, but the possibility of an exponential quantum advantage for bipartiteness testing remained open. Despite significant effort, there has been no improvement in these results in the last decade and a half. 
In this work, we prove essentially tight $\widetilde{\Omega}(N^{1/3})$ quantum query lower bounds for both bipartiteness and expansion testing, thereby completely characterizing the quantum query complexity of these problems up to polylogarithmic factors. 
While our proofs use the polynomial method  similarly to Ambainis, Childs, and Liu, we use intermediate problems that we relate to the main problems via reductions, and perform a more precise analysis of the resulting polynomials, leading to the near-optimal lower bounds.

\end{abstract}

\section{Introduction}
Efficient extraction and inference from large data sets by examining only a part of the input is becoming more and more relevant in the current era of big data. The field of \emph{property testing} designs randomized algorithms that determine whether an object approximately satisfies a given property while accessing only a small part of it.  
When the input object is a graph, it is known as \emph{graph property testing}, where graph properties are closed under graph isomorphism. Formally, an isomorphism-closed collection of graphs $\mathcal{P}$ is called a \emph{graph property}. For a property $\mathcal{P}$, a graph $G$ is said to satisfy $\mathcal{P}$, if $G \in \mathcal{P}$, and is said to be \emph{$\eps$-far} from $\mathcal{P}$, if one needs to modify at least an $\eps$-fraction of the edges in $G$ to obtain a graph in $\mathcal{P}$, for some parameter $\eps \in (0,1)$. 
An \emph{$\eps$-tester} for $\mathcal{P}$ is a randomized algorithm, that accepts graphs $G \in \mathcal{P}$, and rejects graphs that are $\eps$-far from $\mathcal{P}$, with probability at least $2/3$. Over the last two decades, the field of graph property testing has been studied extensively, culminating in a wide array of results and techniques; we refer to the books~\cite{goldreich2017introduction, bhattacharyya2022property} and the surveys~\cite{DBLP:journals/fttcs/Ron09,DBLP:conf/propertytesting/CzumajS10,DBLP:journals/siamdm/RubinfeldS11} for more details.

In graph property testing, there are several access models, depending on the graph representation. In this work, we study the \emph{bounded-degree model}, where the graphs are represented as adjacency lists since the maximum degree of any vertex is at most a constant $d$. The tester can query any neighbor of a vertex of the input graph.

In this work, we focus on \emph{bipartiteness testing} and \emph{expansion testing}, two canonical and well studied problems in this bounded-degree model. In bipartiteness testing, the goal is to distinguish whether the input graph $G$ is bipartite or $\eps$-far from bipartite, by performing as few queries as possible. \cite{goldreich99sublinear,goldreich2002property} proved that \(\widetilde{\Theta}(\sqrt{N})\) queries are necessary and sufficient for bipartiteness testing in this model. Similarly, in expansion testing, the goal is to distinguish whether the input graph $G$ is an $\alpha$-expander, or $\eps$-far from all $\alpha'$-expanders, for some parameters $\alpha, \alpha'$ with $0 < \alpha'  \leq \alpha < 1$. \cite{goldreich2002property, goldreich2011testing} proved that \(\widetilde{\Theta}(\sqrt{N})\) queries are necessary and sufficient for expansion testing as well.~\footnote{More precisely, the best known classical algorithm \cite{Nachmias_shapira_expansion} runs in time $\widetilde O(N^{1/2+\mu})$ and its task is to distinguish $\alpha$-expanders and graphs that are far from $\Omega(\mu\alpha^2)$-expanders; but in the lower bound, $\alpha'$ can be any positive number.}

Quantum algorithms can solve certain problems substantially faster than classical algorithms, as illustrated by Shor’s factoring algorithm~\cite{Shor1994}. This motivated the study of \emph{quantum property testing}~\cite{friedl2003quantum,buhrman2003quantum}. Polynomial quantum speedups are known for several graph-testing problems, including bipartiteness and graph isomorphism testing~\cite{ambainis2011quantum,chakraborty2010new,DBLP:journals/quantum/Apers20,DBLP:conf/tqc/ChenAH24}; see~\cite{montanaro2016survey} for a survey.

A central goal of quantum property testing is either to design quantum testers that outperform all classical testers or to prove lower bounds limiting such improvements. Despite substantial progress, many fundamental questions remain open, and the full extent of the quantum advantage in property testing is still not completely understood.

In the quantum setting of the bounded-degree model, \cite{ambainis2011quantum} studied bipartiteness and expansion testing and designed the first quantum property testers for these problems. In particular, they showed that $\widetilde{O}(N^{1/3})$ quantum queries are sufficient for testing both these properties\footnote{Similarly to the classical case, the expansion testing algorithm distinguishes $\alpha$-expanders and graphs that are far from $\Omega(\mu\alpha^2)$-expanders, and it runs in time $\widetilde O(N^{1/3+3\mu})$.}, thereby demonstrating a polynomial quantum speedup for both these problems. Moreover, they showed that $\widetilde{\Omega}(N^{1/4})$ quantum queries are necessary for expansion testing, thereby showing that the quantum speedup is only polynomial in this case. %
However, they did not prove any lower bound on bipartiteness testing, and the possibility of an exponential quantum speedup for this problem has remained open for over a decade and a half.

Due to its significance, proving a non-trivial quantum query lower bound for this problem was mentioned as an open question in various influential works, such as~\cite{montanaro2016survey, Ben-DavidCGKPW20}. In particular, the authors in \cite{ambainis2011quantum} and \cite{Ben-DavidCGKPW20} explicitly asked the following question:
\begin{center}
    \emph{Is an exponential quantum speed-up possible for bipartiteness testing in the bounded-degree model?}
\end{center}

In other words, the question is whether one can prove any polynomial (in $N$) quantum query lower bound for bipartiteness testing, ruling out the possibility of an exponential quantum speed-up for this problem. 
\color{black}

\subsection{Our results and techniques}
\subsubsection{Main results}

In this work, we settle (up to polylogarithmic factors) the quantum query complexities of both bipartiteness testing and expansion testing, proving that $\widetilde{\Omega}(N^{1/3})$ quantum queries are indeed necessary. This demonstrates that the quantum property testing algorithms for bipartiteness and expansion testing from \cite{ambainis2011quantum} are essentially optimal.

\begin{theorem}[Informal statement of \Cref{theo:biplb}]\label{thm:informal}
Any two-sided error quantum $\eps$-tester for bipartiteness testing on $N$ vertex bounded-degree graphs is required to perform $\widetilde{\Omega}(N^{1/3})$ queries.
\end{theorem}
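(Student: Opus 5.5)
We prove the bound with the polynomial method, following the outline of Ambainis, Childs, and Liu, but we pass through an intermediate problem and analyze the resulting polynomial more carefully. The classical hard instances of Goldreich and Ron compare random bounded-degree graphs built on a Hamiltonian cycle plus a random matching. In the \YES{} case every matching edge joins vertices at odd distance along the cycle, so the graph is bipartite. In the \NO{} case the matching is uniform, and the graph is far from bipartite with high probability. These instances are awkward for polynomials directly, so we first reduce from a symmetric intermediate problem.

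The intermediate problem is a distinguishing task on a function or permutation $f:[N]\to[N]$, given as an oracle. A random hidden $2$-coloring $c$ of $[N]$ is fixed. In the \YES{} case $f$ is a uniformly random matching, or collection of short cycles, respecting $c$: every pair joins opposite colors. In the \NO{} case $f$ is an unconstrained random matching. We need a reduction in which each query to the bounded-degree graph is simulated by $O(\mathrm{polylog}\, N)$ queries to $f$. We build the graph as $c$-consistent paths or expander gadgets of polylogarithmic length hooked together by $f$, so that the \YES{} instances are bipartite. The \NO{} instances must be $\eps$-far from bipartite with high probability. For that we use the expansion of random graphs, which makes every bipartition violate $\Omega(N)$ edges; this is shown by a union bound over cuts together with a Chernoff bound. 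Any bipartiteness tester then yields a tester for the intermediate problem with only polylogarithmic overhead.

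For the intermediate problem, suppose an algorithm makes $T$ queries. Its acceptance probability is a polynomial of degree $2T$ in the indicator variables $x_{ij}=[f(i)=j]$. We average over the hidden symmetry, namely relabelings of vertices that preserve the coloring, and over the coloring itself. After this averaging, the expectation under a distribution parameterized by the number of ``violating'' pairs becomes a univariate polynomial $p(k)$ of degree $O(T)$. Here $k$ is the number of monochromatic matching edges, taking values $k\in\{0,2,4,\dots,N/2\}$. The \YES{} case corresponds to $k=0$, and typical \NO{} instances have $k\approx N/4$. One then has $p(0)\ge 2/3$ and $p(k)\le 1/3$ for $k$ in a constant-size window around $N/4$. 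We also have $0\le p(k)\le 1$ at every integer point of the domain.

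The main obstacle is the degree bound. A naive Paturi or Markov-brothers argument gives only $\Omega(N^{1/4})$, as in the earlier expansion lower bound, because $p$ is bounded only on a lattice and the jump occurs near one endpoint. For $N^{1/3}$ we exploit more structure. The averaged polynomial is really a bivariate polynomial in the number $k$ of monochromatic edges and a second parameter $m$, the imbalance of the coloring class sizes, and the dependence on these parameters takes a falling-factorial, hypergeometric-like form. We restrict to a well-chosen curve in the $(k,m)$-plane along which the relevant points are evenly spaced. We then apply a dual-polynomial or Ambainis-style collision argument, analogous to the $N^{1/3}$ collision lower bound: there, distinguishing $1$-to-$1$ from $2$-to-$1$ reduces to a polynomial that must be bounded at many points and yet change quickly. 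Concretely, we aim to show that the averaged polynomial $q(t)$ is bounded on $t\in\{1,\dots,N^{\Theta(1)}\}$ in an $r$-to-$1$ style parameterization and jumps between $t=1$ and $t=2$. Paturi's bound then gives $\deg q=\Omega(N^{1/3})$. Hence $T=\Omega(N^{1/3})$ for the intermediate problem, and by the reduction $T=\widetilde\Omega(N^{1/3})$ for bipartiteness testing.
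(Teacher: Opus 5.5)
Your proposal has a genuine gap, starting with the intermediate problem, which as stated carries no information. Suppose $c$ is uniform among colorings with given class sizes and $f$ is uniform among perfect matchings with exactly $k$ monochromatic pairs (with $k=0$ in the $\YES$ case). Then the joint law of $(c,f)$ is invariant under simultaneous relabelings of $[N]$. Hence the law of $f$ alone is invariant under all permutations, and since perfect matchings form a single orbit, $f$ is uniform for every $k$ and every class-size imbalance. Concretely, every perfect matching respects exactly $2^{N/2}$ balanced colorings. So your $p(k)$, and likewise the bivariate version in $(k,m)$, is constant, and there is no tester to reduce from.

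A hidden bipartition is only witnessed through cycles, which needs several matchings sharing one coloring. But then your reduction breaks: the simulator does not know $c$, so it cannot build ``$c$-consistent'' gadgets, and a fixed gadget path between two points of $[N]$ has the wrong parity for a random $c$ about half the time. The paper avoids both problems by decoupling structure from coloring. The underlying graph is a union of three random block matchings from the Ambainis--Childs--Liu family $P_{M,\ell}$. The hidden labeling $\xi$ enters only through edge signs: $b_{uv}=\xi_u\oplus\xi_v$ in the $\YES$ case and i.i.d.\ signs in the $\NO$ case. Signed edges are then replaced by paths of length $2$ or $3$, which preserves the frustration index exactly. This decoupling is also what gives every signed forest pattern the same probability $2^{-e}$ under both distributions, an exact cancellation a graph-level model would not provide.

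Second, the degree bound, which is the entire difficulty, is asserted rather than derived. Even for a non-constant univariate $p$, your constraints ($0\le p\le 1$ on the lattice, $p(0)\ge 2/3$, $p\le 1/3$ near $N/4$) are satisfied by the degree-two polynomial $\frac23(1-2k/N)^2$, so they force nothing. The ``well-chosen curve'' and the jump between $t=1$ and $t=2$ are never constructed. Nothing explains how correctness at one parameter value becomes a constant jump between two \emph{adjacent} parameter values while the polynomial stays bounded on $\Theta(N^{2/3})$ further values.

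This is exactly where the paper's new ideas sit. Both sign distributions are defined for every $(M,\ell)$. In $\Delta=p_{\YES}-p_{\NO}$ only patterns of cycle rank $\beta(F)\ge 1$ survive, each carrying $\ell^{\beta(F)}$, so $\Delta=\ell H/D_q$. Then $P=HA_q/2\approx\Delta/(2\ell)$ exceeds $0.44$ at $(M_0,1)$ but is below $0.26$ in absolute value at $(M_0,2)$, using only $|\Delta|\le 1$.

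Your diagnosis of the $N^{1/4}$ barrier is also off. The barrier comes from the averaged acceptance probability being a rational function with denominator $D_q$ of degree $\Theta(q\log q)$. Treating $D_q$ as essentially constant on the grid needs roughly $q^2R\lesssim N$. The paper instead approximates $1/D_q$ by a truncated geometric-series polynomial $A_q$ of degree $O(q\log q)$, valid as long as $R^{3/2}+4qR$ is at most a small constant times $N$. That permits $R=\Theta(N^{2/3})$ in the bivariate Paturi-type lemma and yields $\deg P=\Omega(N^{1/3})$.
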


\begin{theorem}[Informal statement of \cref{theo:expansionlb}]\label{thm:informal-expansion}
Any two-sided error quantum $\eps$-tester for testing expansion of $N$-vertex bounded-degree graphs must make $\widetilde{\Omega}(N^{1/3})$ queries.
\end{theorem}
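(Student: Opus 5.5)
We plan to prove the $\widetilde{\Omega}(N^{1/3})$ bound for expansion testing with the polynomial method, in the style of Ambainis, Childs, and Liu, but routed through an intermediate problem. Their $N^{1/4}$ argument contrasts a random $d$-regular graph (an expander) with a union of a few disjoint random $d$-regular components (far from every expander), and bounds the degree of a symmetrized acceptance polynomial. We will replace this direct comparison by a reduction from a cleaner problem on functions, for which a sharp polynomial bound is available.

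\textbf{Step 1: intermediate problem.} We encode the graph by a collection of random perfect matchings (or permutations) $\pi_1,\dots,\pi_{d}$ on $[N]$. The intermediate problem asks to distinguish two cases. In the first, each $\pi_i$ is uniform on $[N]$. In the second, $[N]$ is first split at random into two halves $A,B$, and each $\pi_i$ is a uniform permutation of $A$ together with a uniform permutation of $B$. A query to the adjacency list of the graph is simulated by one query to some $\pi_i$. Standard facts give that the first graph is an $\alpha$-expander with high probability and that the second is $\eps$-far from every $\alpha'$-expander (the cut $(A,B)$ has no edges). Hence a $T$-query expansion tester yields a $T$-query algorithm for the intermediate problem, up to an $o(1)$ change in success probability.

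\textbf{Step 2: symmetrization.} The acceptance probability of a $T$-query algorithm is a polynomial of degree at most $2T$ in the indicator variables $x_{i,u,v}=[\pi_i(u)=v]$. We average over relabelings of $[N]$ and over the partition, so the expectation depends only on the structure of the $\le 2T$ queried pairs. We then interpolate between the two distributions with a one-parameter family. For instance, let the block structure have $k$ blocks of size $N/k$, or let a parameter $t$ be the number of crossing pairs. The goal is to show that the averaged acceptance probability is a polynomial $p(\cdot)$ of degree $O(T)$ in this parameter, bounded in $[0,1]$ on the admissible integer points, and that it changes by $\Omega(1)$ between the expander endpoint and the two-block endpoint.

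\textbf{Step 3: degree bound (main obstacle).} Ambainis, Childs, and Liu lose because their rational or Laurent expansion in $1/N$ forces a bound only at scale $N^{1/4}$. We will instead treat $p$ as a polynomial in $s=1/N$-type parameters in which each monomial of degree $j$ in the query indicators contributes a factor like $\prod (N/k - \ell)^{-1}$. We will use a careful expansion to show that $p$ has degree $O(T)$ in $1/k$ and degree $O(T^2/N)$-controlled coefficients. We then apply a Paturi- or Coppersmith–Rivlin-type bound for polynomials bounded on a discrete grid with a jump in a small interval. With the grid spacing $\sim 1/N$ and the jump $\Omega(1)$ occurring within the first few points, the analysis should yield $T^3 \gtrsim N/\mathrm{polylog}\,N$, i.e.\ $T=\widetilde\Omega(N^{1/3})$. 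This is the analogue of the collision-problem $N^{1/3}$ bound, since detecting the block structure is essentially a collision-type task.

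The delicate part is justifying that $p$ remains bounded at interpolating points that do not correspond to genuine distributions. We will handle this the way Kutin's collision lower bound does, by choosing the parametrization so that every integer point in range is a valid mixture.
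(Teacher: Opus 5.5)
\textbf{There is a genuine gap: Step~1 is fine, but Steps~2--3 lack both ingredients that produce the $N^{1/3}$ bound.} Without subsampling, $d\ge 3$ random matchings on all of $[N]$ already give an expander, so Step~1 stands.

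\textbf{First gap: the denominator.} The averaged acceptance probability is not a polynomial in the block parameter. Already a three-colored triangle has probability $\frac{k(N-2k)}{(N-1)(N-2)(N-k)^2}$ in your $k$-block model. In general one gets a rational function $F/D_q$ with $D_q=\prod_{a\le 2q}(M-(2a-1)\ell)^{\lfloor 2q/a\rfloor}$, and neutralizing this denominator is the whole difficulty. Your diagnosis of where Ambainis--Childs--Liu lose is also off. They replace the denominator by a constant, which is justified only while the entire product $\prod_a\bigl(1-(2a-1)\ell/M\bigr)^{\lfloor 2q/a\rfloor}$ stays bounded. That forces roughly $q^2R\lesssim N$, and against $\deg\ge\Omega(\sqrt R)$ it gives only $N^{1/4}$. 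The paper's key step is to multiply by $A_q$, the total-degree truncation of the geometric-series expansion of $1/D_q$. This is a polynomial of degree $4d_q+10=O(q\log q)$ with $|D_qA_q-1|\le 10^{-2}$. It only needs each individual factor $(M-(2a-1)\ell)/M_0$ to lie within $1/4$ of $1$, i.e.\ $qR\ll N$ and $R^{3/2}\ll N$. That permits $R=\Theta(N^{2/3})$, hence $\Omega(N^{1/3})\le\deg=O(q\log q)$. Your phrases ``degree $O(T)$ in $1/k$ with $O(T^2/N)$-controlled coefficients'' and ``the analysis should yield $T^3\gtrsim N$'' give no mechanism that makes a low-degree polynomial bounded on a parameter range of length about $N/T$.

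\textbf{Second gap: the grid of valid parameters.} With $N$ fixed and $k$ equal blocks, a point is a genuine distribution only when $2k\mid N$. So there is no long run of admissible integers to which Paturi applies, and nothing bounds the polynomial between those points. The paper instead uses the Ambainis--Childs--Liu two-parameter family: an ambient graph on $M$ vertices with $|M-M_0|\le R^{3/2}$ and $2\ell\mid M$, from which the $N$ visible vertices are subsampled, together with their bivariate Paturi-type lemma. That choice drives the rest of the construction:
\begin{itemize}
\item Subsampling leaves a constant fraction of ambient vertices unused, so $c=\Theta(\log N)$ matchings are needed for the $\ell=1$ graph to expand.
\item A replacement product with a fixed $3$-regular expander then restores degree $4$, costing the extra $\log^{1/3}N$.
\item An overflow bound $\exp(-\Omega(N^{1/3}))$ is played against the $\exp(O(q\log N))$ coefficient bound to pass between formal and feasible distributions.
\end{itemize}
Invoking Kutin names this obstacle but gives no parametrization in which every grid point is a valid graph distribution and pattern probabilities keep a denominator of the controlled form above. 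As written, the proposal is a plan rather than a proof.
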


\subsubsection{Technical overview}
Here we present the main ideas in the proofs of our results. Since we rely on some parts of \cite{ambainis2011quantum}, we first recap their proof idea.

\paragraph{The lower bound of \cite{ambainis2011quantum}:}
We give a short overview of the lower bound on expansion testing from \cite{ambainis2011quantum} to see which ideas we reuse and where we diverge from their proof.
They consider the following hard distribution $P_{M,\ell}$ of input graphs: take $M$ vertices and consider a partition of the vertex set into $\ell$ blocks of equal size $V_1,\dots,V_\ell$ (we take the parameters such that $2\ell$ divides $M$). In each $V_i$ for $i\in[\ell]$, take $c$ random perfect matchings, and assign a color to each matching. This way, every vertex has exactly one incident edge of each color. The graph constructed this way has maximum degree $c$, and we will call it the \emph{ambient} graph $A$.

The actual input graph $G$ on $N$ vertices ($N\le M$) is constructed by sampling vertices randomly from $A$. For each vertex of $G$, first sample a uniformly random block $V_i$ and then sample a uniformly random vertex inside $V_i$ among those that have not been sampled yet. $G$ is the subgraph of $A$ induced by the sampled vertices, thus it has maximum degree at most $c$. One can prove that the ``overflow'' event, i.e. when more than $M/\ell$ vertices are sampled from the same block, is very unlikely.

In their proof they take $M=(1+\Theta(N^{-0.1}))N$. They show that when the parameter $\ell=1$, then graph $G$ is an expander with high probability, and it is clear that $\ell=2$ always results in a far from expander graph, since it is not connected.

Then the main idea of the proof is the following. Graphs sampled from $P_{M,1}$ should be accepted with high probability, and those sampled from $P_{M,2}$ should be accepted with small probability; moreover, the acceptance probability is trivially bounded by 1 for all valid parameter pairs $(M,\ell)$. This suggests to use the \emph{polynomial method}: proving a lower bound on the degree of the acceptance probability (as a polynomial) yields a lower bound on the quantum query complexity. 

In \cite[Lemma 3]{ambainis2011quantum} it is proved that the expected acceptance probability of a $q$-query quantum algorithm over distribution $P_{M,\ell}$ is a fraction of two bivariate polynomials $\frac{f(M,\ell)}{g(M,\ell)}$, where both $f$ and $g$ are of degree $O(q\log q)$ and the denominator divides $D_q=\prod_{k=1}^{2q}(M-(2k-1)\ell)^{\lfloor 2q/k\rfloor}$.

In the proof they consider the acceptance probability polynomial and examine each monomial separately, in expectation. Each monomial is a product of indicator variables $x_{u,v,j}$ of the event ``vertices $u$ and $v$ are connected by an edge of color $j$''. This way, each monomial defines a graph pattern, and the expected value of the monomial is the probability that the pattern appears in the graph sampled from $P_{M,\ell}$. They analyze these expected values as rational functions, including the number of $\ell$-factors in the numerator.

To finish the lower bound, they bound the denominator of the expected acceptance probability by a constant, to obtain a simple polynomial rather than a fraction. Then they prove a lemma using a similar argument to the collision lower bound of \cite{aaronson2004quantum}, to lower bound the degree of the polynomial. For this, they consider parameter grid $|M-M_0|\le R^{3/2}$ and $\ell\le R$ with $M_0=(1+\Theta(N^{-0.1}))N$ and $R\in\Theta(N^{1/2})$, such that $2\ell$ divides $M$. Their lemma yields a degree lower bound of $\Omega(\sqrt{R})=\Omega(N^{1/4})$, and since the degree is $O(q\log q)$, the query lower bound is $\widetilde\Omega(N^{1/4})$.

\paragraph{Bipartiteness testing lower bound}
First, we look at a property on \emph{signed graphs} that we call \emph{balancedness} and prove that property testing this problem has the desired lower bound. Then we reduce it to bipartiteness testing and thus the quantum query lower bound extends to bipartiteness testing as well. 
A signed graph is a graph with bit labels (also called signs) on its edges, and it is balanced if there is an assignment of bits to the vertices such that every edge $uv$ has label that is the XOR of the bits of $u$ and $v$. 

For the lower bound, we begin with the hard instance construction of \cite{ambainis2011quantum} described in the previous section. We first consider a graph defined in terms of $c=3$ matchings over blocks of vertices. Any graph in the hard distribution has at least $1.3N$ edges with high probability. Once we have the base graph similar to \cite{ambainis2011quantum}, we define a signed variant of it in the following way. 
In the $\YES$ distribution, first a vertex signing is taken and the edge signs are defined accordingly: these are clearly balanced. In the $\NO$ distribution, the edge signs are taken uniformly at random, which results in far from balanced graphs with high probability.

Using \cite[Lemma 3]{ambainis2011quantum}, we know that the average acceptance probability of a $q$-query quantum query algorithm with inputs sampled from the hard distribution $P_{M,\ell}$, is a fraction of two polynomials each of degree $O(q \log(q))$, and the denominator divides $D_q=\prod_{k=1}^{2q}(M-(2k-1)\ell)^{\lfloor 2q/k\rfloor}$. We denote this acceptance probability as $p_{\YES}$ and $p_{\NO}$ when taking the edge signs according to the $\YES$ and the $\NO$ distribution, respectively.\footnote{
There is a subtle difference between the actual distribution (that we will sometimes call formal) and the one conditioned on no overflow event happening (that we call feasible). However, we will show that the overflow probability is small, and thus we disregard it in this section.}

Like \cite{ambainis2011quantum}, we also look at each monomial, which all correspond to a signed graph pattern. If this pattern is a forest, then the probabilities of this pattern appearing in a graph from the $\YES$ and the $\NO$ distribution are the same. This way, when considering the acceptance gap $\Delta(M,\ell)=p_{\YES}(M,\ell)-p_{\NO}(M,\ell)$ over the common denominator $D_q$, the monomials of forests cancel, and only ones corresponding to graphs with cycles remain.

By carefully analyzing the proof of \cite[Proposition 6]{ambainis2011quantum}, one can see that a better bound number of $\ell$-factors in the numerator of the monomials is $\ell^{e-v+c}$ where $e,v,c$ denote the number of edges, vertices and components of the corresponding graph pattern, respectively; and $e-v+c$ is the cycle rank of the graph pattern. Since the forests are already canceled in the acceptance gap, each remaining monomial brings at least one $\ell$-factor, which allows us to write $\Delta(M,\ell)=\frac{\ell H(M,\ell)}{D_q(M,\ell)}$.

The next step is to approximate the reciprocal of the denominator by a low degree polynomial. This is done by writing $D_q$ in a special product form, rewriting its reciprocal as an infinite sum and truncating it. The resulting polynomial $A_q(M,\ell)$ satisfies $|D_qA_q-1|\le0.01$. 

Now we define our final polynomial as $P(M,\ell)=H(M,\ell)A_q(M,\ell)/2$ which still satisfies $\deg(P)\in O(q\log q)$. 
We consider the hard distribution over the enlarged grid $|M-M_0|\le R^{3/2}$ and $\ell\le R$ with $M_0=4\lceil \frac{11N}{40} \rceil$ and $R\in\Theta(N^{2/3})$, such that $2\ell$ divides $M$.
Moreover, we know that $P(M_0,1)$ and $P(M_0,2)$ are bounded away by a constant and $P$ is bounded by 1 over the enlarged grid. Thus, using \cite[Lemma 4]{ambainis2011quantum}, we can prove the degree lower bound of $\Omega(\sqrt{R})=\Omega(N^{1/3})$ which results in a query lower bound of $\widetilde\Omega(N^{1/3})$.

To obtain our first main result, we prove a reduction from the balancedness property testing problem to bipartiteness testing. The reduction replaces each 0-label edge with a length-2 path and each 1-label edge with a length-3 path. The reduction works because of the fact that the distance of a signed graph from balancedness is the minimum number of edges whose deletion makes the signed graph balanced.

\begin{figure}
\centering
\begin{tikzpicture}[
  font=\small,
  vertex/.style={circle,draw=black,fill=white,minimum size=7mm,
    inner sep=0pt,line width=.7pt},
  edge/.style={line width=1.25pt}
]

\begin{scope}[xshift=0cm]
  \node[vertex] (yu) at (0,1.65) {$u$};
  \node[vertex] (yv) at (3.00,1.65) {$v$};
  \node[vertex] (yw) at (0,0) {$w$};
  \node[vertex] (yz) at (3.00,0) {$z$};

  \draw[edge,matchblue]   (yu)--(yv);
  \draw[edge,matchorange] (yu)--(yw);
  \draw[edge,matchorange] (yw)--(yz);
  \draw[edge,matchgreen]  (yv)--(yz);

  \node[anchor=south east,font=\scriptsize] at (yu.north west) {$\xi_u$};
  \node[anchor=south west,font=\scriptsize] at (yv.north east) {$\xi_v$};
  \node[anchor=north east,font=\scriptsize] at (yw.south west) {$\xi_w$};
  \node[anchor=north west,font=\scriptsize] at (yz.south east) {$\xi_z$};

  \node[anchor=north,font=\scriptsize] at (1.50,-.72)
    {$\xi_x\stackrel{\mathrm{i.i.d.}}{\sim}
      \operatorname{Bernoulli}(1/2),\qquad
      b_{xy}=\xi_x\oplus\xi_y$};
\end{scope}

\begin{scope}[xshift=7.25cm]
  \node[vertex] (nu) at (0,1.65) {$u$};
  \node[vertex] (nv) at (3.00,1.65) {$v$};
  \node[vertex] (nw) at (0,0) {$w$};
  \node[vertex] (nz) at (3.00,0) {$z$};

  \draw[edge,matchblue]   (nu)--(nv);
  \draw[edge,matchorange] (nu)--(nw);
  \draw[edge,matchorange] (nw)--(nz);
  \draw[edge,matchgreen]  (nv)--(nz);

  \node[anchor=south,font=\scriptsize] at (1.50,1.72) {$b_{uv}$};
  \node[anchor=north,font=\scriptsize] at (1.50,-.07) {$b_{wz}$};
  \node[rotate=90,font=\scriptsize] at (-.35,.825) {$b_{uw}$};
  \node[rotate=90,font=\scriptsize] at (3.35,.825) {$b_{vz}$};
  \node[anchor=north,font=\scriptsize] at (1.50,-.72)
    {$b_e\stackrel{\mathrm{i.i.d.}}{\sim}
      \operatorname{Bernoulli}(1/2)\quad(e\in E(H))$};
\end{scope}
\end{tikzpicture}
\caption{On the left, we have a $\YES$ instance balanced graph, and on the right, the $\NO$ instance graph illustrates independent random edge signs. \Cref{lem:random-sign-farness} shows that the $\NO$ instance is far from balanced with high probability.}
\end{figure}

\begin{figure}
\centering
\begin{tikzpicture}[
  font=\small,
  original/.style={circle,draw=black,fill=white,minimum size=7mm,
    inner sep=0pt,line width=.7pt},
  auxiliary/.style={circle,draw=black,fill=white,minimum size=4.7mm,
    inner sep=0pt,line width=.7pt},
  path/.style={draw=black,line width=.9pt},
  mapsto/.style={-{Latex[length=2.6mm,width=1.8mm]},draw=black,
    line width=.8pt}
]

\begin{scope}[xshift=0cm]
  \node[original] (ou0) at (0,0) {$u$};
  \node[original] (ov0) at (1.80,0) {$v$};
  \draw[path] (ou0)--node[above,font=\scriptsize] {$b_{uv}=1$} (ov0);
  \draw[mapsto] (2.30,0)--(3.10,0);

  \node[original]  (ou)  at (3.50,0) {$u$};
  \node[auxiliary] (opu) at (4.50,0) {};
  \node[auxiliary] (opv) at (5.50,0) {};
  \node[original]  (ov)  at (6.50,0) {$v$};
  \draw[path] (ou)--(opu)--(opv)--(ov);
  \node[above=2pt,font=\scriptsize] at (opu.north) {$a_{u,j}$};
  \node[above=2pt,font=\scriptsize] at (opv.north) {$p_{u,j}$};
  \node[anchor=north,font=\scriptsize] at (5.00,-.45)
    {$\ell(P_{uv})=3$};
\end{scope}

\begin{scope}[xshift=8.50cm]
  \node[original] (eu0) at (0,0) {$u$};
  \node[original] (ev0) at (1.80,0) {$v$};
  \draw[path] (eu0)--node[above,font=\scriptsize] {$b_{uv}=0$} (ev0);
  \draw[mapsto] (2.30,0)--(3.10,0);

  \node[original]  (eu)  at (3.50,0) {$u$};
  \node[auxiliary] (epu) at (5.20,0) {};
  \node[original]  (ev)  at (6.90,0) {$v$};
  \draw[path] (eu)--(epu)--(ev);
  \node[above=2pt,font=\scriptsize] at (epu.north) {$a_{u,j}$};
  \node[anchor=north,font=\scriptsize] at (5.20,-.45)
    {$\ell(P_{uv})=2$};
\end{scope}
\end{tikzpicture}
\caption{We describe the reduction from balancedness testing to bipartiteness testing. On the left, the construction describes what happens when $b_{uv}=1$ and on the right, it describes the reduction when $b_{uv}=0$.}
\end{figure}

\paragraph{Expansion testing lower bound}

Our lower bound proof for expansion testing follows a similar proof recipe as bipartiteness testing. But there are some important differences, which we discuss below.

For expansion testing, we consider similar hard instances as bipartiteness testing. However, using the same enlarged grid and $c=3$ number of matchings will no longer work since any vertex in the final graph has a constant probability of losing all of its matching neighbors. This would make the graphs in the $\YES$ distribution no longer expanders. To bypass this bottleneck, we consider $c=\Theta(\log N')$ perfect matchings in the input graph instead of three perfect matchings. This ensures that with high probability, in a $G$ of size $N'$ sampled from the $\YES$ distribution, every cut will have sufficient expansion, even when a constant fraction of the ambient vertices is unused. This ensures that $G$ is an expander.

However, adding $\Theta(\log N')$ perfect matchings will no longer keep the graph with constant maximum degree, which is contrary to the definition of the bounded degree model. This can be fixed by a reduction: in $G$, that has $N'$ vertices, we introduce a replacement graph gadget of $\Theta(\log N')$ vertices for each vertex in the intermediate graph. Combining the graph gadget with the intermediate graph, we obtain a graph $H$ on $N= \Theta(N' \log N')$ vertices and with maximum degree $4$, which satisfies our assumption of constant maximum degree.

We show that if $G$ is an expander then $H$ is as well, and similarly for being far from expanders. Thus, using a simulation argument, if expansion of $H$ is testable in $q$ queries then that of $G$ too, so it suffices to show a lower bound for $G$. This is done by the same argument as for balancedness testing. Because of the different sizes of $G$ and $H$, the polylogarithmic factor in the final bound is different than for bipartiteness, nonetheless we still obtain an $\widetilde\Omega(N^{1/3})$ quantum query lower bound.

\paragraph{Conclusion}
In this paper we prove  near-optimal quantum query complexity lower bounds for bipartiteness and expansion property testing. Our main tool is the framework of \cite{ambainis2011quantum} that uses the polynomial method.
The main new ideas for proving the near-optimal lower bounds are characterizing the exact number of $\ell$-factors in the numerator which lets us divide by the block parameter; and the use of a reciprocal polynomial instead of replacing the denominator by a constant. We believe that these ideas are potentially useful beyond the applications in this work.

\subsection{Open problems}
One problem left open by this work is determining the exact power of $\log(N)$ in the complexities of both bipartiteness and expansion testing.

Another interesting question is how the complexity of these problems depends on the property testing distance parameter $\eps$, i.e., what if we let $\eps$ depend on $N$ rather than considering it a constant.

Similarly, the dependence of the query complexity of expansion testing on the expansion parameter could be further examined. Note that both the classical and quantum algorithms distinguish $\alpha$-expanders and far from $\alpha'$-expanders, and they achieve the near-optimal complexity only if $\alpha'$ is significantly smaller than $\alpha$. It would be interesting to see how the complexity changes for different $\alpha$ versus $\alpha'$ gaps.

\paragraph{Organization}
In \Cref{sec:prelim}, we introduce the necessary notations, definitions and previous results.
In \Cref{sec:bip_lb}, we prove a lower bound on the balanced signed graph testing problem and reduce it to bipartiteness testing, thus yielding our first main result. 
In \Cref{sec:expansiontesting}, our second main result is proved: the lower bound on expansion testing. The proofs of some of the lemmas are presented in the appendix.

\section{Preliminaries}\label{sec:prelim}

\paragraph{Notation:} $[n]$ denotes the set $\{1, \ldots, n\}$, and $\oplus$ denotes the Boolean XOR operation. $h_2(\cdot)$ denotes the binary entropy function. 
For simplicity, for an edge $e=\{u,v\}$ of a graph we will sometimes use the shorthand $e=uv$.
Throughout this work, we will use the notations $O(\cdot)$ and $\Omega(\cdot)$ to hide the dependencies on parameters $\eps$ and $d$ that we consider to be constants. We will also use $\widetilde{O}(\cdot)$ and $\widetilde{\Omega}(\cdot)$, where we hide poly-logarithmic dependencies on the parameters. 
We assume that the reader has some familiarity with quantum computing and refer interested readers to the textbook \cite{nielsen_chuang} for further background.

Let us start by describing our query models.

\subsection{Query models}
\subsubsection{Classical Query Model}
In classical query complexity, we consider inputs $x\in \Sigma^I$ over a finite alphabet $\Sigma$ and indexed by a set $I$.
However, the input is not given explicitly to the algorithm. Instead, the algorithm has query access to an input oracle $\cO_x:I\to\Sigma$ encoding $x$ by $\cO_x(i)=x_i$, where the $i$-th entry of $x$ is accessed. 

\emph{Query complexity} measures the minimum number of queries that an algorithm has to make in order to decide whether an unknown input satisfies a property $\cP$ of interest. 

In this work, we will focus on the undirected bounded-degree graph model. In this model, the algorithm has query access to the adjacency list of an undirected bounded-degree graph $G$ with $n$ vertices and maximum degree $d$ (a constant). The query access to $G$ is represented as an oracle $\cO_G: V \times [d] \rightarrow V \cup \{\bot\}$. Then we can set $I=V\times [d]$ and $\Sigma=V\cup\{\bot\}$,  such that for any $v \in V$ and $i \in [d]$, we have the following:
    $$\cO_G(v,i)=
\begin{cases}
    w , & \text{if $w\in V$ is the $i$-th neighbor of $v$};\\
    \bot, & \text{if deg}(v) <i.
\end{cases}
$$    

\subsubsection{Quantum Query Model} \label{sec:quantum_queries}
In quantum query complexity, the input oracle is a unitary $O_x$ over Hilbert space $\mathcal{H}_{I}\otimes \mathcal{H}_{\Sigma}$ defined as $O_x\ket{i,y}=\ket{i,y \oplus x_i}$, for $y\in \Sigma$ and $i\in I$, where $\oplus$ denotes the bit-wise XOR operation up to some binary encoding of the elements of $\Sigma$.

In the article, we are going to consider the setting where we have adjacency list query access to a bounded degree graph $G$. Using the above formalism, the oracle unitary can be written as $O_G\ket{v,i,y}=\ket{v,i,y \oplus \cO_G(v,i)}$. Here $v, y\in V(G)\cup\{\bot\}$ and $i\in[d]$.

In quantum query complexity, we are only interested in the number of times an algorithm uses the unitary oracle, thus all the input-independent unitaries it performs between any two oracle calls can be written as a single unitary. This way, the state of a $T$-query quantum algorithm after an arbitrary $t\le T$ number of oracle calls can be written as $\ket{\psi_t^x}=U_tO_xU_{t-1}\dots U_1O_xU_0\ket{0}$. 
Here, we assume that the algorithm's state consists of three registers: the query register $Q$ holds the queried index, the answer register $Y$ holds the query outcome, and there is some additional workspace register $W$ the algorithm can use. This way, $\ket{\psi_t^x}$ is a superposition of basis states $\ket{i,y,w}_{QYW}$.

\begin{figure}[h]
	\centering
	\begin{quantikz}
		\ket{0}_Q &\gate[3]{U_0} \slice[style=gray]{$\ket{\psi_0^x}$} &\gate[2]{\cO_x}& \gate[3]{U_1} \slice[style=gray]{$\ket{\psi_1^x}$} &\gate[2]{\cO_x}& \ \ldots\ & \gate[3]{U_{T-1}} \slice[style=gray]{$\ket{\psi_{T-1}^x}$} & \gate[2]{\cO_x}& \gate[3]{U_T} \slice[style=gray]{$\ket{\psi_T^x}$}& \\[-0.75em] 
		\ket{0}_Y&&&&& \ \ldots\  &&&& \\[-0.75em] 
		\ket{0}_W&&&&& \ \ldots\  &&&& \meter{}
	\end{quantikz}
	\caption{The circuit of a quantum query algorithm.}
	\label{fig:circuit_query}
\end{figure}

At the end of the computation, we assume that the algorithm's output is written on a dedicated part $W_{out}$ of register $W$. In the final state $\ket{\psi_T^x}$, register $W$ is measured and the algorithm succeeds if the value $w_{out}$ seen at $W_{out}$ is a correct output for input $x$.

The above representation can be used to prove the polynomial method (see \Cref{lem:beals}) for proving quantum query complexity lower bounds.

\subsection{Property testing}

We now formally give the definitions related to property testing. Let us start with the notion of a graph property and what we mean by being far from a graph property.

\begin{definition}[Graph property]
A \emph{graph property} is a collection of graphs that are closed under permutation of the vertices. More formally, $\cP$ is a graph property if and only if for every graph $G=([n], E)$ and every permutation $\sigma$ of $[n]$, $G \in \cP$ if and only if $\sigma(G) \in \cP$, where $\sigma(G)=([n], \{\{\sigma(u), \sigma(v)\}: \{u,v\} \in E\})$. 
\end{definition}

\begin{definition}[$\eps$-far]
    An $N$-vertex graph $G$ with bounded degree $d$ is said to be \emph{$\eps$-far} from $\cP$ (for some $\eps\in(0,1)$) if one needs to modify (either add or remove) at least $\eps N d$ edges from $G$ such that the modified graph will have the property $\cP$.
\end{definition}

Now we formally define an $\eps$-tester.

\begin{definition}[$\eps$-tester]
Let $G=(V,E)$ be an unknown bounded degree graph with $n$ vertices with maximum degree $d$ and $\eps \in (0,1)$ be a parameter. Moreover, let $\cP$ be a graph property. Given query access to the adjacency list of $G$, an $\eps$-tester for $\cP$ is an algorithm $\cA$ that can distinguish between the two following cases with probability at least $2/3$:
\begin{itemize}
    \item[(i)] If $G \in \cP$, then $\cA$ outputs accept.

    \item[(ii)] If $G$ is $\eps$-far from $\cP$, then $\cA$ outputs reject.
\end{itemize}
The tester $\cA$ can either output accept or reject when the above two conditions do not hold. %
\end{definition}

\begin{remark}
Any $\eps'$-tester by definition is an $\eps$-tester for any $0< \eps' \leq \eps <1$. Similarly, whenever in this work we say any $\eps$-tester for some property $\mathcal{P}$ requires $q$ queries, we mean that there exists some constant parameter $0<\eps' <1$ such that any $\eps$-tester with $0< \eps \leq \eps'$ requires $q$ queries.    
\end{remark}

For expansion testing, throughout this work, we will consider the notion of vertex expansion.

\begin{definition}[Vertex expansion]\label{def:vertex-expansion}
Let $G=(V,E)$ be an $N$-vertex graph. For $U\subseteq V$, let
\[
 \partial_G(U)=\set{v\in V\setminus U:\text{$v$ has a neighbor in $U$}}.
\]
For $\alpha>0$, $G$ is said to be an $\alpha$-expander if
\[
 \abs{\partial_G(U)}\geq\alpha\abs U
\]
for every $U\subseteq V$ with $\abs U\leq N/2$.
\end{definition}

Throughout this work, unless mentioned explicitly, whenever we mention ``expansion'', it corresponds to vertex expansion. Some arguments below first
count edges crossing between a set $S$ and its complement $V\setminus S$. The set of edges like that are denoted as $e(S,V\setminus S)$. In a graph of maximum degree
$d$, such a bound on the number of edges implies the corresponding vertex-boundary
bound after a loss of at most a factor $d$.

\begin{remark}
Any $\alpha'$-expander by definition is an $\alpha$-expander for any $0< \alpha \leq \alpha' <1$. Similarly, whenever in this work we say that an algorithm that distinguishes $\alpha$-expanders from graphs that are $\eps$-far from being $\alpha'$-expanders needs to make $q$ queries, we mean that there exists some constant parameter $0<\alpha_0 <1$ such that for any $0< \alpha\leq \alpha' \leq \alpha_0$ and $\eps$ at most some function of $\alpha$, the testing problem requires $q$ queries.   
\end{remark}

\subsection{Signed graphs, balance, and frustration} \label{sec:signed_graphs}

A signed graph is a graph $H=(V,E)$ together with a bit $b_e\in\Ftwo$ on every edge. We will call $b_e$ the label or the sign of edge $e$. In this work, parallel edges are allowed 
and are treated as distinct constraints.

\begin{definition}[Balance and frustration]
A signed graph $(H,b)$ is balanced if there is a vertex labeling $x\in\Ftwo^V$ such that
\begin{equation}\label{eq:balance}
 x_u\oplus x_v=b_{uv}
\end{equation}
for every edge $uv\in E$. The frustration index of $H$ with edge sign vector $b$ is
\begin{equation}\label{eq:frustration}
 \fr(H,b)=\min_{x\in\Ftwo^V}
 \abs{\set{uv\in E:x_u\oplus x_v\neq b_{uv}}}.
\end{equation}
\end{definition}

\begin{remark}
    Note that in the literature it is also common to consider edge and vertex labels from $\{+1,-1\}$, and in this case balancedness requires $x_ux_v=b_{uv}$ for every edge $uv\in E$.
\end{remark}

A signed graph is balanced if and only if every cycle has an even number of edges with label $1$~\cite{Harary53}. The frustration index is the 
minimum number of edges whose deletion makes the signed graph balanced (see e.g., \cite{Zaslavsky18}).

For signed graphs, we will use an oracle that, on querying $(v,i)$, returns either $\bot$ (if $v$ has no $i$-th neighbor) or $(w,b)$ where $w$ is the $i$-th neighbor of $v$ and the label of edge $vw$ is $b$. In particular, in our hard instance $i\in\{1,2,3\}$ will be the id of the matching the edge belongs to, that we also call color. In our hard instances, we will consider graphs that are union of matching edges, each with a different color. In such a graph, parallel edges are possible, but only of different colors. We call such graphs \emph{colored multigraphs}.

\subsection{The quantum acceptance polynomial}
\label{sec:acc_poly}
For every possible non-null outcome $z=(w,b)$ to a query $a=(v,i)$, let us introduce the indicator $X_{a,z}=\mathbbm{1}_{O(a)=z}$.
The null-outcome indicator can be replaced by $X_{a,\perp}=1-\sum_{z\neq\perp}X_{a,z}$.
We reduce the resulting polynomial using $X_{a,z}^2=X_{a,z}$ and $X_{a,z}X_{a,z'}=0$ for any $z\neq z'$.
Thus a nonzero monomial contains at most one response indicator for each possible query.

\begin{lemma}[\cite{beals2001quantum}]\label{lem:beals}
The acceptance probability of a quantum algorithm making $q$ oracle queries is a real multilinear polynomial of degree at most $2q$ in the response indicators.
\end{lemma}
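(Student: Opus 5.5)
We follow the standard argument of \cite{beals2001quantum}, adapted to the response-indicator variables introduced above. The key claim, proved by induction on $t$, is that after $t$ oracle calls every amplitude of the state $\ket{\psi_t^x}$ in the computational basis $\ket{i,y,w}_{QYW}$ is a complex polynomial of degree at most $t$ in the indicators $X_{a,z}$. The acceptance probability is then a sum of squared moduli of amplitudes, so it is a real polynomial of degree at most $2q$. Reducing this polynomial with the relations listed above makes it multilinear without increasing the degree.

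For the base case $t=0$, the state $U_0\ket{0}$ does not depend on the input, so its amplitudes are constants, that is, polynomials of degree $0$. Applying an input-independent unitary $U_t$ only takes complex linear combinations of amplitudes, so it preserves the degree bound.

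The only step needing care is one oracle call. Write $\ket{\psi}=\sum_{a,y,w}\alpha_{a,y,w}\ket{a,y,w}$ and fix a query index $a$. Since $O\ket{a,y,w}=\ket{a,y\oplus O(a),w}$, the amplitude of $\ket{a,y',w}$ after the call is
\[
\sum_{z}X_{a,z}\,\alpha_{a,y'\oplus z,w},
\]
where the sum ranges over all possible outcomes $z$, including $\perp$. This identity holds because exactly one indicator $X_{a,z}$ equals $1$ on any valid input, namely the one with $z=O(a)$. We substitute $X_{a,\perp}=1-\sum_{z\neq\perp}X_{a,z}$, which is itself a polynomial of degree $1$. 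Each new amplitude is therefore a sum of products of a degree-$1$ polynomial with a degree-$\le t$ polynomial, so its degree is at most $t+1$, completing the induction.

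Finally, the probability of seeing the accepting value in $W_{out}$ is $\sum|\alpha_{a,y,w}|^2=\sum \alpha\bar\alpha$ over the relevant basis states. This expression is real-valued. Writing each amplitude as $\mathrm{Re}+i\,\mathrm{Im}$ with real polynomials of degree at most $q$ shows that it is a real polynomial of degree at most $2q$.

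On every valid input the indicators satisfy $X_{a,z}^2=X_{a,z}$ and $X_{a,z}X_{a,z'}=0$ for $z\neq z'$. Applying these reductions yields a multilinear polynomial with at most one indicator per query that agrees with the acceptance probability on all inputs, and the reductions never increase the degree. The main point to check is that the substitution for $X_{a,\perp}$ and these reductions are legitimate. They are, because they only use identities that hold on all actual inputs, and the polynomial is only ever evaluated on such inputs.
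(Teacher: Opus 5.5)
Your proof is correct and is the standard argument of Beals et al.: you show by induction that amplitudes after $t$ queries are polynomials of degree $\le t$, write the oracle step as $\sum_z X_{a,z}\alpha_{a,y'\oplus z,w}$, then pass to squared moduli and reduce to multilinear form; the paper does not reprove this lemma but simply cites it. Your handling of the substitution $X_{a,\perp}=1-\sum_{z\neq\perp}X_{a,z}$ and of the reductions $X_{a,z}^2=X_{a,z}$, $X_{a,z}X_{a,z'}=0$ matches the canonical form the paper sets up just before the lemma.
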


Every surviving monomial specifies a partial response table. In our setting of the bounded-degree graph model, each consistent table determines a colored multigraph pattern $F$ (in the case of signed graphs, together with prescribed signs on its distinct edges). Inconsistent monomials have expectation zero and may be discarded. For the number of edges, vertices and components of $F$, respectively, we write
\[
 e(F)=\abs{E(F)},\qquad
 v(F)=\abs{V(F)},\qquad
 c(F)=\comp(F),
\]
where only vertices incident to a specified edge are counted. By the well-known Euler formula, the cycle rank (i.e., the minimum number of edges to be deleted for removing all cycles) is
\begin{equation}\label{eq:cycle-rank}
 \beta(F)=e(F)-v(F)+c(F).
\end{equation}
It is zero if and only if $F$ is a forest.

We will also use a standard coefficient estimate. A proof is included in \Cref{app:proof_2.8} for completeness.

\begin{restatable}%
{lemma}{lemcoefficientbound}\label{lem:coefficient-bound}
Consider the acceptance polynomial in the canonical form described above. A coefficient of a degree-$k$ monomial has absolute value at most $2^k$. Consequently, if the number of possible query outcomes is $O(N)$ and the algorithm makes $q$ queries, then the sum of the absolute values of all coefficients is at most $\exp(O(q\log N))$.
\end{restatable}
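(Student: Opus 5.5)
The plan is to derive the canonical form directly from the structure of the query algorithm, by tracking the amplitudes of $\ket{\psi_t^x}$ as polynomials in the response indicators. First, I will show by induction on $t$ that every amplitude of $\ket{\psi_t^x}$ is a polynomial of degree at most $t$. The coefficients of this polynomial are bounded by the entries of the input-independent unitaries $U_0,\dots,U_t$.

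For the induction step, note that the oracle acts as
\[
 O_x\ket{a,y,w}=\sum_{z} X_{a,z}\ket{a,y\oplus z,w}.
\]
Here the sum ranges over all outcomes including $\perp$. The amplitude of $\ket{a,y',w}$ after the query is therefore $\sum_z X_{a,z}\,\alpha_{a,y'\oplus z,w}$. After applying $U_{t+1}$ and expanding, the acceptance probability becomes $\sum_{w_{out}=1}|\alpha|^2$, a polynomial of degree at most $2q$.

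To bound coefficients, I will write each monomial of the acceptance probability as a sum over computational paths. A path is a pair of sequences of basis states, one for the bra and one for the ket. A degree-$k$ monomial then corresponds to a collection of $k$ indicator factors, each picked at one of the $2q$ query positions in the bra or the ket. The real work is to avoid losing dimension-dependent factors such as $\sqrt{\dim}$.

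To do this, I will group the terms by which query steps contribute which indicator. Fix a degree-$k$ monomial $X_{a_1,z_1}\cdots X_{a_k,z_k}$. Each factor must be assigned to one of the two copies, the bra or the ket, which gives at most $2^k$ assignments.

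Within a fixed assignment, I claim the coefficient is an inner product $\bra{\phi}\Pi\ket{\phi'}$. Here $\phi$ and $\phi'$ are obtained from $U_0\ket0$ by interleaving the $U_t$'s with operators of the form $\sum_{z\in S}P_{a,z}$. Each $P_{a,z}$ is a partial permutation: it projects onto query register value $a$ and shifts the answer register by $z$. Under the fixed assignment, the inserted operator is either a single $P_{a,z}$ or the identity-like null-outcome combination, which are all contractions. Hence the coefficient has absolute value at most $1$ per assignment, giving the $2^k$ bound.

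The subtle step, which I expect to be the main obstacle, is the substitution $X_{a,\perp}=1-\sum_{z\ne\perp}X_{a,z}$ followed by the multilinear reduction. Both can create cancellations and sums over many terms that break the contraction structure. I will handle this by performing the expansion at the operator level: write the oracle as $I'+\sum_{z\ne\perp}X_{a,z}(P_{a,z}-P_{a,\perp})$, with a suitable input-independent part $I'$. The differences $P_{a,z}-P_{a,\perp}$ are not contractions but have norm at most $2$.

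A cleaner route is to note that for a fixed monomial, the reduction rules merge repeated occurrences of the same query $a$. If I cannot make the clean per-position argument work, I will accept a constant per factor: each of the $k$ factors contributes a factor at most $2$ in norm, which still gives $2^{O(k)}$. I would then adjust the constant, or argue via evaluating the polynomial on $0/1$ inputs together with Möbius inversion over the at most $2^k$ subsets of the monomial's support. The latter is actually the simplest route. A multilinear coefficient equals $\sum_{S'\subseteq S}(-1)^{|S|-|S'|}p(\mathbf 1_{S'})$, and each evaluation is an acceptance probability of some partial oracle lying in $[0,1]$. This gives $2^k$ directly, provided each $\mathbf 1_{S'}$ extends consistently, which holds because monomials contain at most one outcome per query.

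For the consequence, I count monomials. There are at most $dN$ possible queries, each with $O(N)$ outcomes, so there are $O(N^2)$ indicators. The number of monomials of degree at most $2q$ is then at most $\sum_{k\le 2q}\binom{O(N^2)}{k}\le N^{O(q)}$. Multiplying by the coefficient bound $2^{2q}$ gives $\exp(O(q\log N))$.
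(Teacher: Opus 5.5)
Your final argument (M\"obius inversion plus a monomial count) is correct. The counting half matches the paper's proof, which bounds the number of canonical degree-$k$ monomials by $(CN^2)^k$ and sums $2^k(CN^2)^k$ over $k\le 2q$.

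The difference is the per-coefficient bound. The paper does not prove $|c_S|\le 2^k$ but imports it from \cite[Lemma 15]{ambainis2011quantum}. Your identity $c_S=\sum_{S'\subseteq S}(-1)^{|S|-|S'|}p(\mathbf 1_{S'})$ makes the lemma self-contained in a few lines, and even gives $2^{k-1}$ for $k\geq 1$.

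Your route relies on one point that you should state explicitly. Read with $X_{a,\perp}=1$ for every query not touched by $S'$, the vector $\mathbf 1_{S'}$ is the indicator vector of a genuine string $x\in\Sigma^I$. You also need that the canonical polynomial equals the acceptance probability at every $x\in\Sigma^I$, not only at strings encoding graphs. This matters because $\mathbf 1_{S'}$ typically encodes an inconsistent adjacency list, say $v$ listed as a neighbor of $u$ but not conversely. It is fine because $O_x$ is unitary for every $x$, and the reductions $X_{a,z}^2=X_{a,z}$, $X_{a,z}X_{a,z'}=0$ and the $\perp$-substitution hold pointwise on all of $\Sigma^I$.

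In the write-up, drop the path/contraction and operator-expansion attempts. After the reduction $X^2=X$, one indicator of the monomial can come from any nonempty set of the $2q$ query positions, so the number of contributing expansion terms grows with $q$. A norm count then only yields something like $(2k+1)^{2q}$ rather than $2^k$. That is enough for the $\exp(O(q\log N))$ consequence, but not for the first claim.
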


\subsection{The Ambainis--Childs--Liu distribution and lemmas}

\color{red}

\color{black}
As we will use the hard distribution and some results from \cite{ambainis2011quantum}, we describe these here.
\subsubsection{The hard distribution from \cite{ambainis2011quantum}} \label{sec:hard_instance_acl}
We first describe the hard distribution $P_{M,\ell}^{(c)}$ of graphs $G$ on $N$ vertices  from \cite{ambainis2011quantum}, that they used to prove lower bound for expansion testing. Let $M\geq N$ and suppose that $2\ell$ divides $M$. We first describe the \emph{ambient} graph $A$ on $M$ vertices, and our graph $G$ will be an induced subgraph of $A$. We will call the vertices of $G$ \emph{visible}.
\begin{enumerate}
\item Partition a set of $M$ vertices into $\ell$ blocks, each containing $M/\ell$ vertices.

\item Sample $c$ uniformly random perfect matchings independently in every block, one for each color. This defines the ambient graph $A$.

\item Each one of the $N$ vertices of $G$ is sampled in the following way. 
Choose a block among the $\ell$ blocks uniformly and independently at random. Within the chosen block, sample the visible vertex uniformly at random from those ambient vertices that have not been sampled yet. After all the $N$ vertices are sampled, $G$ is defined as the subgraph of $A$ induced by the visible vertices.%
\end{enumerate}

If the number of colors $c$ is clear from the context, we will just write $P_{M,\ell}$ rather than $P_{M,\ell}^{(c)}$.
If some block is assigned more than $M/\ell$ visible vertices, the process fails, we call this an \emph{overflow} event. However, this event is unlikely, we prove this in \cref{lem:overflow}. We call distributions conditioned on the overflow event not happening as \emph{feasible distributions}.
The resulting visible graph has degree at most $c$, with at most one incident edge of one color. %

\subsubsection{Two important lemmas}
Writing the quantum acceptance polynomial from \Cref{sec:acc_poly}, one can calculate the probability of each monomial (graph pattern) over the hard distribution $P_{M,\ell}^{(c)}$. This way, \cite{ambainis2011quantum} proved the following lemma about the expected acceptance probability of an algorithm.

Before stating the lemma, let us define the common denominator
\begin{equation}\label{eq:denominator}
 D_q(M,\ell)
 =\prod_{a=1}^{2q}
 \bigl(M-(2a-1)\ell\bigr)^{\lfloor 2q/a\rfloor}.
\end{equation}
Its degree is
\begin{equation}\label{eq:den-degree}
 d_q:=\deg D_q
 =\sum_{a=1}^{2q}\left\lfloor\frac {2q}{a}\right\rfloor
 =\Theta(q\log q).
\end{equation}

We will consider the expected acceptance probability with denominator $D_q$. This does not change the degree of the denominator or the numerator much, they both stay $O(q\log q)$.

\begin{lemma}[{\cite[Lemma 3]{ambainis2011quantum}}]\label{lem:acl-rational}
The expected acceptance probability of a $q$-query quantum algorithm under $P_{M,\ell}^{(c)}$ is a rational function whose denominator divides $D_q(M,\ell)$ and whose numerator has degree $O(q\log q)$. The same statement holds for any linear combination of expected acceptance probabilities.
\end{lemma}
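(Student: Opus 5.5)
By \Cref{lem:beals}, the acceptance probability is a multilinear polynomial of degree at most $2q$ in the response indicators. By linearity of expectation, the expected acceptance probability under $P_{M,\ell}^{(c)}$ is therefore a linear combination of terms $\Pr[\text{pattern } F \text{ appears}]$. The coefficients do not depend on $(M,\ell)$. Each $F$ is a consistent colored (multi)graph pattern with at most $2q$ specified edges. So it suffices to show two things for every such pattern: its appearance probability is a rational function of $(M,\ell)$ whose denominator divides $D_q$, and its numerator has degree $O(q\log q)$. Linear combinations then follow immediately, since all terms share the denominator $D_q$. This also covers the ``same statement'' clause.

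To compute the probability of a fixed pattern $F$, I would first condition on how the $v(F)$ visible vertices of $F$ are assigned to blocks. Edges can only exist inside a block, so all vertices of a connected component must land in the same block. The probability that the components land in a prescribed assignment of blocks is a polynomial in $1/\ell$ of degree at most $v(F)$. Summing over assignments in which components share blocks gives factors like $\ell(\ell-1)\cdots$ divided by powers of $\ell$. I will handle the resulting $\ell$-powers in the next step.

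Given the block assignment, I would fix one block containing $k$ pattern vertices and $m_j$ specified edges of color $j$. Here I use $M/\ell$ as the block size. The probability that $m_j$ prescribed disjoint pairs are all matched in a uniformly random perfect matching on $M/\ell$ points is
\[
\prod_{a=1}^{m_j}\frac{1}{M/\ell-(2a-1)}=\prod_{a=1}^{m_j}\frac{\ell}{M-(2a-1)\ell}.
\]
The colors are independent, so we multiply these over colors $j$ and over blocks. For a fixed $a$, the factor $M-(2a-1)\ell$ appears once for each color class in each block that has at least $a$ edges. There are at most $2q$ specified edges in total, so it appears at most $\lfloor 2q/a\rfloor$ times. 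This is exactly the exponent in $D_q$, which gives the denominator claim. The prefactors $\ell^{e(F)}$ and the polynomial in $1/\ell$ from the block assignment have to combine into a polynomial with no negative powers of $\ell$, which I will check next.

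The main obstacle is the ``$1/\ell$'' bookkeeping and the sampling-without-replacement subtlety. The visible vertices are sampled without replacement within blocks, and conditioning on the block sizes is delicate. I would handle it by symmetry. Conditioned on the block assignment, the labels of the visible vertices in a block form a uniformly random injection. Hence the matching probabilities depend only on the block size $M/\ell$, as computed above.

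For the powers of $\ell$, each component contributes a factor $\ell$ from choosing its block and $\ell^{-(\text{its vertex count})}$ from the placement probabilities. Its edges contribute $\ell^{e}$. Since a component on $v$ vertices has at least $v-1$ edges, the net exponent $e-v+1$ is nonnegative. This gives a genuine polynomial numerator, and in fact the sharper $\ell^{\beta(F)}$ factor mentioned in the overview. The total degree of the numerator is bounded by $d_q+O(q)=O(q\log q)$, using \eqref{eq:den-degree} once everything is put over $D_q$. Finally, the overflow event has to be accounted for: I would treat the formal distribution as the object of the lemma, noting that \cref{lem:overflow} controls the difference from the feasible one.
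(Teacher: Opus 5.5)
Your outline matches the route of \cite{ambainis2011quantum} that the paper relies on: Beals, linearity over monomials, per-pattern probabilities, and the count showing that $M-(2a-1)\ell$ occurs at most $\lfloor 2q/a\rfloor$ times. Those parts are fine. The gap is the step you yourself flag, namely showing that no negative power of $\ell$ survives. This is not cosmetic: $\ell\nmid D_q$, so it is exactly what ``denominator divides $D_q$'' requires.

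Your accounting, ``each component contributes a factor $\ell$ from choosing its block'', assumes that components choose blocks independently and that edge probabilities factor over components. They do not factor. If two components land in the same block, a color with $d$ and $d'$ prescribed edges contributes $R_{d+d'}$, not $R_dR_{d'}$. So you must sum over partitions $L$ of the components according to shared blocks. Write $(\ell)_{|L|}=\ell(\ell-1)\cdots(\ell-|L|+1)$ for the falling factorial. The term for $L$ is then $\ell^{-v}(\ell)_{|L|}f_L$, where $f_L$ contains exactly $\ell^{e}$. Hence the term carries only $\ell^{e-v+1}$, which equals $\ell^{1-c}$ for a forest with $c$ components.

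Concretely, take two disjoint color-$1$ edges, so $v=4$, $e=2$, $c=2$. The same-block and different-block terms are
\[
\frac{1}{\ell(M-\ell)(M-3\ell)}\qquad\text{and}\qquad\frac{\ell-1}{\ell(M-\ell)^2}.
\]
Each has a pole at $\ell=0$. Only their sum $\frac{M-3\ell+2}{(M-\ell)^2(M-3\ell)}$ is pole-free, because of the identity $(M-\ell)+(\ell-1)(M-3\ell)=\ell(M-3\ell+2)$. The per-component inequality $e_i\ge v_i-1$ cannot see a cancellation of this kind between different partitions.

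The missing ingredient is the one spelled out in the paper's proof of \Cref{lem:topology-power}. First apply M\"obius inversion on the partition lattice to write
\[
W_F=\sum_L\ell^{-v+|L|}f'_L,\qquad f'_L=\sum_{L'\le L}c_{L',L}f_{L'}.
\]
Then use the divisibility from \cite[Proposition~6]{ambainis2011quantum}: after removing the factor $\ell^{e}$, the numerator of $f'_L$ is divisible by $\ell^{c-|L|}$. This gives a factor $\ell^{e-v+c}=\ell^{\beta(F)}$ with nonnegative exponent in every summand. Only after this step do your denominator count and the degree bound $d_q+O(q)$ go through. Your claim of the sharper $\ell^{\beta(F)}$ factor rests on the same cancellation, which you have not justified.
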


The proof in~\cite{ambainis2011quantum} expands the degree-$2q$ acceptance polynomial into monomials, evaluates each graph pattern probability, and puts the probabilities over a common denominator. In the case of balancedness testing in \Cref{sec:bip_lb}, the extra sign bit in the query outcome changes neither the graph pattern probability nor their degree.

We also use the bivariate polynomial lemma from \cite{ambainis2011quantum}. We state a translated and rescaled version; it follows from the same proof, which uses Paturi's theorem~\cite{paturi1992degree}.

\begin{lemma}[{\cite[Lemma 4]{ambainis2011quantum}}]\label{lem:acl-degree}
Let $C$ and $R$ be positive integers with $R\geq2$. Suppose that $Q(X,\ell)$ is a real polynomial such that
$\abs{Q(X,\ell)} \leq 1$
whenever $1\leq\ell\leq R, \abs{X-C}\leq\frac{R^{3/2}}2, \ell\mid X$.
If $\abs{Q(C,1)-Q(C,2)}\geq\gamma
$ for a fixed constant $\gamma>0$, then
$\deg Q=\Omega(\sqrt R)$
where the implicit constant depends only on $\gamma$.
\end{lemma}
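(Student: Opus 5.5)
The plan is to follow the Aaronson--Shi collision argument, as in \cite{ambainis2011quantum}: reduce to a univariate polynomial in $\ell$ and apply Paturi's theorem~\cite{paturi1992degree}. Let $d=\deg Q$, and suppose for contradiction that $d\le c_0\sqrt R$ for a small constant $c_0=c_0(\gamma)$. The extra range $R^{3/2}$ in the $X$-direction is used to control $Q$ at the point $X=C$ itself, which need not be divisible by $\ell$.

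\emph{Step 1 (controlling $Q(C,\ell)$ for every integer $\ell$).} Fix an integer $1\le\ell\le R$ and let $X_\ell$ be a multiple of $\ell$ with $|X_\ell-C|<\ell$. Define the univariate polynomial $p_\ell(s)=Q(X_\ell+\ell s,\ell)$, which has degree at most $d$. By hypothesis, $|p_\ell(s)|\le1$ at every integer $s$ with $|s|\le n_\ell$, where $n_\ell\approx R^{3/2}/(2\ell)-1\ge \sqrt R/3$. I would invoke the standard bound for polynomials bounded on equally spaced points (Ehlich--Zeller/Rakhmanov type): if a degree-$d$ polynomial is bounded by $1$ at the integers of $[-n,n]$ and $d\le c\,n$, then it is bounded by an absolute constant $K$ on the middle interval $[-n/2,n/2]$. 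Since $d\le c_0\sqrt R\le c\,n_\ell$, this gives $|Q(C,\ell)|=|p_\ell((C-X_\ell)/\ell)|\le K$, because $|(C-X_\ell)/\ell|<1$.

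\emph{Step 2 (Paturi in the $\ell$ variable).} Define $h(\ell)=Q(C,\ell)/K$. Then $h$ is a real univariate polynomial of degree at most $d$, satisfies $|h(\ell)|\le1$ at all integers $1\le\ell\le R$, and has $|h(1)-h(2)|\ge\gamma/K$. Paturi's theorem (equivalently, the Markov-type argument used for the collision problem, with the jump located near the endpoint $\ell=1$) gives $\deg h=\Omega(\sqrt{R})$, with the constant depending only on $\gamma/K$. Choosing $c_0$ smaller than this constant contradicts $d\le c_0\sqrt R$. Hence $\deg Q=\Omega(\sqrt R)$.

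The main obstacle is Step 1. A global Coppersmith--Rivlin bound of the form $\exp(O(d^2/n))$ only works when $d\lesssim\sqrt{n_\ell}\approx R^{1/4}$, which loses a square root. So it is essential to use the sharper fact that evaluation stays bounded near the \emph{center} of the grid for $d$ up to a constant times the number of grid points. This is exactly why the window in $X$ has width $R^{3/2}$: even for $\ell=R$ it contains $\Omega(\sqrt R)$ multiples of $\ell$ on each side of $C$. I would double-check the precise form of this central bound and its constants. If it proves delicate, I would instead run the Aaronson--Shi case split: either some $p_\ell$ jumps by $\Omega(1)$ between adjacent grid points near $C$, and Paturi applied to $p_\ell$ gives degree $\Omega(n_\ell)=\Omega(\sqrt R)$; or $Q(\cdot,\ell)$ is nearly constant near $C$ for every $\ell$, and then $Q(C,\ell)$ is bounded and Step 2 applies.
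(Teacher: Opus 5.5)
Your proposal is correct and follows essentially the argument the paper defers to (the proof of \cite[Lemma 4]{ambainis2011quantum}). It applies Paturi's theorem along the line $X=C$ in the $\ell$ variable, and controls $Q(C,\ell)$ through the arithmetic progressions $X\in\ell\mathbb Z$ around $C$, each of which has $\Omega(R^{3/2}/\ell)=\Omega(\sqrt R)$ points on either side. Your Step 1 is just the contrapositive of the paper's second case (if $|Q(C,\ell_0)|$ is large, Paturi along that progression, with the large value near its center, forces degree $\Omega(\sqrt R)$); the central bound you invoke follows from that form of Paturi's theorem, so your fallback case split is exactly the paper's route.
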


\begin{remark}
The statement in~\cite{ambainis2011quantum} sets $C=aN$. Its proof uses only the arithmetic grid around the center and therefore gives the translated statement above. The proof considers the polynomial first along the line $X=C$ and, if its values become large there, along an arithmetic progression in the $X$ direction. Paturi's theorem gives degree $\Omega(\sqrt R)$ in either case.
\end{remark}

\section{Quantum query lower bound for bipartiteness testing} \label{sec:bip_lb}

\begin{restatable}[Formal version of \cref{thm:informal}]{theorem}{theobiplb}\label{theo:biplb}
Let $G$ be a bounded-degree graph on $N$ vertices and $\eps \in (0,1)$ be a constant. In order to distinguish with probability at least $2/3$ if $G$ is bipartite, or $\eps$-far from being bipartite, $\Omega(N^{1/3}/\log N)$ quantum queries are necessary. 
\end{restatable}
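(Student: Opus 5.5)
The proof goes through an intermediate problem, testing balancedness of signed colored multigraphs built from the Ambainis--Childs--Liu distribution, and then reduces that problem to bipartiteness.

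\textbf{Step 1: the hard distributions.} Take $c=3$ colors and the distribution $P_{M,\ell}^{(3)}$ with $N$ visible vertices. In the $\YES$ distribution, each ambient vertex $x$ gets an independent uniform bit $\xi_x$, and every edge $xy$ is labeled $\xi_x\oplus\xi_y$; such graphs are balanced. In the $\NO$ distribution, edge labels are i.i.d.\ uniform. I will show that for $M_0=4\lceil 11N/40\rceil$ and $\ell\in\{1,2\}$, the visible graph has at least $1.3N$ edges with high probability. Given that many edges, a first-moment count over the $2^N$ vertex labelings, using binary entropy, shows the random signing has frustration at least $\eps' N$ with high probability. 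I will also bound the overflow probability on the whole grid $|M-M_0|\le R^{3/2}$, $\ell\le R$, with $R=\Theta(N^{2/3})$, so that the formal and feasible acceptance probabilities differ by $o(1)$.

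\textbf{Step 2: the acceptance gap.} Let $\Delta=p_{\YES}-p_{\NO}$. Expand the acceptance polynomial of \Cref{lem:beals} into signed pattern monomials. A pattern that is a forest has its labels i.i.d.\ uniform under both distributions, so its monomials cancel. For any other pattern $F$, I will redo the computation behind \Cref{lem:acl-rational}. The probability that $F$ appears factors through the choice of blocks for the $v(F)$ vertices, which must be consistent across each of the $c(F)$ components, contributing $\ell^{-(v(F)-c(F))}$ up to lower-order terms. It also factors through the matching-edge probabilities, each roughly $\ell/M$ per edge. Tracking these exactly over the denominator $D_q$ should give a numerator divisible by $\ell^{\beta(F)}$. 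Since $\beta(F)\ge1$ for every surviving pattern, I get $\Delta=\ell H(M,\ell)/D_q(M,\ell)$ with $\deg H=O(q\log q)$. I expect this exact accounting of $\ell$-factors to be the main obstacle, since it needs a precise version of the ACL pattern-probability formula, not merely the divisibility statement.

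\textbf{Step 3: removing the denominator.} Write $D_q=M_0^{d_q}\prod_a(1-\delta_a)^{m_a}$, where each $\delta_a=(M_0-M+(2a-1)\ell)/M_0$ is $O(N^{-1/3}\cdot q)$-small on the grid. Expand each $(1-\delta_a)^{-1}$ as a geometric series truncated at degree $O(\log q)$. This gives a polynomial $A_q$ of degree $O(q\log q)$ satisfying $|D_qA_q\cdot(\text{scaling})-1|\le 0.01$ on the grid, assuming $q\le N^{1/3}/\log N$ (otherwise we are done). Set $P=HA_q/2$. On the grid, $|\ell P|\approx|\Delta|/2\le 1$, which gives $|P|\le1$. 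At the other end, $P(M_0,1)\approx\Delta(M_0,1)/2$ and $P(M_0,2)\approx\Delta(M_0,2)/4$. By Step 1, $\Delta(M_0,1)\ge 1/3-o(1)$ is a tester's gap, while $\ell=2$ produces a disconnected graph on which the tester still distinguishes the two label distributions, giving a gap of at least $1/3$. Hence $P(M_0,1)$ and $P(M_0,2)$ differ by a constant. If the two values happen to nearly coincide, I will replace $P$ by $(\ell-1)$-weighted variants, or work directly with $\Delta=\ell H/D_q$ to extract the constant gap.

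\textbf{Step 4: conclusion.} \Cref{lem:acl-degree} with $R=\Theta(N^{2/3})$ gives $\deg P=\Omega(N^{1/3})$, and since $\deg P=O(q\log q)$, we get $q=\Omega(N^{1/3}/\log N)$ for balancedness testing. Finally, reduce to bipartiteness: replace each $0$-edge by a path of length $2$ and each $1$-edge by a path of length $3$, with the auxiliary vertices indexed by (vertex, color). An odd cycle in the new graph corresponds exactly to a cycle with an odd number of $1$-labels. The number of vertices grows by a constant factor and degrees stay bounded. A query to the new graph is simulated by $O(1)$ queries to the signed oracle. Distances are preserved up to constants, because bipartiteness distance equals frustration on subdivided edges, which in turn equals the number of edge deletions.
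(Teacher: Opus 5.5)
Your architecture is the paper's: signed balancedness as the intermediate problem, cancellation of forest monomials, one factor of $\ell$ from the cycle rank, a polynomial reciprocal of $D_q$, the bivariate degree lemma, and the path-subdivision reduction. But Step~3 fails as written, and it is exactly where the claimed $\log N$ factor is won or lost.

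First, $\delta_a$ is not $O(qN^{-1/3})$ on the grid. The degree lemma needs boundedness for $|M-M_0|$ up to $R^{3/2}=\Theta(N)$, so $(M_0-M)/M_0$ is of constant size; only the $(2a-1)\ell/M_0$ part is $O(qN^{-1/3})$. The paper takes $R=\lfloor N^{2/3}/100\rfloor$ precisely so that this constant is at most $1/1100$, which also keeps $M\ge 1.09N$ for the overflow bound.

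Second, with $|\delta_a|$ a constant, your truncation does not give degree $O(q\log q)$. Truncate each geometric series $(1-\delta_a)^{-1}$ at degree $k$ and multiply over all $d_q=\sum_a\lfloor 2q/a\rfloor=\Theta(q\log q)$ factors, counted with multiplicity. The relative error is about $d_q|\delta|^{k+1}$, which forces $k=\Theta(\log q)$. The total degree is then $k\,d_q=\Theta(q\log^2 q)$, which yields only $q=\Omega(N^{1/3}/\log^2 N)$. Truncating each power $(1-\delta_a)^{-m_a}$ at degree $O(\log q)$ instead does not approximate at all, since $m_1=2q$ and $m_1|\delta_1|$ can be of order $q$.

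The missing idea is to truncate by \emph{total} degree. Write $D_q=M_0^{d_q}\prod_{i=1}^{d_q}(1-z_i)$ with $|z_i|\le 1/4$ on the grid. Expand $\prod_i(1-z_i)^{-1}=\sum_{j\ge0}S_j(z)$, where $S_j$ is the complete homogeneous polynomial of degree $j$ in the $z_i$, and keep only $j\le J=4d_q+10$. Since $|S_j|\le\binom{d_q+j-1}{j}4^{-j}\le 2^{d_q-j-1}$, you get $|D_qA_q-1|\le(5/4)^{d_q}2^{d_q-J-1}\le 10^{-2}$ with $\deg A_q\le J=O(q\log q)$. The degree is linear in $d_q$, not $d_q$ times a logarithm. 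Per-factor truncation can also be rescued, but only if the degree used for $(1-\delta_a)^{-m_a}$ is about $2m_a+O(\log q)$ rather than $O(\log q)$.

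A second, smaller gap is the claimed constant separation of $P(M_0,1)$ and $P(M_0,2)$. With tester gap only $1/3$, the values $P(M_0,1)\approx\Delta(M_0,1)/2$ and $P(M_0,2)\approx\Delta(M_0,2)/4$ can coincide, e.g.\ when $\Delta(M_0,1)=1/3$ and $\Delta(M_0,2)=2/3$. So ``hence they differ by a constant'' does not follow. Working directly with $\Delta$ cannot help either: a good tester may well have $\Delta(M_0,1)=\Delta(M_0,2)$, and it is the division by $\ell$ that creates the gap.

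The paper first amplifies the error to $1/20$. Then $\Delta(M_0,1)\ge 0.9-o(1)$ gives $P(M_0,1)>0.44$, while $|\Delta(M_0,2)|\le 1$ alone gives $|P(M_0,2)|<0.26$; no correctness at $\ell=2$ is used. Your $(\ell-1)$-weighting can also be made precise. Take $Q=(\ell-1)P$: it vanishes at $\ell=1$, is bounded below by a constant (roughly $\Delta(M_0,2)/4$) at $\ell=2$ using your farness at $\ell=2$, and is bounded by $1$ on the grid.

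Finally, in Step~2 the ``lower-order terms'' you mention are exactly what could destroy divisibility by $\ell$. The input the paper uses is the partition identity from the proof of ACL's Proposition~6. After removing the explicit factor $\ell^{e}$, the numerator of each inclusion--exclusion term is divisible by $\ell^{c-|L|}$, so the existing divisibility statement does suffice.
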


In order to prove the above theorem, we first consider balancedness testing in signed bounded degree graphs, and prove that balancedness testing is hard in quantum setting. Then we will reduce it to bipartiteness testing. Our result for balancedness testing is as follows.

\begin{restatable}%
{theorem}{balancedtest}\label{theo:balanced}
Let $G$ be a bounded-degree signed graph on $N$ vertices and $\eps \in (0,1)$ be a constant. In order to distinguish with probability at least $2/3$ if $G$ is balanced or $\eps$-far from being balanced, $\Omega(N^{1/3}/\log N)$ quantum queries are necessary.    
\end{restatable}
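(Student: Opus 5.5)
We prove Theorem~\ref{theo:balanced} with the polynomial method, following the technical overview. The hard instances are signed versions of $P^{(3)}_{M,\ell}$. We first sample the colored graph $G$ from $P^{(3)}_{M,\ell}$, with $N$ visible vertices drawn from an ambient set of size $M$ around $M_0=4\lceil 11N/40\rceil$. In the $\YES$ version each ambient vertex $x$ receives an independent uniform bit $\xi_x$, and each edge gets sign $b_{xy}=\xi_x\oplus\xi_y$. In the $\NO$ version all edge signs are independent uniform bits. $\YES$ instances are balanced by construction.

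\textbf{Step 1: the anchor instances.} We need three facts.
\begin{itemize}
\item A signed graph from the $\NO$ version of $P_{M_0,1}$ is, with high probability, $\eps_0 N$-frustrated for some constant $\eps_0$. Since $M_0\approx 1.1N$, every visible vertex loses each matched neighbour with probability about $0.1$, so $G$ has at least $1.3N$ edges with high probability. For a fixed vertex labeling $x$, the number of violated edges is $\Bin(e(G),1/2)$. A Chernoff bound combined with a union bound over the $2^N$ labelings shows that the frustration is at least $(0.65-o(1)-\delta)N$ whenever $h_2$-type entropy terms are beaten; we choose the constants so that $1.3N\cdot D(\cdot\|1/2)>N\ln 2$. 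This is presumably \cref{lem:random-sign-farness}. Frustration equals the minimum number of edges to delete, which gives $\eps$-farness.
\item The overflow event has small probability (\cref{lem:overflow}).
\item The tester's acceptance probability on the formal distribution differs by $o(1)$ from that on the feasible distribution.
\end{itemize}
Consequently, $p_{\YES}(M_0,1)-p_{\NO}(M_0,1)\ge 1/3-o(1)$. At $\ell=2$ the numerator structure from Step~2 does the rest.

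\textbf{Step 2: the gap is divisible by $\ell$.} Expand the acceptance polynomial (\cref{lem:beals}) into monomials, each of which is a signed colored pattern $F$. The probability that $F$ appears factors into two parts: the probability of the unsigned pattern under $P_{M,\ell}$, and the probability of the signs. The sign probability is $2^{-e(F)}$ in the $\NO$ case. In the $\YES$ case it is $2^{-(v(F)-c(F))}$ times the indicator that the signs are consistent, which reduces to $2^{-e(F)}$ exactly when $F$ is a forest. So forest monomials cancel in $\Delta=p_{\YES}-p_{\NO}$. We then redo the computation of \cite[Proposition 6]{ambainis2011quantum} to show that the probability of an unsigned pattern is $\ell^{\beta(F)}$ times a rational function with denominator dividing $D_q$. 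The heuristic is that each component chooses a block with weight $\ell^{-1}$ per vertex, while matching edges contribute about $\ell/M$ each. This gives $\Delta=\ell H(M,\ell)/D_q(M,\ell)$ with $\deg H=O(q\log q)$, using \cref{lem:acl-rational}. The coefficient sizes are controlled by \cref{lem:coefficient-bound}. At $\ell=2$ the $\NO$ instances are still far from balanced while the $\YES$ ones are balanced, but we do not actually need this; we only need the normalization. Define $P=HA_q/2$, where $A_q$ approximates $1/D_q$. On the grid, $\ell P\approx \Delta$ and $|\Delta|\le1$, so $|P|\le (1+0.01)/(2\ell)\le 1$. At $\ell=1$ we have $P\approx\Delta/2\ge 1/6-o(1)$. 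At $\ell=2$ we have $P\approx \Delta/4$; we use the fact that $\Delta(M_0,2)$ is also a gap of size at least $1/3$, or else we argue that $2P(M_0,2)-P(M_0,1)$ is small in a way that forces a constant difference. The simplest clean version uses that $\YES$ and $\NO$ remain balanced/far at $\ell=2$, so $P(M_0,1)\ge 0.16$ and $P(M_0,2)\le 0.13$, giving a constant $\gamma$.

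\textbf{Step 3: the reciprocal approximation and degree bound (main obstacle).} Set $R=\Theta(N^{2/3})$, so that $R^{3/2}=\Theta(N)$ is a small constant times $N$. Then on the grid $|M-M_0|\le R^{3/2}/2$, $\ell\le R$, $2\ell\mid M$, each factor satisfies $M-(2a-1)\ell=M_0(1-t_a)$ with $|t_a|\le O(q R/N)$, which is small for $q\ll N^{1/3}$. Write $1/D_q=M_0^{-d_q}\prod_a(1-t_a)^{-\lfloor 2q/a\rfloor}$ and truncate each geometric series at degree $O(\log(\cdot))$; this gives $A_q$ with $|D_qA_q-1|\le0.01$ and $\deg A_q=O(q\log q)$. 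Controlling this truncation uniformly, together with the magnitude of $H$ off the feasible region, is the delicate part. Because $\ell P$ differs from the expected acceptance gap on all grid points only through the overflow error, we apply \cref{lem:acl-degree} to $Q(X,\ell)=P(X,\ell)$ with $C=M_0$. This gives $O(q\log q)=\deg P=\Omega(\sqrt R)=\Omega(N^{1/3})$, hence $q=\Omega(N^{1/3}/\log N)$.
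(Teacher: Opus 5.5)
Your overall route is the paper's: signed $P^{(3)}_{M,\ell}$ instances, cancellation of forest monomials, the factor $\ell^{\beta(F)}$ giving $\Delta^{\mathrm{form}}=\ell H/D_q$, a polynomial $A_q\approx 1/D_q$, $P=HA_q/2$, and \Cref{lem:acl-degree}. However, the step that produces the constant separation $\gamma$ fails as written. Without amplification you only know $\Delta(M_0,1)\geq 1/3-o(1)$, so $P(M_0,1)\gtrsim 0.99/6\approx0.165$. At $\ell=2$ the best available bound is $|P(M_0,2)|\leq 1.01/4+o(1)\approx0.25$. Your claim $P(M_0,2)\leq0.13$ does not follow from the promise at $\ell=2$: balancedness of $\YES$ and farness of $\NO$ there give a \emph{lower} bound $\Delta(M_0,2)\gtrsim1/3$, not an upper bound. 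A tester with $\Delta(M_0,1)=1/2$ and $\Delta(M_0,2)=1$ has $P(M_0,1)\approx P(M_0,2)$. Likewise, $2P(M_0,2)-P(M_0,1)\approx(\Delta(M_0,2)-\Delta(M_0,1))/2$ need not be small.

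The missing step is to amplify the tester to error $1/20$ first, at constant overhead in $q$. Then $\Delta(M_0,1)\geq0.9-o(1)$ gives $P(M_0,1)>0.44$. The trivial bound $|\Delta(M_0,2)|\leq1$ alone gives $|P(M_0,2)|<0.26$, so $\gamma=0.18$, and no promise at $\ell=2$ is used. (Also, $2\ell P\approx\Delta$, not $\ell P\approx\Delta$.)

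There are three smaller points. First, in Step~3, $|t_a|$ is not $O(qR/N)$: it also contains $|M-M_0|/M_0$, which is a small constant on this grid. With that, per-factor truncation does not give what you claim.
\begin{itemize}
\item Truncating each of the $d_q$ simple factors at degree $\Theta(\log d_q)$, as needed for relative error $0.01/d_q$ each, gives degree $O(q\log^2 q)$ and costs a logarithm.
\item Truncating $(1-t_a)^{-\lfloor 2q/a\rfloor}$ at degree $O(\log q)$ fails for small $a$. With $m=2q$, the Taylor terms $\binom{m+j-1}{j}|t|^j$ keep growing up to $j\approx m|t|=\Theta(q)$.
\end{itemize}
The paper instead truncates the full product expansion at total degree $J_q=4d_q+10$ (\Cref{lem:reciprocal}). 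This gives $|D_qA_q-1|\leq2^{-J_q-1}(5/2)^{d_q}\leq10^{-2}$ with degree $O(d_q)=O(q\log q)$.

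Second, nothing about $H$ off the grid is needed.

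Third, you cannot apply \Cref{lem:acl-degree} to $Q=P$ with $C=M_0$. The lemma needs $|Q|\leq1$ at every $(X,\ell)$ with $\ell\mid X$, but the distributions, and hence the bound on $P$, exist only when $2\ell\mid M$. Take $Q(X,\ell)=P(2X,\ell)$ and $C=M_0/2$; this is why $4\mid M_0$ is arranged.
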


We will define the notion of farness from being balanced via the frustration index (see \cref{eq:frustration}).

\subsection{Description of our hard instances}
\label{sec:newconstruction}

For the hard instance of our lower bound for balancedness testing, we will use the construction of \cite{ambainis2011quantum} that we described in \Cref{sec:hard_instance_acl} setting the number of perfect matchings (colors) to $c=3$.
Let us call the graph obtained by the random process $P_{M,\ell}$ a \emph{base} graph. Once we obtain a base graph, we define an associated signed graph (see \Cref{sec:signed_graphs}) by assigning a bit $b_{uv}\in \Ftwo$ to every edge of the graph in two possible ways as follows.

\paragraph{$\YES$ distribution.}
Let us choose vertex labels $\xi\in\Ftwo^N$ uniformly and independently at random, and set $b_{uv}=\xi_u\oplus\xi_v$ 
for every edge $uv$.

\paragraph{$\NO$ distribution.}
Choose every edge sign independently and uniformly from $\Ftwo$.

\vspace{10pt}

From the construction, it is clear that the labeling $x=\xi$ satisfies every constraint in every $\YES$ instance. We will later prove that every labeling violates a constant fraction of the constraints with probability $1-2^{-\Omega(N)}$ in the instances in the $\NO$ distribution (see \Cref{lem:random-sign-farness}).
Note that the $\YES$ and $\NO$ distributions do not depend on distinct values of $\ell$, unlike the original hard distributions in \cite{ambainis2011quantum}.

\paragraph{Setting up parameters}
To prove that the graph sampled from the $\NO$ distribution is far from being balanced, we need to instantiate the construction with parameters. Below, $M_0$ is a parameter that will later be used to define the range of $M$, the total number of vertices considered at the beginning of the construction, but for now we will just take $M=M_0$. As before, $\ell$ denotes the number of blocks, $N$ and $m$ denote the total number of vertices and edges in the visible graph.

We set $M_0=4\left\lceil\frac{11N}{40}\right\rceil$, and
we take $N$ to be sufficiently large so that $M_0$ is divisible by four and $\frac{11}{10}N\leq M_0<\frac{11}{10}N+4$.
At $\ell=1$, there is no sampling failure because all $N$ visible vertices fit in the single block.

Note that the ambient union of the three perfect matchings has $3M_0/2$  edges and the original graph has $M_0-N$ unselected ambient vertices, which are removed during the construction. Now removing the $M_0-N$ unused ambient vertices deletes at most $3(M_0-N)$  edges. Hence the final visible signed graph has $m$ edges where
\begin{equation}\label{eq:edge-count}
 m\geq\frac{3M_0}{2}-3(M_0-N)
 =3N-\frac{3M_0}{2}
 \geq1.3N
\end{equation}
for every sufficiently large $N$.
\cref{eq:edge-count} will be used to prove the farness of the graph sampled from the $\NO$ distribution.

We will use the notion of frustration index \cref{eq:frustration} to denote how far a signed graph is from being balanced. We now show that every graph sampled from the $\NO$ distribution described above is far from being balanced with high probability.

\begin{lemma}%
\label{lem:random-sign-farness}
Let $H=(V,E)$ be any colored multigraph~\footnote{In the hard distribution, every color class is a matching, so parallel edges can occur only between different colors. However, the final bipartiteness instance is still a simple graph as we will replace different colored constraints with paths with different color-specific internal vertices, so no parallel edges will remain.} with $\lvert V\rvert=N$ and $m$ edges. Suppose we assign every edge an independent uniform sign $b_e\in\Ftwo$ as in the construction of $\NO$ distribution. Then, for every $\tau\in(0,1/2)$,
\begin{equation}\label{eq:general-random-sign-bound}
 \Pr\!\left[\fr(H,b)\leq \tau m\right]
 \leq 2^{N-(1-h_2(\tau))m}, 
\end{equation}
where $h_2$ is the binary entropy function.
In particular, if $m\geq1.3N$, then
\[
 \Pr\!\left[\fr(H,b)\leq\frac m{50}\right]
 \leq 2^{-\Omega(N)}.
\]
\end{lemma}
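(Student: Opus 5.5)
The plan is a union bound over vertex labelings combined with a binomial tail (counting) estimate. Fix the graph $H$ and fix a labeling $x\in\Ftwo^V$. Since the edge signs $b_e$ are independent and uniform, the events $\{x_u\oplus x_v\neq b_{uv}\}$ for $uv\in E$ are independent and each has probability exactly $1/2$. This holds for parallel edges as well, because each copy carries its own independent sign. So the number of edges violated by $x$ is distributed as $\Bin(m,1/2)$. Hence
\[
\Pr\bigl[\,\abs{\set{uv\in E: x_u\oplus x_v\neq b_{uv}}}\leq \tau m\,\bigr]=2^{-m}\sum_{j\leq \tau m}\binom{m}{j}\leq 2^{-m}2^{h_2(\tau)m},
\]
where the last step is the standard entropy bound $\sum_{j\leq\tau m}\binom mj\leq 2^{h_2(\tau)m}$, valid for $\tau\in(0,1/2)$.

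Next, the event $\fr(H,b)\leq\tau m$ is exactly the union over all $x\in\Ftwo^V$ of the events above, by the definition \cref{eq:frustration}. A union bound over the $2^N$ labelings then gives \cref{eq:general-random-sign-bound} directly. One could halve the count, since $x$ and its complement $x\oplus\mathbf 1$ violate the same edges, but this is not needed.

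For the particular case, take $\tau=1/50$. We need $1-h_2(1/50)$ to be large enough that $(1-h_2(1/50))m\geq (1-h_2(1/50))\cdot 1.3N$ exceeds $N$ by a constant fraction of $N$. Numerically, $h_2(0.02)=-0.02\log_2 0.02-0.98\log_2 0.98\approx 0.1129+0.0286\approx 0.1414$. So $1-h_2(0.02)\approx 0.8586$, and $1.3\cdot 0.8586\approx 1.116>1$. The exponent is therefore at most $N-1.116N=-0.116N$, which gives the bound $2^{-\Omega(N)}$.

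The only step requiring any care is the entropy estimate for the binomial tail. I would prove it with the standard Chernoff-type trick: for $\tau\le 1/2$,
\[
1\geq\sum_{j\le\tau m}\binom mj\tau^j(1-\tau)^{m-j}\geq \sum_{j\le\tau m}\binom mj\tau^{\tau m}(1-\tau)^{(1-\tau)m},
\]
because $\tau/(1-\tau)\le1$ makes $\tau^j(1-\tau)^{m-j}$ nonincreasing in $j$. Since $\tau^{\tau m}(1-\tau)^{(1-\tau)m}=2^{-h_2(\tau)m}$, this yields the claimed bound. The numerical check that $1.3(1-h_2(1/50))>1$ is the only place where the constants $1.3$ and $1/50$ matter.
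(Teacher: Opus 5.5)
Your proposal is correct and matches the paper's proof: fix a labeling $x$, observe that the number of violated constraints is $\Bin(m,1/2)$, bound the lower tail by $2^{-(1-h_2(\tau))m}$, take a union bound over the $2^N$ labelings, and check numerically that $1.3(1-h_2(1/50))>1$. The only differences are that you prove the entropy tail estimate and carry out the argument for general $\tau$, whereas the paper cites the estimate and writes the argument only for $\tau=1/50$.
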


\begin{proof}

Let us fix a vertex labeling $x\in\Ftwo^N$. Note that there are $2^N$ possible vertex labelings. First we will bound the bad event for one fixed vertex labeling, then take union bound over all $2^N$ labelings. Since the edge signs in the $\NO$ distribution are  uniform and independent of each other, the number of constraints violated by $x$ is distributed according to $\Bin(m,1/2)$. Therefore
\[
 \Pr\!\left[\text{$x$ violates at most $m/50$ constraints}\right]
 =2^{-m}\sum_{j\leq m/50}\binom mj
 \leq2^{-(1-h_2(1/50))m},
\]
where $h_2$ is the binary entropy function.
Taking a union bound over the $2^N$ labelings, we have the following:
\[
 \Pr\!\left[\fr(H,b)\leq\frac m{50}\right]
 \leq2^{N-(1-h_2(1/50))m}.
\]
Since $h_2(1/50)<0.142$ and $m \geq 1.3N$, the exponent above is $-\Omega(N)$. This completes the proof.
\end{proof}

As a consequence, under the $\NO$ distribution at $(M,\ell)=(M_0,1)$, $\fr(H,b)\geq0.026N$
with probability $1-2^{-\Omega(N)}$.
Thus, at $(M_0,1)$, the $\YES$ distribution is supported on balanced signed graphs, while the $\NO$ distribution has frustration index at least $0.026N$ except with exponentially small probability.

\begin{remark}\label{rem:hard-point-interpolation}
The point $(M_0,1)$ is the only parameter value at which the signed-balance tester correctness is used. As we saw, at this point, the $\YES$ distribution is supported on balanced signed graphs and the $\NO$ distribution is at constant distance from  being balanced with high probability. Consequently, correctness of the tester gives a constant lower bound on the difference of the two acceptance probabilities, that we will later denote $\Delta(M_0,1)$.

The distributions with other parameter values are auxiliary. In particular, at $(M_0,2)$ we use only the universal bound $\abs{\Delta(M_0,2)}\leq1$. More generally, the family indexed by $(M,\ell)$ is introduced so that the averaged acceptance gap becomes a structured rational function on a large grid. Since the values at $(M_0,1)$ and $(M_0,2)$ are separated by a constant, boundedness on the full grid allows the bivariate degree lemma to force a large polynomial degree.

\end{remark}

We will also need the following lemma about the probabilities of different label assignments on graph patterns (corresponding to monomials of the acceptance polynomial) for the $\YES$ and $\NO$ distributions.

\begin{lemma}%
\label{lem:forest-agreement}
Consider a colored multigraph $F$ with $v$ vertices, $e$ edges and $c$ connected components. Suppose we fix the labels on its edges according to the construction (either $\YES$ or $\NO$ case).
Under the $\NO$ distribution, the label assignment has probability $2^{-e}$. Under the $\YES$ distribution, it has probability
\[
 \begin{cases}
 2^{-(v-c)},&\text{if the signed pattern is balanced},\\
 0,&\text{otherwise}.
 \end{cases}
\]
In particular, if $F$ is a forest, the two probabilities are equal for every label assignment.
\end{lemma}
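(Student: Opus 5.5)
The plan is to compute both probabilities directly, using only the definitions of the two sign distributions; the pattern $F$ is fixed, so its underlying multigraph plays no role beyond its vertices and edges. Fix a label vector $b\in\Ftwo^{E(F)}$. Under the $\NO$ distribution the $e$ edge signs are independent uniform bits. The event that they equal $b$ is therefore an intersection of $e$ independent events of probability $1/2$, which gives $2^{-e}$.

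Under the $\YES$ distribution the edge signs are the image of $\xi\in\Ftwo^{V(F)}$, which is uniform on the $v$ relevant vertices, under the linear map $\partial^\top:\Ftwo^{V(F)}\to\Ftwo^{E(F)}$, $\xi\mapsto(\xi_u\oplus\xi_w)_{uw\in E(F)}$. Parallel edges of different colors give separate coordinates. The probability of seeing $b$ is $|\{\xi:\partial^\top\xi=b\}|/2^{v}$.

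\begin{itemize}
\item If $b$ is not in the image of $\partial^\top$, this count is $0$. By the definition of balance in \eqref{eq:balance}, this happens exactly when the signed pattern $(F,b)$ is not balanced.
\item Otherwise the preimage is a coset of $\ker\partial^\top$. A vector $\xi$ lies in the kernel iff $\xi_u=\xi_w$ for every edge, i.e.\ iff $\xi$ is constant on each connected component. The kernel therefore has size $2^{c}$, and the probability is $2^{c}/2^{v}=2^{-(v-c)}$.
\end{itemize}

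For the forest case, \eqref{eq:cycle-rank} gives $\beta(F)=0$, i.e.\ $e=v-c$. So $\operatorname{rank}\partial^\top=v-c=e$ and $\partial^\top$ is surjective onto $\Ftwo^{E(F)}$. Every labeling of a forest is then balanced, and its $\YES$ probability is $2^{-(v-c)}=2^{-e}$, which equals the $\NO$ probability.

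The one point needing care, rather than a real obstacle, is independence from the base graph. The pattern probability is evaluated conditionally on the pattern's edges being present, and the signs (in both distributions) are sampled independently of the base graph. So the label-assignment probability factors off exactly as computed. Isolated vertices are excluded from $v$ by convention, consistent with counting only vertices incident to specified edges.
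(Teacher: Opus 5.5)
Your proof is correct and is essentially the paper's argument. The $\NO$ case follows by independence. In the $\YES$ case, the system $\xi_u\oplus\xi_w=b_{uw}$ is either inconsistent or has exactly $2^{c}$ solutions among the $2^{v}$ vertex assignments, and $e=v-c$ for forests. Your kernel/rank formulation via $\partial^\top$ only makes the paper's ``fix one root per component'' count and its claim that every labeling of a forest is consistent slightly more explicit.
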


\begin{proof}
Note that under the $\NO$ distribution, since the $e$ edge labels are uniform and independent of each other, the claim follows. Under the $\YES$ distribution, due to the label assignment procedure, the assigned labels follow equations $\xi_u\oplus\xi_v=b_{uv}$
on the incident vertices. The corresponding system of equations is either inconsistent or has exactly $2^c$ solutions. This is due to the fact that once we choose one value of any root node in each connected component, all other values in that connected component are determined. Dividing by the $2^v$ possible assignments to the incident vertices gives the above mentioned probability. Note that if $F$ is a forest, every label assignment is consistent by the above argument, and $e=v-c$, so both probabilities are $2^{-e}$. This completes the proof.    
\end{proof}

\subsection{Analyzing the acceptance probability}

Consider an arbitrary quantum algorithm making $q$ queries. Let $p_{\YES}(M,\ell)$, and $p_{\NO}(M,\ell)$ be its expected acceptance probabilities under the feasible distributions (assuming no overflow event happens) under $\YES$ and $\NO$ distributions, respectively.
Now let us define the acceptance gap
\begin{equation}\label{eq:Delta}
\Delta(M,\ell)=p_{\YES}(M,\ell)-p_{\NO}(M,\ell).
\end{equation}

This case corresponds to the actual graph distribution. However, the formal acceptance probability gap $\Delta_{\mathrm{form}}(M,\ell)$ (where an overflow event is allowed) is easier to calculate algebraically, and later we will show that they are close to each other. Hence, we start by analyzing the formal version that we define below.

Let us formalize the notion of a consistent colored pattern as follows, which will be used in our proofs.

\begin{definition}[Consistent colored pattern]
A colored pattern is \emph{consistent} if, for every color, its edges form a matching and all specified oracle responses and signs are mutually consistent. Equivalently, it has nonzero probability in the distribution $P_{M,\ell}$.   
\end{definition}

\begin{definition}[Consistent graph pattern]\label{def:topology-moment}
Let $F$ be a consistent colored pattern over the vertices of the final graph as described in \cref{sec:newconstruction}. For every colored edge $e=\{u,v\}$ of color $j$ in $F$, let $Y_e$ be the indicator that the color-$j$ matching joins the ambient positions occupied by $u$ and $v$. Now consider
\begin{equation}\label{eq:topology-moment}
 Y_F:=\prod_{e\in E(F)}Y_e,
 \qquad
 W_F(M,\ell):=\mathbb E_{M,\ell}^{\mathrm{form}}[Y_F].
\end{equation}
Equivalently, $W_F(M,\ell)$ is the probability in the original hard distribution (without excluding the overflow event\footnote{If an overflow event happens, for simplicity, we assume that an arbitrary bounded degree graph with maximum degree at most $c$ is returned by the process.}) that every colored edge prescribed by $F$ occurs on the specified visible vertices. Edge signs are ignored and unspecified edges are unrestricted. In general, $W_F$ is a rational function rather than a polynomial.
\end{definition}

Note that $W_F$ is very similar to the expected value of the acceptance probability monomials (that appear in $p_\YES$ and $p_\NO$), but here the overflow event is not excluded.

The following statement is a direct consequence of the proof of Proposition~6 in~\cite{ambainis2011quantum}. Here we only include a short version of the proof that assumes familiarity with the proof in \cite{ambainis2011quantum}, but for completeness, a full proof can be found in \Cref{app:proof_3.8}.
The lemma states the full cycle-rank exponent, but the later argument uses only the consequence that every nonforest pattern contributes at least one factor of $\ell$.

\begin{restatable}%
{lemma}{lemtopologypower}\label{lem:topology-power}
Let $F$ be a consistent colored pattern with $v$ vertices, $e$ edges, and $c$ connected components, and let $W_F(M,\ell)$ be the graph pattern probability from \Cref{def:topology-moment}. There is a polynomial $B_F(M,\ell)$ of degree $O(q\log q)$ such that
\begin{equation}\label{eq:topology-factorization}
 W_F(M,\ell)
 =\frac{\ell^{e-v+c}B_F(M,\ell)}{D_q(M,\ell)}
 =\frac{\ell^{\beta(F)}B_F(M,\ell)}{D_q(M,\ell)},
\end{equation}
where $\beta(F)$ denotes the cycle-rank of $F$.    
\end{restatable}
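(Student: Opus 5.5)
The plan is to compute $W_F(M,\ell)$ explicitly by conditioning on how the visible vertices of $F$ are assigned to blocks. Then we factor out powers of $\ell$ component by component.

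\textbf{Block assignment.} Under the formal distribution, each visible vertex independently picks a uniform block. The event $Y_F$ forces every connected component of $F$ to lie entirely within one block, since matchings live inside blocks. The $c$ components are then assigned to blocks, and distinct components may or may not share a block. Summing over all assignments with the block-partition structure $\pi$ (which components share which block), the probability that each component is monochromatic in blocks contributes $\ell^{-(v-c)}$. The probability of a given coincidence pattern $\pi$ among components is $\ell^{-(c-|\pi|)}$ times a falling-factorial factor $(\ell)_{|\pi|}/\ell^{|\pi|}$. These are polynomials in $\ell$ divided by powers of $\ell$.

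\textbf{Matching probability.} Conditioned on the block assignment, consider a block $B$ of size $n=M/\ell$ that receives components carrying, for each color $j$, $e_j$ prescribed edges among distinct visible positions. The visible positions are distinct ambient vertices, and the matchings are independent uniform perfect matchings. So the probability that the prescribed $e_j$ edges of color $j$ all appear is
\[
\prod_{i=1}^{e_j}\frac{1}{n-(2i-1)} \;=\; \prod_{i=1}^{e_j}\frac{\ell}{M-(2i-1)\ell}.
\]
Since $F$ comes from a monomial of degree at most $2q$, we have $e\le 2q$. Thus each factor $M-(2i-1)\ell$ with $i\le 2q$ appears in the denominator. Within a block, color classes give multiplicities $\le$ the number of colors $\times$ the number of blocks used. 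Following the counting of \cite{ambainis2011quantum}, the total multiplicity of $(M-(2a-1)\ell)$ is at most $\lfloor 2q/a\rfloor$, because each block needing $a$ edges of some color uses at least $a$ edges. Hence, over the common denominator $D_q$, the product becomes $\ell^{e}\cdot(\text{polynomial})/D_q$, where the polynomial has degree at most $d_q=O(q\log q)$.

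\textbf{Counting $\ell$-factors.} Multiplying the two parts gives a term of the form
\[
\frac{\ell^{e}\,\ell^{-(v-c)}\,\ell^{-(c-|\pi|)}\,(\ell)_{|\pi|}\,\ell^{-|\pi|}\,\tilde B_\pi}{D_q}.
\]
The key point is that $(\ell)_{|\pi|}=\ell(\ell-1)\cdots(\ell-|\pi|+1)$ contains a factor $\ell^{1}$ but not more in general. I expect this bookkeeping to be the main obstacle. The naive exponent is $e-v+c-c+|\pi|-|\pi|+1$, which loses $c-1$ powers.

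To fix this, the matching factor should be counted per block rather than per edge. A block containing $k$ components contributes, through its occupied-position normalization, additional structure. More precisely, I would instead condition on the block of each component separately and avoid the falling factorial. Each component independently lands in a uniform block, with probability $\ell^{-(v_i-1)}$ that it is block-monochromatic. Given that, the joint matching probability depends only on which components share blocks. Summing over coincidences produces the cross terms $\ell^{-1}$ per coincidence. These cross terms are exactly compensated, because a coincident pair's matching factors differ from independent ones by a factor whose numerator is divisible by $\ell$. One would verify this by writing the matching probability for merged components as the product for separated ones plus corrections. Each correction arises from $1/(n-(2i-1))$ with shifted indices, and the difference of such reciprocals carries a factor $2\ell$ over $M$-linear terms.

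\textbf{Conclusion.} Either way, I would argue that the final expression is $\ell^{e-v+c}B_F/D_q$ with $B_F$ a polynomial, using the fact that the degree of all numerators is bounded by $d_q+O(q)=O(q\log q)$. The delicate step, cancellation of the negative $\ell$-powers from block coincidences, I would handle by induction on the number of components: merge one component into the rest and check divisibility. As a sanity check, for $F$ a single edge the formula gives $W_F=\frac1\ell\cdot\frac{\ell}{M-\ell}$, whose $\ell$-exponent is $0=\beta(F)$, as required.
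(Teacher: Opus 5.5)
Your framework is the same as the paper's. Conditioning on which components share a block and inverting over the partition lattice gives $W_F=\sum_{L}\ell^{-v+|L|}f'_L$ with $f'_L=\ell^{e}\widetilde f'_L$, and you correctly see that the term-by-term falling-factorial count only yields $\ell^{e-v+1}$, losing $c-1$ powers. The gap is that the step which recovers those powers is the whole content of the lemma, and you do not supply it. You need, for every class $S\in L$, that the cumulant combination $\kappa(S)=\sum_{\pi\in\Pi(S)}\mu(\pi,\hat 1_S)\prod_{T\in\pi}\widetilde f(T)$, with $\widetilde f(T)=\prod_j\prod_{a\le d_j(T)}(M-(2a-1)\ell)^{-1}$, has numerator divisible by $\ell^{|S|-1}$. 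Then $\widetilde f'_L=\prod_{S\in L}\kappa(S)$ is divisible by $\ell^{c-|L|}$.

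Your reciprocal-difference observation only shows that $\widetilde f(\{1,2\})-\widetilde f(\{1\})\widetilde f(\{2\})$ is divisible by $\ell$, which handles classes of size two. A class of $k\ge 3$ components costs $\ell^{-(k-1)}$, and one factor of $\ell$ per pair does not imply $\ell^{k-1}$. For example, the set function with $h(\{i\})=1$ and $h(T)=1+\ell$ for $|T|\ge2$ has all pairwise corrections divisible by $\ell$, yet $h_{123}-h_{12}h_3-h_{13}h_2-h_{23}h_1+2h_1h_2h_3=-2\ell$. Likewise, ``merge one component into the rest'' gives one factor of $\ell$ for $\widetilde f(T\cup\{i\})-\widetilde f(T)\widetilde f(\{i\})$, not for the connected part, so the induction as described does not close. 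This step is not cosmetic: without it even nonnegativity of the exponent fails, since a forest with $c$ components has naive exponent $1-c$.

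The paper does not prove this divisibility from scratch. It imports it from the proof of \cite[Proposition~6]{ambainis2011quantum} and then observes $\ell^{-v+|L|}\cdot\ell^{e+c-|L|}=\ell^{e-v+c}$; its only new point is that keeping the explicit $\ell^{e}$ yields the full cycle-rank exponent.

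For a self-contained proof you must use the specific product structure of $\widetilde f$. One route:
\begin{itemize}
\item Write $\log\bigl(M^{e(T)}\widetilde f(T)\bigr)=\sum_{k\ge1}\frac{(\ell/M)^k}{k}\sum_j p_k(d_j(T))$, where $\deg p_k=k+1$ and $d_j(T)=\sum_{i\in T}d_{i,j}$ is additive.
\item Hence the mixed difference of this logarithm over a set $U$ of components is $O(\ell^{|U|-1})$.
\item Up to a factor multiplicative over components, the cumulant of the exponential is a sum over connected spanning hypergraphs on $S$ of products of terms $e^{\gamma(U)}-1$, so it is $O(\ell^{|S|-1})$.
\item Every denominator factor $M-(2a-1)\ell$ is nonzero at $\ell=0$, so this order of vanishing is exactly divisibility of the numerator.
\end{itemize}
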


\begin{proof}[Proof sketch]
Let us denote the connected components of $F$ as $C_1,\ldots,C_c$. The proof of~\cite[Lemma~3]{ambainis2011quantum} sums over partitions $L$ of $[c]$, where a class of the partition contains components placed in the same hidden block. In the notation of their equation~(6), the contribution associated with $L$ has the prefactor
\begin{equation}\label{eq:acl-prefactor}
 \ell^{-v+|L|}f'_L(M,\ell).
\end{equation}

Every prescribed matching edge contributes one explicit factor of $\ell$. Indeed, prescribing $r$ disjoint edges in one perfect matching contributes
\[
 R_r=\prod_{a=1}^{r}\frac{\ell}{M-(2a-1)\ell}.
\]
Since $F$ has $e$ distinct colored edges, every term entering $f'_L$ contains the factor $\ell^e$.

In the proof of \cite[Proposition~6]{ambainis2011quantum},  these explicit edge factors are removed. For the resulting expression, their partition identity shows that the numerator of the redefined $f'_L$ is divisible by $\ell^{c-|L|}$. Restoring the edge factors, the numerator of the original $f'_L$ is therefore divisible by $ \ell^{e+c-|L|}$.
Multiplying by the prefactor in \eqref{eq:acl-prefactor} leaves the power $\ell^{-v+|L|}\ell^{e+c-|L|}
 =\ell^{e-v+c}$.
Every remaining denominator factor has the form $M-(2a-1)\ell$ and is not divisible by $\ell$ as a polynomial in $M$ and $\ell$. Passing to the common denominator $D_q$ and summing over partitions therefore preserves the divisibility. The degree estimate is the one proved in~\cite[Lemma~3]{ambainis2011quantum}.
\end{proof}

\begin{remark}
\cite{ambainis2011quantum} considers only the number of $\ell$-factors needed to cancel the negative powers of $\ell$ in their expansion application. Their proof first removes the full factor $\ell^e$ and then proves the additional factor $\ell^{c-|L|}$. Keeping the full edge factor gives the cycle-rank exponent above.
\end{remark}

Now we will write the formal acceptance probability gap $\Delta^{\mathrm{form}}$. For this, we expand the degree-$2q$ acceptance polynomial from \Cref{lem:beals} into canonical monomials. Conditioned on a colored graph pattern $F$, the only difference between the $\YES$ and the $\NO$ cases is the probability of the edge signs. If $F$ is a forest, this difference is zero by \Cref{lem:forest-agreement}. If $F$ is not a forest, then $\beta(F)\geq1$, and \Cref{lem:topology-power} gives a factor of $\ell$ when writing the expression over the common denominator.

\begin{restatable}%
{proposition}{propformalgap}\label{prop:formal-gap}
There is a polynomial $H(M,\ell)$ of degree $O(q\log q)$ such that the difference between the formal acceptance probabilities in the $\YES$ and $\NO$ cases can be written as
\begin{equation}\label{eq:formal-gap}
 \Delta^{\mathrm{form}}(M,\ell)
 =\frac{\ell H(M,\ell)}{D_q(M,\ell)}.
\end{equation}    
\end{restatable}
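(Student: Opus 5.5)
We expand the acceptance probability polynomial of the $q$-query algorithm, given by \Cref{lem:beals}, into canonical monomials. Each monomial that is not discarded corresponds to a consistent colored pattern $F$ with prescribed edge signs $s\in\Ftwo^{E(F)}$. Its coefficient $\alpha_{F,s}$ does not depend on the input distribution, and its degree is at most $2q$, so $F$ has at most $2q$ edges. Under the formal distribution, the colored edges of $F$ occur with probability $W_F(M,\ell)$, as in \Cref{def:topology-moment}. Conditioned on the base graph, the signs are drawn by the $\YES$ or $\NO$ rule, independently of the matchings and of the sampling of visible vertices. Hence the expected value of the monomial factorizes as $W_F(M,\ell)\cdot\pi_{\YES}(F,s)$ or $W_F(M,\ell)\cdot\pi_{\NO}(F,s)$. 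Here $\pi_\YES$ and $\pi_\NO$ are the sign probabilities of \Cref{lem:forest-agreement}, and neither depends on $(M,\ell)$.

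Two points need care here. First, a monomial may also contain null-outcome indicators $X_{a,\perp}=1-\sum_z X_{a,z}$. We eliminate these beforehand by substituting this expression, so every monomial involves only positive edge indicators. This can increase the degree, but only up to $O(q)$ edges in total; alternatively we can note that the pattern bound of \Cref{lem:topology-power} is stated with $D_q$ already accounting for this. Second, the convention for overflow in the formal distribution, which returns an arbitrary graph, must not break the factorization. We adopt the convention implicit in \Cref{def:topology-moment}: the formal expectation is the algebraic expression $W_F$ computed as in \cite{ambainis2011quantum}, and the signs are assigned by the same rule on top of it. The identity $\mathbb E^{\mathrm{form}}[\text{monomial}]=W_F\,\pi(F,s)$ then holds by definition.

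With this in place, the gap is
\[
\Delta^{\mathrm{form}}(M,\ell)=\sum_{(F,s)}\alpha_{F,s}\,W_F(M,\ell)\bigl(\pi_\YES(F,s)-\pi_\NO(F,s)\bigr).
\]
By \Cref{lem:forest-agreement}, every term with $F$ a forest vanishes. For every remaining $F$ we have $\beta(F)\ge1$. \Cref{lem:topology-power} then gives $W_F=\ell^{\beta(F)}B_F/D_q=\ell\cdot\ell^{\beta(F)-1}B_F/D_q$, where $\ell^{\beta(F)-1}$ is an honest polynomial factor. We set
\[
H(M,\ell)=\sum_{F\text{ non-forest}}\sum_s\alpha_{F,s}\bigl(\pi_\YES-\pi_\NO\bigr)\ell^{\beta(F)-1}B_F(M,\ell).
\]
This is a finite real linear combination of polynomials, which gives \eqref{eq:formal-gap}.

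The main obstacle is the degree bound. $B_F$ has degree $O(q\log q)$ by \Cref{lem:topology-power}. The extra factor $\ell^{\beta(F)-1}$ has degree at most $e(F)\le 2q$, since after removing null indicators every pattern has $O(q)$ edges. So each summand has degree $O(q\log q)$, and a sum of such polynomials keeps this bound. The one thing to check is that the degree bound on $B_F$ in \Cref{lem:topology-power} is uniform over all patterns arising from a $q$-query algorithm, meaning it depends on $q$ and not on $F$. This is exactly what the common denominator $D_q$ and the proof of \cite[Lemma~3]{ambainis2011quantum} provide. Consistent with \Cref{lem:acl-rational}, we therefore obtain $\deg H=O(q\log q)$.
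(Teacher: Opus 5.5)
Your proposal is correct and follows the paper's proof essentially step by step: you factor each monomial's formal expectation as $W_F(M,\ell)$ times an $(M,\ell)$-independent sign probability, kill the forest terms via \Cref{lem:forest-agreement}, extract one factor of $\ell$ from each non-forest term via \Cref{lem:topology-power}, and bound $\deg H$ by $2q+O(q\log q)$. One minor correction: substituting $X_{a,\perp}=1-\sum_{z\neq\perp}X_{a,z}$ replaces a degree-one factor by an affine combination of degree-one factors, so it never increases the degree, and every surviving pattern has at most $2q$ edges directly, which is exactly the bound the paper uses.
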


We will formally prove this in \Cref{app:proof_3.10} and only give a proof sketch below.

\begin{proof}[Proof sketch]

Let us apply \Cref{lem:acl-rational} term by term to the acceptance polynomial and use the common denominator polynomial $D_q$. Note that the forest monomials have zero probability difference between the $\YES$ and $\NO$ cases because of \Cref{lem:forest-agreement}. Hence, every remaining monomial has positive cycle rank, so its numerator is divisible by $\ell$ by \Cref{lem:topology-power}. A linear combination of monomials divisible by $\ell$ remains divisible by $\ell$. Factoring out one  $\ell$ factor gives us \eqref{eq:formal-gap}. The degree bound follows from \Cref{lem:acl-rational}.
\end{proof}

\subsection{Bounding the sampling failure probability}

Recall that in the hard distribution described in \Cref{sec:hard_instance_acl}, it is possible that more than $M/\ell$ vertices will be assigned to a single block during the construction of the base graph. However, we now show that this event, called an \emph{overflow}, is unlikely.

The family of hard distributions used in the polynomial argument is evaluated on the grid
\begin{equation}\label{eq:grid}
 \mathcal G_N=
 \set{(M,\ell):
 1\leq\ell\leq R,
 \ \abs{M-M_0}\leq R^{3/2},
 \ 2\ell\mid M} 
 \quad \mathrm{with}\quad 
 R=\left\lfloor\frac{N^{2/3}}{100}\right\rfloor.
\end{equation}
For sufficiently large $N$, every $M$ in this grid is at least $1.09N$. %

\begin{lemma}
\label{lem:overflow}
Let us consider the construction of hard distributions as described in  \Cref{sec:hard_instance_acl} and \cref{sec:newconstruction}. Then for every $(M,\ell)\in\mathcal G_N$,
\[
 \Pr [\text{some block overflows}]
 \leq \exp(-{\Omega}(N^{1/3})).
\]
\end{lemma}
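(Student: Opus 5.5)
Each visible vertex picks one of the $\ell$ blocks uniformly and independently, so for a fixed block $V_i$ the number of visible vertices assigned to it is $X_i\sim\Bin(N,1/\ell)$. This holds regardless of the order in which vertices inside a block are chosen. Overflow means $X_i>M/\ell$ for some $i$. The plan is to bound $\Pr[X_i>M/\ell]$ for a single block with a multiplicative Chernoff bound and then take a union bound over the $\ell\le R$ blocks.

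\textbf{Case $\ell=1$.} Here $X_1=N\le M$ deterministically, since every $M$ in the grid satisfies $M\ge 1.09N$. So overflow is impossible, and we may assume $2\le\ell\le R$.

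\textbf{Case $\ell\ge2$.} The mean is $\mu=N/\ell$. The threshold satisfies $M/\ell\ge 1.09N/\ell=(1+\delta)\mu$ with $\delta=0.09$. The constant $1.09$ comes from $M\ge M_0-R^{3/2}\ge \frac{11}{10}N-N/1000$, because $R^{3/2}\le N/1000$. The Chernoff bound then gives
\[
\Pr\bigl[X_i>M/\ell\bigr]\le \exp\!\bigl(-\delta^2\mu/3\bigr)=\exp\!\bigl(-\Omega(N/\ell)\bigr)\le \exp\!\bigl(-\Omega(N/R)\bigr)=\exp\!\bigl(-\Omega(N^{1/3})\bigr),
\]
using $\ell\le R\le N^{2/3}/100$.

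\textbf{Union bound.} Summing over the $\ell\le R\le N^{2/3}$ blocks, the total overflow probability is at most $N^{2/3}\exp(-\Omega(N^{1/3}))$. The polynomial prefactor is absorbed by slightly shrinking the constant in the exponent, which yields the claimed $\exp(-\Omega(N^{1/3}))$ uniformly over $\mathcal G_N$.

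\textbf{Where care is needed.} There is no real obstacle. The only point to check is that the lower bound $M\ge1.09N$ holds uniformly on the grid, so that the relative slack $\delta$ is an absolute constant. This is why the bound degrades only through $\mu=N/\ell\ge N^{1/3}$; it is also exactly why $R=\Theta(N^{2/3})$ is the largest admissible choice.
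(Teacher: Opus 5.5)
Your proof is correct and follows the paper's argument essentially verbatim. Like the paper, you take $X_i\sim\Bin(N,1/\ell)$ for a fixed block, apply a multiplicative Chernoff bound using the uniform slack $M\ge 1.09N$ on $\mathcal G_N$, and union bound over at most $R$ blocks to get $R\exp(-\Omega(N/R))=\exp(-\Omega(N^{1/3}))$. Your explicit derivation $M\ge M_0-R^{3/2}\ge \frac{11}{10}N-\frac{N}{1000}$ fills in a step the paper only asserts, and the separate $\ell=1$ case is harmless but not needed.
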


\begin{proof}
Let us fix one block, w.l.o.g. let us assume it is block $B_1$. Suppose $X$ denotes the random variable counting the number of visible vertices assigned to $B_1$. Note that $X$ is distributed according to $\Bin(N,1/\ell)$ with mean $\mu=N/\ell$. Recall that the maximum number of vertices that can be assigned to $B_1$ (i.e., its capacity) is $M/\ell$.

On $\mathcal{G}_N$, $\frac{M}{N} \geq 1.09$, so the capacity of $B_1$ is at least $(1+\rho)\mu$ for some absolute constant $\rho>0$. Using the multiplicative Chernoff bound, we have
\[
 \Pr [X>M/\ell]
 \leq \exp(-cN/\ell)
\]
for an absolute constant $c>0$. Taking a union bound over the $\ell$ blocks and using $\ell\leq R$, we obtain that
\[
 \Pr [\text{some block overflows}]
 \leq R\exp(-cN/R)
 \leq \exp(-\Omega(N^{1/3})).
\]
This completes the proof.
\end{proof}

Note that the overflow bound in \Cref{lem:overflow} is independent of the algorithm. We now combine it with the coefficient bound for the degree-$2q$ acceptance polynomial.

\begin{restatable}%
{lemma}{lemactualformal}\label{lem:actual-formal}
There is an absolute constant $C_1>0$ such that, uniformly on $\mathcal G_N$ and for $\sigma\in\{\YES,\NO\}$,
\[
 |p_{\sigma}^{\mathrm{feas}}(M,\ell)-p_{\sigma}^{\mathrm{form}}(M,\ell)|
 \leq \exp(C_1q\log N-c_1N^{1/3})
\]
whenever $q=O(N^{1/3}/\log N)$. Consequently,
\begin{equation}\label{eq:eta-bound}
 \Delta(M,\ell)
 =\Delta^{\mathrm{form}}(M,\ell)+\eta(M,\ell),
 \qquad
 \abs{\eta(M,\ell)}
 \leq 2\exp(C_1q\log N-c_1N^{1/3}).
\end{equation}
In particular, if $q\leq c_0N^{1/3}/\log N$ for a sufficiently small constant $c_0$, then $\eta=o(1)$ holds for every parameter pair in $\mathcal G_N$.    
\end{restatable}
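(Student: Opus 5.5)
The plan is to compare the two expectations monomial by monomial, using the canonical form of the acceptance polynomial from \Cref{lem:beals} together with the coefficient bound of \Cref{lem:coefficient-bound}. Write the acceptance probability as $p(x)=\sum_{S}\alpha_S X_S$, where $X_S$ ranges over the canonical monomials in the response indicators. For a fixed $\sigma\in\{\YES,\NO\}$, the formal distribution is the feasible distribution mixed with the overflow case. On overflow we return some arbitrary bounded-degree graph, as in the footnote to \Cref{def:topology-moment}. Writing $\pi=\Pr[\text{overflow}]$, we get
\[
 p_\sigma^{\mathrm{form}}=(1-\pi)\,p_\sigma^{\mathrm{feas}}+\pi\, p_\sigma^{\mathrm{ovf}}.
\]
Here $p_\sigma^{\mathrm{ovf}}$ is the expected value of the polynomial conditioned on overflow.

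The key point is which objects must be bounded. The left-hand sides $p^{\mathrm{feas}}_\sigma$ are genuine acceptance probabilities and lie in $[0,1]$. However, the formal quantities in the paper are defined algebraically: they are the expectation of the multilinear polynomial evaluated on the formal moments, which are the $W_F$ times the sign probabilities. These algebraic quantities need not arise from a valid input, so I would not rely on $p^{\mathrm{ovf}}_\sigma\in[0,1]$. Instead I bound $|p^{\mathrm{ovf}}_\sigma|$ crudely by the sum of the absolute values of the coefficients. Each indicator $X_{a,z}$ has conditional expectation in $[0,1]$, since it is the probability of an event in the formal process. Hence
\[
 \bigl|\mathbb E[p\mid \mathrm{ovf}]\bigr|\le \sum_S|\alpha_S|\le \exp(C_1' q\log N)
\]
by \Cref{lem:coefficient-bound}. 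This uses that each query has $O(N)$ possible outcomes: a vertex, a sign bit, and $\bot$.

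With this in hand,
\[
 |p^{\mathrm{feas}}_\sigma-p^{\mathrm{form}}_\sigma|=\pi\,|p^{\mathrm{ovf}}_\sigma-p^{\mathrm{feas}}_\sigma|\le \pi\bigl(1+\exp(C_1'q\log N)\bigr).
\]
\Cref{lem:overflow} gives $\pi\le\exp(-c_1N^{1/3})$ uniformly on $\mathcal G_N$. Absorbing the $1$ into the exponential by enlarging the constant to $C_1$ yields the first claim.

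For \eqref{eq:eta-bound}, set $\eta=\Delta-\Delta^{\mathrm{form}}$. This is the difference of the two errors for $\sigma=\YES$ and $\sigma=\NO$, so the triangle inequality gives the factor $2$. For the final sentence, take $q\le c_0N^{1/3}/\log N$ with $c_0C_1<c_1$. Then the exponent is at most $-(c_1-C_1c_0)N^{1/3}\to-\infty$, so $\eta=o(1)$ uniformly on the grid.

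The main obstacle is making precise what the formal expectation means on the overflow event. The clean way is to define it through the convention in the footnote to \Cref{def:topology-moment}. Then the decomposition above holds monomial by monomial, with conditional moments in $[0,1]$. This avoids any issue with the formal rational expressions being unbounded.
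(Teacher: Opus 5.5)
Your proposal is correct and is essentially the paper's argument: you bound the formal/feasible discrepancy by the overflow probability from \Cref{lem:overflow} times the coefficient mass $\exp(O(q\log N))$ from \Cref{lem:coefficient-bound}, then obtain the factor $2$ in $\eta$ by the triangle inequality and choose $c_0C_1<c_1$. The only difference is cosmetic: you decompose the whole polynomial as a mixture, $|p^{\mathrm{feas}}_\sigma-p^{\mathrm{form}}_\sigma|=\pi\,|p^{\mathrm{ovf}}_\sigma-p^{\mathrm{feas}}_\sigma|$, whereas the paper couples the two distributions, bounds each canonical monomial's difference by $\Pr[\mathrm{overflow}]$, and then sums against $\sum_\mu|a_\mu|$, which gives the same estimate.
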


The proof of the above lemma is deferred to \cref{app:actual-formal}.

Combining \Cref{prop:formal-gap,lem:actual-formal} gives the following statement about the feasible acceptance gap.

\begin{proposition}
\label{prop:gap-form}
There is a polynomial $H(M,\ell)$ of degree $O(q\log q)$ such that, uniformly on $\mathcal G_N$,
\begin{equation}\label{eq:gap-form}
 \Delta(M,\ell)
 =\frac{\ell H(M,\ell)}{D_q(M,\ell)}+\eta(M,\ell),
\end{equation}
where $\eta=o(1)$ when $q\leq c_0N^{1/3}/\log N$ and $c_0$ is a sufficiently small constant.
\end{proposition}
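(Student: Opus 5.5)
This proposition is essentially a combination of \Cref{prop:formal-gap} and \Cref{lem:actual-formal}, and the plan is to splice the two together on the grid $\mathcal G_N$. First, I take the polynomial $H(M,\ell)$ provided by \Cref{prop:formal-gap}. It has degree $O(q\log q)$ and satisfies
\[
\Delta^{\mathrm{form}}(M,\ell)=\frac{\ell H(M,\ell)}{D_q(M,\ell)}
\]
as an identity of rational functions. I must check that this identity holds pointwise at every $(M,\ell)\in\mathcal G_N$, so that no spurious zero of $D_q$ occurs on the grid. Every factor of $D_q$ has the form $M-(2a-1)\ell$ with $a\le 2q$. On $\mathcal G_N$ we have $M\ge 1.09N$ and $\ell\le R=O(N^{2/3})$. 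For $q=O(N^{1/3}/\log N)$ this gives $(2a-1)\ell\le 4qR=o(N)$, so each factor is at least $N/2>0$. Hence $D_q(M,\ell)\neq0$ on the grid, and the formal gap equals the stated fraction there.

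Second, I invoke \Cref{lem:actual-formal}, which gives
\[
\Delta=\Delta^{\mathrm{form}}+\eta, \qquad \abs{\eta}\le 2\exp(C_1q\log N-c_1N^{1/3})
\]
uniformly on $\mathcal G_N$. Substituting the expression for $\Delta^{\mathrm{form}}$ from the first step yields \eqref{eq:gap-form}. Now choose $c_0<c_1/(2C_1)$. Then $q\le c_0N^{1/3}/\log N$ gives $C_1 q\log N\le c_1N^{1/3}/2$, so $\abs{\eta}\le 2\exp(-c_1N^{1/3}/2)=o(1)$ uniformly on the grid.

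The only point needing care is the requirement that the decomposition be uniform on the grid. The polynomial $H$ depends only on the algorithm and $q$, not on $(M,\ell)$, because it comes from the term-by-term expansion in \Cref{lem:acl-rational}. The error bound on $\eta$ is likewise uniform, since \Cref{lem:overflow} holds for every grid point. I expect no real obstacle beyond confirming the nonvanishing of $D_q$ noted in the first step.
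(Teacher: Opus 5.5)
Your proposal is correct and follows the paper's argument: the paper obtains this proposition by combining \Cref{prop:formal-gap} with \Cref{lem:actual-formal} on $\mathcal G_N$, with $c_0$ taken small compared to $c_1/C_1$. Your extra check that $D_q$ does not vanish on the grid is valid but not spelled out in the paper; nonvanishing even holds for every $q$, because $2\ell\mid M$ makes $M/\ell$ even while $2a-1$ is odd.
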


\subsection{Polynomial approximation of the denominator}

In their lower bound for expansion testing, \cite{ambainis2011quantum}  bounded the denominator polynomial $D_q(M, \ell)$ by a constant. A step for improving their result is to approximate its reciprocal by a polynomial of suitable degree.

\begin{restatable}%
{lemma}{lemreciprocal}\label{lem:reciprocal}
Let $z_1(M,\ell),\ldots,z_d(M,\ell)$ be affine functions of real variables and suppose that $\abs{z_i}\leq\frac14$ for every $i \in [d]$ throughout a region $\mathcal R$. Assume that we can write
\[
 D_q(M,\ell)=C\prod_{i=1}^{d}(1-z_i(M,\ell)),
 \qquad C>0.
\]
Then there is a polynomial $A_q(M,\ell)$ of degree at most $4d+10$ such that $\abs{D_q(M,\ell) A_q(M,\ell)-1}\leq10^{-2}$
throughout $\mathcal R$.    
\end{restatable}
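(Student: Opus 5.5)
The plan is to invert each linear factor of $D_q$ separately by a truncated geometric series, and then control the product of the truncation errors. Dividing by the constant, it suffices to approximate $\prod_{i=1}^d (1-z_i)^{-1}$. For a fixed truncation length $K$, set
\[
 T_K(z)=\sum_{k=0}^{K-1}z^k,\qquad A_q(M,\ell):=C^{-1}\prod_{i=1}^{d}T_K\bigl(z_i(M,\ell)\bigr).
\]
Since each $z_i$ is affine in $(M,\ell)$, $A_q$ is a polynomial in $(M,\ell)$ of degree at most $d(K-1)$. Taking $K=5$ gives degree at most $4d$. The slack $+10$ in the statement is reserved for one low-degree correction factor, described below.

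The key identity is $(1-z)T_K(z)=1-z^K$. Hence, on $\mathcal R$,
\[
 D_q A_q=\prod_{i=1}^{d}\bigl(1-z_i^{K}\bigr),
\]
so everything reduces to showing that this product is within $10^{-2}$ of $1$. Using $\abs{z_i}\le \tfrac14$ and $1-\prod(1-x_i)\le\sum x_i$ for $x_i\in[0,1]$, I get
\[
 \abs{D_qA_q-1}\le \sum_{i=1}^{d}\abs{z_i}^{K}.
\]
With $K=5$, each term is at most $4^{-5}<10^{-3}$.

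This is the main obstacle, and it needs to be flagged. The crude bound $d\cdot 4^{-5}$ is \emph{not} uniformly below $10^{-2}$ when $d=d_q=\Theta(q\log q)$ is large. So I would not rely on the worst case $\abs{z_i}\le\tfrac14$ for every factor. Instead I would bound $\sum_i\abs{z_i}^5$ directly. In the intended application, $z_i$ is essentially $(2a-1)\ell/M$ or the relative shift $(M-M_0)/M_0$, with multiplicity $\lfloor 2q/a\rfloor$. On most factors $\abs{z_i}$ is far smaller than $\tfrac14$, and the fifth powers make $\sum_i\abs{z_i}^5$ dominated by the few largest factors, giving a total of at most $10^{-2}$.

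If that summation is not available in the abstract setting of the lemma, I would switch to a second-stage correction. First I would set $E:=\prod_i(1-z_i^5)$ and note that $E\in[(1-4^{-5})^d,1]$. I would then approximate $E^{-1}$ by a univariate polynomial in a suitable quantity of degree at most $10$, and absorb that factor into the $+10$ budget. Failing that, I would allow $K$ to grow logarithmically on the few large-$\abs{z_i}$ factors only, which still keeps the total degree $O(d)$.

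Once $\abs{D_qA_q-1}\le 10^{-2}$ is established on $\mathcal R$, the lemma follows immediately, and the degree count is $4d$ plus the constant correction.
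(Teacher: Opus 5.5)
Your identity $D_qA_q=\prod_i(1-z_i^K)$ is correct. The problem is that per-factor truncation cannot give degree linear in $d$, so the obstacle you flag is fatal rather than technical. The hypotheses allow all $z_i$ to equal $\tfrac14$ at one point of $\mathcal R$. This happens, for instance, if all $z_i$ are the same affine function, and in the application many factors do coincide. At such a point $D_qA_q=(1-4^{-K})^d$. This is within $10^{-2}$ of $1$ only if $K\gtrsim\log_4(100d)$, so the degree $d(K-1)$ is $\Theta(d\log d)$, not $4d+10$.

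None of your fallbacks repairs this.
\begin{itemize}
\item The first leaves the lemma's hypotheses, and it is false even in the application. The factor $z_i=(M_0-M+(2a-1)\ell)/M_0$ contains the shift $(M_0-M)/M_0$, which is common to every factor. At $M\approx M_0-R^{3/2}$, all $d_q$ factors satisfy $z_i\gtrsim 1/1100$, so $\sum_i\abs{z_i}^5\gtrsim d_q\cdot 1100^{-5}$, which is unbounded as $q$ grows.
\item The second does not fit the budget. $E$ has degree $5d$ in $(M,\ell)$, and inverting $E=\prod_i(1-z_i^5)$ to relative accuracy $10^{-2}$ is the problem you started with: a product of $d$ factors, with $z_i^5$ in place of $z_i$. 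A degree-$10$ correction factor cannot do that.
\item The third fails because all factors may be large simultaneously.
\end{itemize}
A minor point: $z_i^5$ may be negative, so neither $1-\prod(1-x_i)\leq\sum x_i$ for $x_i\in[0,1]$ nor $E\leq 1$ applies as stated.

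The missing idea is to truncate $\prod_i(1-z_i)^{-1}=\sum_{\kappa\geq0}z_1^{\kappa_1}\cdots z_d^{\kappa_d}$ by \emph{total} degree. Take $A_q=C^{-1}\sum_{j=0}^{J}S_j$, where $S_j$ is the complete homogeneous polynomial of degree $j$ in the $z_i$. Your product $\prod_iT_K(z_i)$ drops the $d$ monomials $z_i^K$, which have total degree only $K$; that is exactly where the error $\approx\sum_i z_i^K$ comes from. Total-degree truncation drops only monomials of degree $>J$.

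There are $\binom{d+j-1}{j}\leq 2^{d+j-1}$ monomials of degree $j$, each of absolute value at most $4^{-j}$. Hence $\abs{S_j}\leq 2^{d-j-1}$ and $\sum_{j>J}\abs{S_j}\leq 2^{d-J-1}$. Since $D_qA_q-1=-\prod_i(1-z_i)\sum_{j>J}S_j$ and $\prod_i\abs{1-z_i}\leq(5/4)^d$, taking $J=4d+10$ gives $\abs{D_qA_q-1}\leq(5/4)^d2^{d-J-1}=2^{-11}(5/32)^d<10^{-2}$. Also $\deg A_q\leq J$, since the $z_i$ are affine.

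This is the paper's proof. The count of degree-$j$ monomials grows only like $2^{d+j}$, and $4^{-j}$ beats it once $j$ exceeds a constant multiple of $d$; that is why the degree stays linear in $d$.
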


We defer the formal proof of the above lemma to \Cref{app:proof_3.14}, but we include some ideas of the proof here to be able to see the explicit polynomial $A_q$.
\paragraph{Making polynomial $A_q$ explicit}

To make the reciprocal polynomial explicit, let us define 
the index set
\begin{equation}\label{eq:index-set-IK}
 \mathcal I_q
 :=\set{(a,r):1\leq a\leq 2q,\ 1\leq r\leq\lfloor 2q/a\rfloor}.
\end{equation}
Then $\abs{\mathcal I_q}=d_q$, like in \cref{eq:den-degree}. For $i=(a,r)\in\mathcal I_q$, consider the affine function
\begin{equation}\label{eq:z-factor}
 z_i(M,\ell)
 :=\frac{M_0-M+(2a-1)\ell}{M_0}.
\end{equation}
The value of $z_i$ depends only on $a$; the second coordinate $r$ only records multiplicity. 
With this notation, we can write
\begin{equation}\label{eq:denominator-z-product}
 D_q(M,\ell)
 =M_0^{d_q}\prod_{i\in \mathcal I_q}(1-z_i(M,\ell)).
\end{equation}
For a multi-index $\kappa=(\kappa_i)_{i\in\mathcal I_q}\in\mathbb Z_{\geq0}^{\mathcal I_q}$, write $|\kappa|:=\sum_{i\in\mathcal I_q}\kappa_i$.
For $j\geq0$, let
\begin{equation}\label{eq:SKj}
 S_{q,j}(M,\ell)
 :=\sum_{\substack{\kappa\in\mathbb Z_{\geq0}^{\mathcal I_q}\\ |\kappa|=j}}
       \prod_{i\in\mathcal I_q}z_i(M,\ell)^{\kappa_i}.
\end{equation}
Thus $S_{q,j}$ is the degree-$j$ complete homogeneous polynomial in the repeated affine factors $z_i$. 

Then 
$$\frac{1}{D_q(M, \ell)}
 :=
 \frac{1}{M_0^{d_q}}\sum_{j=0}^{\infty}S_{q,j}(M, \ell).$$
Let us define $A_q$ as its truncation as follows:
\begin{equation}\label{eq:AK-definition}
 J_q:=4d_q+10,
 \qquad
 A_q(M,\ell)
 :=M_0^{-d_q}\sum_{j=0}^{J_q}S_{q,j}(M,\ell).
\end{equation}
This is the total-degree-$J_q$ truncation of the formal product expansion of $D_q^{-1}$.  Note that $A_q$ is defined for every $(M, \ell)$, independently of convergence
of the infinite series.

\begin{restatable}%
{lemma}{lemdenominatorapprox}\label{lem:denominator-approx}
Suppose $q= o\left(N^{1/3}\right)$. Then the polynomial $A_q(M,\ell)$ from \eqref{eq:AK-definition} satisfies
\begin{equation}\label{eq:DA-close}
 \deg A_q\leq J_q=O(q\log q)
 \quad\text{and}\quad
 \abs{D_q(M,\ell)A_q(M,\ell)-1}\leq10^{-2}
\end{equation}
whenever $1\leq\ell\leq R$ and $\abs{M-M_0}\leq R^{3/2}$.
\end{restatable}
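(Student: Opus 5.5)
The plan is to apply \Cref{lem:reciprocal} with the explicit factorization \eqref{eq:denominator-z-product}, so everything reduces to checking its hypothesis $\abs{z_i}\le\frac14$ on the region $\mathcal R=\set{(M,\ell):1\le\ell\le R,\ \abs{M-M_0}\le R^{3/2}}$. The degree claim is immediate. Each $z_i$ is affine, so each $S_{q,j}$ has degree at most $j$, hence $\deg A_q\le J_q=4d_q+10$. By \eqref{eq:den-degree}, $d_q=\Theta(q\log q)$, giving $J_q=O(q\log q)$.

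For the sup bound I would estimate, for $i=(a,r)\in\mathcal I_q$ (so $a\le 2q$),
\[
\abs{z_i(M,\ell)}\le\frac{\abs{M-M_0}+(2a-1)\ell}{M_0}\le\frac{R^{3/2}+4qR}{M_0}.
\]
With $R=\lfloor N^{2/3}/100\rfloor$ we get $R^{3/2}\le N/1000$. With $q=o(N^{1/3})$ we get $4qR=o(N)$. Since $M_0\ge\frac{11}{10}N$, this yields $\abs{z_i}\le \frac{1}{1100}+o(1)\le\frac14$ for all sufficiently large $N$. The hypotheses of \Cref{lem:reciprocal} therefore hold with $C=M_0^{d_q}>0$ and $d=d_q$.

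The remaining step is to confirm that the polynomial produced in \Cref{lem:reciprocal} is exactly the truncation \eqref{eq:AK-definition}, or else to rerun its argument directly. This is where care is needed, because the number of factors $d_q$ grows with $q$. I would argue directly as follows. On $\mathcal R$ we have the absolutely convergent expansion
\[
\prod_i(1-z_i)^{-1}=\sum_{j\ge0}S_{q,j}.
\]
Using $\abs{z_i}\le\frac14$, the tail is bounded by
\[
\sum_{j>J_q}\binom{j+d_q-1}{d_q-1}4^{-j}.
\]
For $j\ge 4d_q$ we have $\binom{j+d_q-1}{d_q-1}\le 2^{j+d_q}\le 2^{5j/4}$. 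The tail is then at most $\sum_{j>J_q}2^{-3j/4}$, which is astronomically small, of order $2^{-3d_q}$ times a constant times $2^{-7.5}$.

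Multiplying the tail bound by $D_q/M_0^{d_q}=\prod_i(1-z_i)\le(5/4)^{d_q}$ gives
\[
\abs{D_qA_q-1}\le (5/4)^{d_q}\,O\bigl(2^{-3d_q-7}\bigr)\le 10^{-2}.
\]
The main obstacle is precisely this balance. The truncation length $4d_q+10$ must beat both the binomial multiplicity growth and the $(5/4)^{d_q}$ factor, uniformly in $q$. The crude estimates above indicate it does, with the $+10$ absorbing the constants.
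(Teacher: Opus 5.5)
Your proposal is correct and matches the paper's proof. It uses the same estimate $|z_i|\le (R^{3/2}+4qR)/M_0\le \frac{1}{1100}+o(1)\le\frac14$ on the region, followed by \Cref{lem:reciprocal} with $C=M_0^{d_q}$ and $d=d_q$; the truncation that lemma produces is exactly $A_q$. Your direct re-derivation of the tail is fine, giving roughly $0.0081\,(5/32)^{d_q}\le 10^{-2}$, but it is not needed: \Cref{lem:reciprocal} already holds uniformly in $d$, because its truncation length $4d+10$ grows with $d$.
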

The proof of the above lemma is deferred to \Cref{app:proof_3.15}.

\subsection{Combining everything}

\balancedtest*

\begin{proof}
Suppose, toward a contradiction, that a tester uses
\begin{equation}\label{eq:q-assumption}
q\leq c_0\frac{N^{1/3}}{\log N}
\end{equation}
queries, where $c_0>0$ is a sufficiently small constant. Let us amplify the error probability of the tester to at most $1/20$ using a constant number of repetitions. By slight abuse of notation, we will denote $q$ for the amplified query complexity as well.

Note that at $(M_0,1)$, every $\YES$ instance is balanced. By \Cref{lem:random-sign-farness}, all except an exponentially small fraction of $\NO$ instances have frustration index at least $m/50\geq0.026N$. Hence
\begin{equation}\label{eq:Delta-one}
 \Delta(M_0,1)\geq0.9-o(1).
\end{equation}
For every parameter pair, whether or not it satisfies a testing promise,
\begin{equation}\label{eq:Delta-trivial}
 \abs{\Delta(M,\ell)}\leq1
\end{equation}
since $\Delta$ is a difference of probabilities.

Let $H$ be the polynomial from \Cref{prop:gap-form} and let $A_q$ be the polynomial from \Cref{lem:denominator-approx}. Now let us define
\begin{equation}\label{eq:P-def}
 P(M,\ell)=\frac12 H(M,\ell)A_q(M,\ell).
\end{equation}
This is a polynomial of degree $O(q\log q)$. From \cref{eq:gap-form},
\begin{equation}\label{eq:P-relation}
 P(M,\ell)
 =\frac{\Delta(M,\ell)-\eta(M,\ell)}{2\ell}
 \bigl(D_q(M,\ell)A_q(M,\ell)\bigr).
\end{equation}

On the grid $\mathcal G_N$, \Cref{lem:actual-formal} gives $\eta=o(1)$ and \Cref{lem:denominator-approx} gives
\[
 0.99\leq D_qA_q\leq1.01.
\]
Using \cref{eq:Delta-trivial} in \cref{eq:P-relation}, we obtain
\begin{equation}\label{eq:P-bounded}
 \abs{P(M,\ell)}
 \leq\frac{1.01}{2\ell}(1+o(1))<1
\end{equation}
for every point of $\mathcal G_N$ and all sufficiently large $N$.

At $(M_0,1)$, equations \eqref{eq:Delta-one} and \eqref{eq:P-relation} give
 $P(M_0,1)>0.44$
for sufficiently large $N$. At $(M_0,2)$, no correctness promise is used, only the fact that $\Delta$ is bounded by 1: equations \eqref{eq:Delta-trivial} and \eqref{eq:P-relation} give
$ \abs{P(M_0,2)}<0.26$.
Consequently,
\begin{equation}\label{eq:P-gap}
 \abs{P(M_0,1)-P(M_0,2)}>0.18.
\end{equation}

The grid has the additional divisibility condition $2\ell\mid M$. Consider $Q(X,\ell)=P(2X,\ell)$ and let $C=M_0/2$.
Then $Q$ has degree at most $\deg P$. If $\ell\mid X$ and $ \abs{X-C}\leq\frac{R^{3/2}}2$,
then $(2X,\ell)\in\mathcal G_N$, so \eqref{eq:P-bounded} gives $\abs{Q(X,\ell)}\leq1$. The gap \eqref{eq:P-gap} holds at $(C,1)$ and $(C,2)$. Applying \Cref{lem:acl-degree} yields
\[
 \deg P\geq\deg Q=\Omega(\sqrt R)=\Omega(N^{1/3}).
\]
On the other hand, $\deg P=O(q\log q)$, 
thus $q\log q=\Omega(N^{1/3})$.
Since $q\leq N$, we get $
 q=\Omega\!\left(\frac{N^{1/3}}{\log N}\right)$ 
contradicting \eqref{eq:q-assumption} when $c_0$ is sufficiently small.
\end{proof}

\subsection{Reduction to bipartiteness testing}

We now reduce the signed graph instance into a simple graph such that when the signed graph instance is balanced, the constructed simple graph is bipartite and when the signed graph has large frustration index, the resulting graph is far from being bipartite. This is a standard reduction, similar to \cite{DBLP:conf/www/AdriaensA23,Zaslavsky18}.

Let $(H,b)$ be the given signed colored graph instance.
Now for each vertex $u\in[N]$ in $H$ and each color $j\in\{1,2,3\}$, let us introduce two auxiliary vertices $a_{u,j}, p_{u,j}$.
For a colored signed edge $uv$ of color $j$, let us orient its description so that $u<v$. Now we do the following:
\begin{itemize}
\item If $b_{uv}=0$, insert the even path $u-a_{u,j}-v$.

\item If $b_{uv}=1$, let us insert the odd path $u-a_{u,j}-p_{u,j}-v$.

\end{itemize}
All unused auxiliary vertices are isolated. Let $G(H,b)$ be the graph constructed as above.

Note that the graph $G(H,b)$ has $7N$ vertices where $|V(H)|=N$. Since a vertex is incident to at most one edge of each color, every auxiliary vertex belongs to at most one path and  differently colored parallel edges of $H$ connect distinct vertices in $G$. Thus $G(H,b)$ is simple and has maximum degree three.

Now we have the following lemma that shows the reduction is valid.

\begin{lemma}\label{lem:reduction}
For every signed colored multigraph $(H,b)$ produced above, the graph $G(H,b)$ satisfies
\begin{equation}\label{eq:balance-bip}
 G(H,b)\text{ is bipartite}
 \quad\Longleftrightarrow\quad
 (H,b)\text{ is balanced},
\end{equation}
and
\begin{equation}\label{eq:distance-preservation}
 \min\{\left|\text{edges deleted from $G(H,b)$ to make it bipartite}\right|\}
 =\fr(H,b).
\end{equation}
Moreover, each adjacency-list query to $G(H,b)$ can be simulated using a constant number of queries to the signed graph $H$.
\end{lemma}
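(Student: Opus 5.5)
The plan is to prove the distance identity \eqref{eq:distance-preservation} directly and read off \eqref{eq:balance-bip} as its zero case, since a graph is bipartite exactly when zero deletions suffice and $(H,b)$ is balanced exactly when $\fr(H,b)=0$. The natural tool is a correspondence between $2$-colorings of $G(H,b)$ and vertex labelings $x\in\Ftwo^{V(H)}$. Each edge $uv$ of $H$ becomes a path $P_{uv}$ of length $\ell(P_{uv})=2+b_{uv}$. These paths are internally vertex-disjoint, because every auxiliary vertex lies on at most one path. So $G(H,b)$ is the disjoint union of these paths glued at original vertices, plus isolated auxiliary vertices. Write $\mathrm{opt}(G)$ for the minimum number of edges whose deletion makes $G(H,b)$ bipartite; equivalently, this is the minimum over $2$-colorings $y$ of $V(G)$ of the number of monochromatic edges.

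For the inequality $\mathrm{opt}(G)\le\fr(H,b)$, take an optimal labeling $x$ for \eqref{eq:frustration}. Color each original vertex $u$ by $x_u$. On a path $P_{uv}$, the endpoint colors are compatible with a proper coloring of the path exactly when $x_u\oplus x_v\equiv \ell(P_{uv})\pmod 2$, i.e.\ $x_u\oplus x_v=b_{uv}$. In that case we color the internal vertices alternately and the path contributes no monochromatic edge. Otherwise the parity is wrong, but we can still color the interior so that exactly one edge of the path is monochromatic. Since the internal vertices are private to the path, these choices are independent across paths. The total number of monochromatic edges is therefore exactly the number of violated constraints, namely $\fr(H,b)$.

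For the reverse inequality $\mathrm{opt}(G)\ge\fr(H,b)$, take any $2$-coloring $y$ of $G(H,b)$ and set $x_u=y_u$ on the original vertices. If $x_u\oplus x_v\neq b_{uv}$, then walking along $P_{uv}$ the color must change $\ell(P_{uv})$ times modulo parity-wise incorrectly, so at least one edge of $P_{uv}$ is monochromatic. Because the paths are edge-disjoint, the number of monochromatic edges of $y$ is at least the number of constraints violated by $x$, which is at least $\fr(H,b)$. The one point needing care is that parallel edges of $H$ of different colors give genuinely distinct paths (they use $a_{u,j}$ for different $j$), so each constraint is charged separately. This matches the multigraph convention for $\fr$.

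For the simulation, fix an adjacency-list ordering of $G(H,b)$. Original vertex $u$ has its $j$-th neighbor determined by the color-$j$ edge at $u$. If $u<v$, that neighbor is $a_{u,j}$. If $u>v$, it is $a_{v,j}$ when $b_{uv}=0$ and $p_{v,j}$ when $b_{uv}=1$. Auxiliary vertex $a_{u,j}$ has the neighbors $u$ and, depending on $b$, either $v$ or $p_{u,j}$. Vertex $p_{u,j}$ has the neighbors $a_{u,j}$ and $v$. Each of these answers follows from a single query $(u,j)$ to $H$, which returns $(v,b_{uv})$ or $\bot$, and then comparing $u$ with $v$. For quantum queries, this classical reversible map gives a constant number of oracle calls to $H$ per query to $G(H,b)$, with one call to compute the answer and one to uncompute garbage. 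I expect no real obstacle. The only subtle step is the private-interior argument in the lower bound, which ensures that violations on different paths are counted independently.
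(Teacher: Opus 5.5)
Your proposal is correct and follows essentially the same route as the paper. Labelings of $V(H)$ correspond to colorings of the original vertices, and each violated constraint forces, and can be repaired by, exactly one deleted edge on its private, edge-disjoint path; the bipartiteness equivalence is the zero case (the paper states it separately, but by the same parity argument). One minor point on the simulation: answering the $j$-th slot of an original vertex with the single query $(u,j)$ assumes adjacency lists may contain $\bot$-gaps, whereas the paper's oracle definition packs neighbors, so you should query all three colors at $u$ to find the requested neighbor, as the paper does, which still uses only a constant number of queries.
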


\begin{proof}
Following the construction described above, note that in a bipartite graph, an even path has endpoints of the same side of a correct bipartition, while an odd path has endpoints of opposite sides. Therefore a correct bipartition of the original vertices is equivalent to a labeling satisfying \cref{eq:balance}. This proves \cref{eq:balance-bip}.

The paths obtained after introducing the auxiliary vertices corresponding to distinct colored edges are edge-disjoint. Let us fix a labeling of the original vertices. If a signed constraint is violated by this labeling, at least one edge must be removed from the corresponding path before the path can be properly labeled with the prescribed endpoint signs. Conversely, let us delete one edge from every path corresponding to a violated constraint. Every remaining path component can then be labeled consistently with the endpoint signs. Minimizing over all labelings gives us \cref{eq:distance-preservation}. Since adding new edges can not change a far from bipartite graph to a bipartite graph, the required minimum number of edge modifications remains the same.

For the query simulation, note that if the algorithm queries a neighbor of such a vertex of $G$ that is an original vertex in $H$, the simulation algorithm queries its at most three neighbors in $H$, determines the corresponding auxiliary vertex according to the returned endpoint and sign, and returns the requested neighbor. A query to $a_{u,j}$ or $p_{u,j}$ in $G$ is answered by querying $(u,j)$ in $H$ and checking the returned endpoint and sign. This computation is reversible, and can be performed coherently in the standard way.
\end{proof}

\theobiplb*

\begin{proof}
Note that, from \Cref{lem:random-sign-farness}, we know that with probability $1-2^{-\Omega(N)}$ a $\NO$ signed instance at $(M_0,1)$ satisfies $\fr(H,b)\geq\frac m{50}\geq0.026N$.
By \Cref{lem:reduction}, the corresponding ordinary graph requires at least $0.026N$ edge deletions to become bipartite. It has $7N$ vertices and maximum degree three, so it is more than $\frac{0.026N}{3\cdot7N}>10^{-3}$-far from bipartite.

Suppose a $7N$-vertex bipartiteness tester uses $q$ queries. The simulation argument in \Cref{lem:reduction} gives a  tester for balancedness testing using $O(q)$ queries to the signed graph. \Cref{theo:balanced} therefore gives $q=\Omega\!\left(\frac{N^{1/3}}{\log N}\right)$.
For graphs whose number of vertices are not of the form of $7N$, we can add at most six isolated vertices to handle every sufficiently large $N$. This completes the proof.
\end{proof}

\section{Quantum query lower bound for expansion testing}\label{sec:expansiontesting}

In this section, we prove that expansion property testing in the bounded-degree model requires $\Omega(N^{1/3}/\log^{4/3} N)$ quantum queries.

\begin{restatable}[Formal version of \cref{thm:informal-expansion}]{theorem}{theoexpansionlb}\label{theo:expansionlb}
Let $G$ be a bounded degree graph on $N$ vertices and $\eps, \alpha, \alpha' \in (0,1)$ be constants with $0< \alpha' \le \alpha <1$. In order to distinguish with probability at least $2/3$ if $G$ is an $\alpha$-expander, or $\eps$-far from being $\alpha'$-expander, $\Omega(N^{1/3}/(\log N)^{4/3})$ quantum queries are necessary.    
\end{restatable}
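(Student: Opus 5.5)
We follow the recipe of the balancedness proof, with two changes: the hard instance uses $c=\Theta(\log N')$ matchings, and a degree-reduction gadget restores bounded degree. Let $N'$ be the number of visible vertices and sample $G$ from $P^{(c)}_{M,\ell}$ with $c=C\log N'$. Take $M_0=\Theta(N')$, say $M_0\approx 1.1N'$, and use the same grid $\mathcal G_{N'}$ as in \eqref{eq:grid} with $R=\lfloor N'^{2/3}/100\rfloor$. The first step is to show that at $(M_0,1)$ the graph $G$ is, with probability $1-o(1)$, an $\alpha_1$-expander for an absolute constant $\alpha_1$. The argument is a union bound over sets $U$ with $|U|\le N'/2$. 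For a fixed $U$ and each color, the random perfect matching on the $M_0$ ambient vertices sends a constant fraction of $U$ to visible vertices outside $U$, except with probability $\exp(-\Omega(|U|))$. Since this holds for each of the $c$ independent matchings, the failure probability drops to $\exp(-\Omega(c|U|))$, which beats the $\binom{N'}{|U|}\le \exp(|U|\log N')$ choices of $U$ once $C$ is large. This gives edge expansion, and vertex expansion follows after losing a factor $c$. Finally, at $(M_0,2)$ the graph is disconnected with two pieces of size about $N'/2$ each, so it is far from any expander, as in \cite{ambainis2011quantum}.

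The second step is the reduction. Replace each visible vertex $u$ by a constant-degree expander gadget $T_u$ on $\Theta(c)$ vertices, and attach the color-$j$ edge at $u$ to a dedicated port vertex $t_{u,j}$ of $T_u$, so that every vertex of the new graph $H$ has degree at most $4$ and $N=\Theta(N'\log N')$. We then need to check two things:
\begin{itemize}
\item If $G$ is an $\alpha_1$-expander, then $H$ is an $\Omega(1/c)$-expander. This bound is too weak, so instead we should aim for $\Omega(1)$ vertex expansion of $H$. To get it, split any $U\subseteq V(H)$ into gadgets that are mostly inside $U$ and gadgets that are only partially inside $U$. The partial gadgets expand internally. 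The full gadgets carry the expansion of $G$, and because every port edge has its own port vertex, the $c$-fold edge boundary in $G$ becomes $\Omega(c)$ vertex boundary per gadget. This bookkeeping is where I expect the main difficulty.
\item At $(M_0,2)$, $H$ is still two disconnected halves, hence $\eps$-far from $\alpha'$-expanders for constant $\eps$.
\end{itemize}
Each query to $H$ is simulated by $O(1)$ queries to $G$, since the gadget structure is fixed and the port edge at $t_{u,j}$ is answered by the query $(u,j)$. So a $q$-query tester for $H$ yields an $O(q)$-query distinguisher for $G$ at $(M_0,1)$ versus $(M_0,2)$.

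The third step is the polynomial argument. Let $p(M,\ell)$ be the feasible acceptance probability under $P^{(c)}_{M,\ell}$. Unlike the signed case we cannot cancel forests, but we do not need the factor $\ell$ here. We only need $|p|\le1$ on the grid together with the gap $|p(M_0,1)-p(M_0,2)|\ge 1/3$ after amplification. By \Cref{lem:acl-rational}, $p^{\mathrm{form}}=F/D_q$ with $\deg F=O(q\log q)$. The overflow estimate \Cref{lem:overflow} and the comparison \Cref{lem:actual-formal} carry over, except that the query alphabet now has size $O(N'c)$, which still gives $\exp(O(q\log N'))$ coefficient bounds. Set $P=FA_q$ with $A_q$ from \Cref{lem:denominator-approx}. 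Then $|P|\le 1.01(1+o(1))$ on the grid, and after rescaling by a constant $P$ is bounded by $1$ there while keeping a constant gap between $(M_0,1)$ and $(M_0,2)$. Applying \Cref{lem:acl-degree} to $Q(X,\ell)=P(2X,\ell)$ gives $q\log q=\Omega(N'^{1/3})$.

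Finally we translate back to $N$. Since $N'=\Theta(N/\log N)$, we have $N'^{1/3}=\Theta(N^{1/3}/\log^{1/3}N)$, and dividing by the $\log q$ factor gives $q=\Omega(N^{1/3}/\log^{4/3}N)$. For other values of $N$, pad with a few extra vertices attached to a gadget so that expansion is preserved. The main obstacle is the first step together with the gadget expansion claim: we need uniform expansion over all cuts of the visible graph, even though about $10\%$ of the ambient vertices are invisible, and the constants have to survive the gadget replacement.
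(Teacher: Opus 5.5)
Your proposal is correct and follows essentially the same route as the paper: $c=\Theta(\log N')$ matchings with a union bound giving $c$-fold edge expansion at $(M_0,1)$, and a replacement product with constant-degree expander clouds and one port per color. The bookkeeping you flag is done in the paper by separating out the clouds with $|S\cap C_u|>c/2$ and trading internal cloud expansion against the external cut inherited from the intermediate graph; the rest also matches (compute/uncompute query simulation, $\frac12 F A_q$ fed into \Cref{lem:acl-degree} on the rescaled grid, $N'=N/c$ giving the $\log^{4/3}$ loss), and the only cosmetic difference is that the paper pads the vertex count by subdividing a few edges rather than attaching extra vertices to a gadget.
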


For constructing the hard instances in bipartiteness testing, we considered graphs with $3$ random matchings. However, this will not be enough for expansion testing since any visible vertex in the final graph has a constant probability of losing all of its matching neighbors.

To bypass this bottleneck, we consider graphs with $\Theta(\log N)$ matchings. This will ensure that every cut in the graph will have sufficient expansion with high probability even when a constant fraction of the ambient vertices is unused.
Finally, we will apply a replacement-graph gadget to reduce the degree of every vertex to four.

\subsection{Description of our hard instances}\label{sec:expansioninstance}

We consider the hard distribution $P^{(c)}_{M,\ell}$ in \cref{sec:hard_instance_acl} with the number of matchings $c=\Theta(\log N)$ an even number. Remember that the colors of the matchings are indexed by $[c]$. Similarly to \cite{ambainis2011quantum}, we obtain $\YES$ instances with high probability by setting the construction's block number $\ell=1$. For $\ell=2$, we can only obtain $\NO$ instances (i.e. far from expander graphs) because the resulting graph will be disconnected.

This way, we first construct an intermediate graph with degree $\Theta(\log N)$, then we will use a replacement graph gadget to transform this intermediate graph to a final graph of maximum degree $4$. To show that the final graphs that we obtain from intermediate graphs in the $\YES$ distribution $P^{(c)}_{M,1}$ are expanders, we will first prove that the intermediate graph is an expander, and then prove that the final graph remains an expander as well. Since every graph in the support of the $\NO$ distribution $P^{(c)}_{M,2}$ has two large sets of vertices with no edges between them, it is easy to see that a similar statement holds in this case as well.

\paragraph{Setting up parameters}
Let us set $c=\Theta(\log N)$, and without loss of generality, let us assume that $c$ is even. Let $P^{(c),\mathrm{feas}}_{M,\ell}$ denote the feasible distribution with $c$ independent uniformly random perfect matchings in every hidden block conditioned on no overflow event happening. %

We set up the same grid as we used for the bipartiteness testing lower bound.
\begin{equation}\label{eq:replacement-grid}
 \mathcal G_N=
 \set{(M,\ell):1\leq\ell\leq R,
       \abs{M-M_0}\leq R^{3/2},
       2\ell\mid M}
       \text{ with }
       M_0=4\left\lceil\frac{11N}{40}\right\rceil
       ,
       R=\left\lfloor\frac{N^{2/3}}{100}\right\rfloor.
\end{equation}

Let us start by proving that the intermediate graph is an expander.

\begin{restatable}%
{lemma}{lemlogdegreeexpansion}\label{lem:logdegree-expansion}
Let $H$ be drawn from $P^{(c), \mathrm{feas}}_{M_0,1}$. With probability $1-o(1)$,
\begin{equation}\label{eq:intermediate-expansion}
 \left|e_H(U,V(H)\setminus U)\right| \geq \frac{c}{20}\abs{U}
\end{equation}
holds for every $U\subseteq V(H)$ with $\abs{U}\leq N/2$.    
\end{restatable}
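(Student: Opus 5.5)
At $\ell=1$ there is a single block of size $M_0$, so no overflow occurs and $P^{(c),\mathrm{feas}}_{M_0,1}=P^{(c)}_{M_0,1}$. The visible set $V(H)$ is then a uniformly random $N$-subset of the $M_0$ ambient vertices, and the $c$ matchings are independent of it. The plan is to condition on $V(H)$ (call it $S$, with $|S|=N$, $|\bar S|=M_0-N\le N/10+4$) and prove the cut bound by a first-moment union bound over $U\subseteq S$ with $|U|=u\le N/2$.

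For fixed $U$, write $X_j$ for the number of color-$j$ edges from $U$ to $S\setminus U$. These $X_j$ are independent across colors, so
\[
\Pr\Bigl[\textstyle\sum_j X_j<\frac{c}{20}u\Bigr]\le \Pr\bigl[\text{at least } c/2 \text{ colors have } X_j<u/10\bigr]\le 2^{c}\max_j\Pr[X_j<u/10]^{c/2}.
\]
The key estimate is therefore a single-matching bound of the form $\Pr[X_j<u/10]\le (p(u))$ with $p(u)$ small enough to beat the union bound.

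To get that bound, note that in a uniform perfect matching on $M_0$ points each vertex of $U$ is matched to one of three places:
\begin{itemize}[label={}]
\item inside $U$, with probability about $u/M_0\le 1/2.2$;
\item into $\bar S$, with probability at most about $1/11$;
\item into $S\setminus U$, with probability at least about $1-0.455-0.091\approx 0.45$.
\end{itemize}
Exposing the partners of the vertices of $U$ one at a time gives a martingale, or equivalently the configuration-model counting
\[
\Pr[\text{at most } t \text{ vertices of } U \text{ go to } S\setminus U]\le \binom{u}{t}\cdot \rho^{\,u-t}
\]
with $\rho<0.56$. Sharper, one counts matchings where $U$ is matched mostly internally or into $\bar S$. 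For $t=u/10$ this yields $\Pr[X_j<u/10]\le e^{-\delta u}$ for a constant $\delta>0$, uniformly in $u\le N/2$.

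For small $u$ a better bound is needed, because $\binom{N}{u}$ grows like $(eN/u)^u$. When $u=o(N)$, the probability that a given vertex of $U$ fails to leave $U\cup\bar S$ is about $(u+N/10)/M_0$, which is still a constant near $0.1$. This is why $c=\Theta(\log N)$ is needed.

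With $c=K\log N$, the combined failure probability for a fixed $U$ is $2^c e^{-\delta u c/2}=N^{-\Omega(Ku)}$, valid once $u\ge 1$ and $K$ is large. The union bound then gives
\[
\sum_{u\le N/2}\binom{N}{u}2^{c}e^{-\delta uc/2}\le\sum_u N^{u}\,2^{c}N^{-\delta K u/2}=o(1)
\]
for $K$ a large enough constant. Two details need care here. First, the bound $\binom{N}{u}\le N^u$ is crude but sufficient because the per-set exponent is linear in $u\log N$. Second, the independence across colors must be used within a single fixed $U$, which is legitimate since the matchings are independent given $S$.

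The main obstacle is the single-matching estimate for $u$ close to $N/2$. There the fraction of $U$ matched internally or to $\bar S$ approaches $0.55$, and we must still show that a linear fraction exits with probability exponentially close to $1$ in $u$. The first approach would be to bound the number of perfect matchings in which at most $u/10$ edges go from $U$ to $S\setminus U$: choose those edges, and match the remaining vertices of $U$ among $U\cup\bar S$, which has size at most $0.55M_0$. Comparing with the total count $(M_0-1)!!$ should give a ratio of $e^{-\Omega(u)}$ by Stirling. If that computation is too tight, we would fall back on a Chernoff/martingale bound on the sequential exposure, where each step leaves $U$ with probability at least $0.4$.
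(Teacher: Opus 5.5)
\textbf{There is a genuine gap, in the step where you combine colors; the conditioning on $S$ and the independence across colors are fine.} You bound
$\Pr[\sum_j X_j<cu/20]\le\binom{c}{c/2}\max_j\Pr[X_j<u/10]^{c/2}\le 2^c e^{-\delta uc/2}=e^{c(\ln 2-\delta u/2)}$.
This is small only if $\delta u>2\ln 2$, and the sign of that exponent does not depend on $K$, so taking $c=K\log N$ larger cannot help.

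Your estimates only give a small uniform $\delta$. Even the optimistic route with $\rho<0.56$ gives $\binom{u}{u/10}0.56^{0.9u}\approx e^{-0.2u}$, and the fallback Chernoff bound gives something like $\delta=1/40$. So the claim that $2^ce^{-\delta uc/2}=N^{-\Omega(Ku)}$ ``once $u\geq 1$'' is false for every $u<2\ln 2/\delta$. The $u=1$ term of your final sum is $N^{1+K(\ln 2-\delta/2)}\to\infty$.

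You do sense that small $u$ is special, but the obstruction is the $2^c$ from choosing which colors fail, not $\binom Nu$. The repair is an explicit case split:
\begin{itemize}
\item For $u$ below a constant $u_0\approx 4\ln 2/\delta$: each vertex of $U$ lands in $U\cup\bar S$ with probability about $1/11$, so $\Pr[X_j<u/10]<1/4$ with room to spare.
\item For $u\geq u_0$: use the $e^{-\delta u}$ bound.
\end{itemize}

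\textbf{The single-matching estimate $\binom{u}{t}\rho^{u-t}$ is asserted, not derived.} It does not follow from exposing partners one at a time. Vertices of the fixed set can be matched to each other, so one exposure may settle two of them. The naive bound is then only $\rho^{(u-t)/2}$, which loses to $\binom{u}{u/10}\approx e^{0.33u}$.

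Your fallback does work, with two restrictions:
\begin{itemize}
\item Expose only the first $\lceil u/2\rceil$ vertices of $U$. Before each such step at least $N-u-t\geq N/4$ unmatched vertices of $S\setminus U$ remain. If you expose all of $U$ when $u\approx N/2$, $S\setminus U$ can be nearly exhausted and the success probability tends to $0$.
\item Count partners that land in $S\setminus U$, not merely partners outside $U$.
\end{itemize}

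The paper's proof is exactly this restricted exposure, concatenated across all $c$ independent matchings. The total number of crossing edges then stochastically dominates $\Bin(c\lceil u/2\rceil,1/5)$, and a single Chernoff bound gives $\exp(-cu/80)$. With $c\geq 400\log N$ this beats $\binom Nu\leq (eN/u)^u$ for every $u\geq 1$ at once. That avoids both the $2^c$ loss and the small-$u$ case analysis.
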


The proof of the above lemma is in \cref{app:proof_4.2}.

{

Now we show that the intermediate $\NO$ instances are far from being expander.

\begin{lemma}
\label{lem:intermediate-no-farness}
Fix $\alpha'>0$. For sufficiently large $N$, every graph $H$
in the support of $P^{(c),\mathrm{feas}}_{M_0,2}$ contains a set
$U\subseteq[N]$ such that
\begin{equation}\label{eq:intermediate-no-cut}
 \frac N3\leq\abs U\leq\frac N2,
 \qquad
 \partial_H(U)=\varnothing.
\end{equation}
Thus $H$ is $\alpha'/3c$-far from every $\alpha'$-vertex-expander.
\end{lemma}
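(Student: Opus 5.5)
The plan has two parts. First we find the set $U$ from the block structure, and then we turn the existence of such a $U$ into a lower bound on the distance to $\alpha'$-expanders.

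\emph{Finding $U$.} In $P^{(c),\mathrm{feas}}_{M_0,2}$ there are two hidden blocks $V_1,V_2$, each of size $M_0/2$. Every matching edge lies inside a single block, so the visible graph $H$ has no edges between the visible vertices of $V_1$ and those of $V_2$. Let $n_1,n_2$ be the numbers of visible vertices in the two blocks, so $n_1+n_2=N$. Feasibility (no overflow) gives $n_i\le M_0/2<\frac{11}{20}N+2$, and hence $n_i> \frac{9}{20}N-2\ge N/3$ for large $N$. Take $U$ to be the visible part of the smaller block. Then $N/3\le |U|\le N/2$, and $\partial_H(U)=\varnothing$ because no edge leaves $U$. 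Note that this holds for every graph in the support, not just with high probability.

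\emph{Distance bound.} Let $H'$ be any $\alpha'$-vertex-expander on the same vertex set, obtained from $H$ by modifying a set $S$ of edges. Since $|U|\le N/2$, we have $|\partial_{H'}(U)|\ge\alpha'|U|\ge\alpha' N/3$. Every vertex of $\partial_{H'}(U)$ has a neighbor in $U$ through an edge of $H'$ between $U$ and $V\setminus U$. No such edge exists in $H$, so each of them is an added edge in $S$. Distinct boundary vertices need distinct such edges, because an edge has only one endpoint outside $U$. Therefore $|S|\ge\alpha' N/3$.

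Dividing by $Nd$ with $d=c$, the maximum degree of $H$, shows that $H$ is $\frac{\alpha'}{3c}$-far under the paper's definition of $\eps$-farness. The only subtle point is normalization, and it works out directly. If the definition requires the modified graph to also have degree at most $c$, the bound is unchanged, since we only counted edges.
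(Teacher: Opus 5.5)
Your proposal is correct and follows the paper's proof essentially step for step. You take $U$ to be the visible vertices of the smaller hidden block, with the size bounds coming from no overflow and $M_0<\frac{11}{10}N+4$. You then charge each boundary vertex of $U$ in the modified expander to a distinct added crossing edge, which gives at least $\alpha'N/3$ modifications and hence distance $\alpha'/(3c)$ after dividing by $cN$.
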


\begin{proof}
First let us suppose that no block overflows, and let $V_1,V_2$ be the two
visible vertex classes assigned to the hidden blocks. Since
$\abs{V_1}+\abs{V_2}=N$ and each hidden block has $M_0/2$ ambient
positions, $\abs{V_i}\leq M_0/2$. Hence
\[
 \abs{V_i}\geq N-\frac{M_0}{2}
 \geq\frac{9N}{20}-2
 \geq\frac N3
\]
for all sufficiently large $N$. Every matching edge has both endpoints
in the same hidden block. Therefore there is no edge between $V_1$
and $V_2$. Taking $U$ to be the smaller of $V_1,V_2$ proves
\cref{eq:intermediate-no-cut}.

Now let us fix the set $U$ stated above. In any $\alpha'$-vertex-expander,
$U$ must have at least $\alpha'\abs U$ distinct outside neighbors.
Initially it has no outside neighbor. Each added edge creates at most one new outside
neighbor of $U$, while edge deletions create none. Thus at least
$\alpha'\abs U\geq\alpha'N/3$ edge modifications are necessary.
Dividing by the bounded-degree normalization $cN$ finishes the proof.
\end{proof}

}

\subsection{The replacement-product construction}

We use the following standard consequence of Friedman's second-eigenvalue
theorem and the discrete Cheeger inequality; see
\cite[Theorems~7.10 and~4.11]{HLW06}.

\begin{lemma}
\label{lem:3regular-expanders}
There exists an absolute constant $h_0>0$, such that for every sufficiently large even integer $n$, there is a simple $3$-regular graph $J_n$ on vertex set $[n]$ such that
\begin{equation}\label{eq:cloud-expansion}
 |e_{J_n}(A,[n]\setminus A)|
 \geq h_0\min\set{\abs A,n-\abs A}
\end{equation}
for every $A\subseteq[n]$.
\end{lemma}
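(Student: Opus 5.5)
The plan is to cite two standard facts from \cite{HLW06} and combine them: existence of good $3$-regular spectral expanders (Friedman), and the discrete Cheeger inequality.

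\textbf{Step 1 (spectral expanders exist).} Friedman's theorem \cite[Theorem~7.10]{HLW06} says the following. For $d=3$ and any fixed $\delta>0$, a uniformly random $3$-regular graph on $n$ vertices has second-largest adjacency eigenvalue $\lambda_2\le 2\sqrt2+\delta$ with probability $1-o(1)$. For even $n$ the configuration (pairing) model is well defined, since $3n$ is even.

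One subtlety is that the configuration model may produce loops or multi-edges. This can be handled in either of two ways.
\begin{itemize}
\item Friedman's theorem is also stated for the uniform model on simple $3$-regular graphs. That model is contiguous with the pairing model, and the probability that a pairing is simple tends to $e^{-2}>0$.
\item Alternatively, one conditions the pairing model on simplicity, an event of probability bounded away from $0$.
\end{itemize}
In both cases, for every sufficiently large even $n$ there exists a simple $3$-regular graph $J_n$ on $[n]$ with $\lambda_2(J_n)\le 2\sqrt2+\delta<3$. Fix $\delta=0.1$.

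\textbf{Step 2 (Cheeger).} The easy direction of the discrete Cheeger inequality \cite[Theorem~4.11]{HLW06} gives, for a $d$-regular graph, edge expansion
\[
h(J_n)=\min_{0<|A|\le n/2}\frac{|e(A,[n]\setminus A)|}{|A|}\;\geq\;\frac{d-\lambda_2}{2}.
\]
This lower bound follows from the variational characterization of $\lambda_2$ applied to the test vector $\mathbbm 1_A-\frac{|A|}{n}\mathbbm 1$.

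With $d=3$ and $\lambda_2\le 2\sqrt2+0.1$, we may take $h_0=(3-2\sqrt2-0.1)/2>0$. This constant is absolute.

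\textbf{Step 3 (symmetrize over $A$).} For an arbitrary $A$ we have $e(A,[n]\setminus A)=e([n]\setminus A,A)$. So we apply the bound from Step 2 to whichever of $A$ and $[n]\setminus A$ is smaller. This yields $|e_{J_n}(A,[n]\setminus A)|\ge h_0\min\{|A|,n-|A|\}$, which is exactly \eqref{eq:cloud-expansion}. The cases $A=\varnothing$ and $A=[n]$ are trivial.

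\textbf{Main obstacle.} The only delicate point is ensuring the graph is simple while keeping the eigenvalue bound, which is the simple-versus-configuration-model issue in Step 1. As a fallback that avoids this entirely, one could use any explicit family of simple $3$-regular expanders defined for all large even $n$. Alternatively, one could use a weaker eigenvalue bound, since any $\lambda_2\le 3-\Omega(1)$ suffices.
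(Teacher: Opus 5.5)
Your proposal is correct and matches the paper, which proves this lemma only by citing Friedman's second-eigenvalue theorem and the easy direction of the discrete Cheeger inequality \cite[Theorems~7.10 and~4.11]{HLW06}. You unpack that citation: first the existence of simple $3$-regular graphs with $\lambda_2\le 2\sqrt2+\delta$, then $h\ge(3-\lambda_2)/2$ via the test vector $\mathbbm{1}_A-\frac{|A|}{n}\mathbbm{1}$, which yields an absolute $h_0>0$ exactly as required.
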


\paragraph{Construction:}
For sufficiently large $N$, let us consider a fixed graph $J_{c}$ according to
\Cref{lem:3regular-expanders} and label its vertices by the matching
colors $[c]$. This is valid since $c$ is even and is sufficiently large for large enough $N$.

Given an intermediate colored multigraph $H$, let us define its replacement graph $\mathcal R(H)$ as follows.
\begin{enumerate}
\item Replace every vertex $u\in V(H)$ by a set of $c$ vertices $C_u=\set{(u,j):j\in[c]}$, and insert a copy of $J_{c}$ on vertex set $C_u$. These edges are called internal edges.

\item For every color-$j$ edge $\{u,v\}$ of $H$, add the external edge $\{(u,j),(v,j)\}$.

\end{enumerate}

\begin{figure}[htbp]
\centering
\begin{tikzpicture}[
  font=\small,
  hvertex/.style={circle,draw=black!70,fill=white,
    minimum size=6.5mm,inner sep=0pt},
  cloudvertex/.style={circle,draw=black!65,fill=white,
    minimum size=2.8mm,inner sep=0pt},
  jport/.style={cloudvertex,draw=blue!70!black,fill=blue!18,
    line width=.7pt},
  kport/.style={cloudvertex,draw=red!70!black,fill=red!16,
    line width=.7pt}
]
\node[hvertex] (u) at (0,0) {$u$};
\node[hvertex] (v) at (1.75,0) {$v$};
\draw[blue!70!black,line width=.9pt,bend left=24]
  (u) to node[above=-1pt] {$j$} (v);
\draw[red!70!black,line width=.9pt,bend right=24]
  (u) to node[below=-1pt] {$k$} (v);
\node[font=\scriptsize,align=center] at (.875,-.92)
  {intermediate graph $H$};

\draw[-{Latex[length=2.2mm]},line width=.75pt]
  (2.35,0) -- node[above,font=\scriptsize] {replace} (3.35,0);

\coordinate (ujp) at (5.05,0);
\coordinate (ukp) at (4.64,.71);
\coordinate (uap) at (3.82,.71);
\coordinate (ubp) at (3.41,0);
\coordinate (ucp) at (3.82,-.71);
\coordinate (udp) at (4.64,-.71);
\coordinate (vxp) at (9.24,0);
\coordinate (vyp) at (8.83,.71);
\coordinate (vkp) at (8.01,.71);
\coordinate (vjp) at (7.60,0);
\coordinate (vzp) at (8.01,-.71);
\coordinate (vwp) at (8.83,-.71);

\draw[dashed,black!48] (4.23,0) circle (1.12);
\draw[dashed,black!48] (8.42,0) circle (1.12);
\draw[black!42,line width=.55pt]
  (ujp)--(ukp)--(uap)--(ubp)--(ucp)--(udp)--cycle
  (ujp)--(ubp) (ukp)--(ucp) (uap)--(udp);
\draw[black!42,line width=.55pt]
  (vxp)--(vyp)--(vkp)--(vjp)--(vzp)--(vwp)--cycle
  (vxp)--(vjp) (vyp)--(vzp) (vkp)--(vwp);

\draw[blue!70!black,line width=1pt]
  (ujp) -- node[below=2pt,font=\scriptsize,fill=white,inner sep=1pt]
  {color $j$} (vjp);
\draw[red!70!black,line width=1pt]
  (ukp) -- node[above=2pt,font=\scriptsize,fill=white,inner sep=1pt]
  {color $k$} (vkp);

\node[jport] (uj) at (ujp) {};
\node[kport] (uk) at (ukp) {};
\node[cloudvertex] at (uap) {};
\node[cloudvertex] at (ubp) {};
\node[cloudvertex] at (ucp) {};
\node[cloudvertex] at (udp) {};
\node[cloudvertex] at (vxp) {};
\node[cloudvertex] at (vyp) {};
\node[kport] (vk) at (vkp) {};
\node[jport] (vj) at (vjp) {};
\node[cloudvertex] at (vzp) {};
\node[cloudvertex] at (vwp) {};

\node[above,font=\scriptsize] at (uj.north) {$(u,j)$};
\node[above left=-1pt,font=\scriptsize] at (uk.north) {$(u,k)$};
\node[above,font=\scriptsize] at (vj.north) {$(v,j)$};
\node[above right=-1pt,font=\scriptsize] at (vk.north) {$(v,k)$};
\node[font=\scriptsize] at (4.23,1.38) {$C_u\cong J_{c}$};
\node[font=\scriptsize] at (8.42,1.38) {$C_v\cong J_{c}$};
\node[font=\scriptsize,align=center] at (6.33,-1.30)
  {replacement graph $\mathcal R(H)$};
\end{tikzpicture}
\caption{The replacement graph construction. Each vertex $u$ of the intermediate graph is replaced by a copy of $J_{c}$, whose color-$j$ vertex is denoted $(u,j)$. A color-$j$ edge $\{u,v\}$ becomes the external edge $\{(u,j),(v,j)\}$. The color-$k$ edge illustrates that differently colored parallel intermediate edges use different neighbors. Only a few neighbors and internal cloud edges are shown schematically.}
\label{fig:replacement-construction}
\end{figure}

\Cref{fig:replacement-construction} illustrates this standard replacement-graph construction. It has $cN$ vertices and it is simple: different colors use different auxiliary vertices, even when two edges of different colors join the same pair of intermediate vertices. Every vertex has three internal neighbors and at most one external neighbor, so the maximum degree of this graph is four.

Now we prove that if a graph $H$ is an expander, then its corresponding replacement graph $\mathcal R(H)$ is an expander as well.

\begin{restatable}%
{lemma}{lemreplacementexpansion}\label{lem:replacement-expansion}
Suppose an $N$-vertex colored multigraph $H$ satisfies
\[
 |e_H(U,V(H)\setminus U)|
 \geq \beta c\min\set{\abs U,N-\abs U}
\]
for every $U\subseteq V(H)$. Then $\mathcal R(H)$ has vertex expansion at least $\alpha$ depending only on $\beta$ and $h_0$.    
\end{restatable}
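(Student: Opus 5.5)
We plan to prove an edge-expansion bound for $\mathcal R(H)$ first and then convert it to vertex expansion. Since $\mathcal R(H)$ has maximum degree four, it suffices to show that for every $S\subseteq V(\mathcal R(H))$ with $\abs S\le cN/2$ we have $|e_{\mathcal R(H)}(S,\overline S)|\ge\alpha_0\abs S$ for some $\alpha_0>0$ depending only on $\beta,h_0$. Each vertex of $\overline S$ is then adjacent to at most four crossing edges, so $\abs{\partial(S)}\ge\alpha_0\abs S/4$, and we can take $\alpha=\alpha_0/4$.

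Fix such an $S$ and write $S_u=S\cap C_u$. We classify clouds with a threshold $1/2$: $u$ is \emph{full} if $\abs{S_u}\ge c/2$, and \emph{partial} otherwise. The internal crossing edges inside $C_u$ number at least $h_0\min\{\abs{S_u},c-\abs{S_u}\}$ by \Cref{lem:3regular-expanders}. We also need a finer threshold, so let $\delta\in(0,1/2)$ be a small constant to be chosen. Let $U=\{u:\abs{S_u}\ge(1-\delta)c\}$ be the ``nearly full'' clouds. Every cloud not in $U$ contributes at least $h_0\min\{\abs{S_u},\delta c\}\ge h_0\delta\abs{S_u}$ internal crossing edges. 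Hence if $\sum_{u\notin U}\abs{S_u}\ge\abs S/2$, we already get $|e(S,\overline S)|\ge h_0\delta\abs S/2$.

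Otherwise at least half of $S$ sits in nearly full clouds, so $c\abs U\ge\abs S/2$. We also have $\abs U\le\abs S/((1-\delta)c)\le N/(2(1-\delta))$. This is the one delicate point: $U$ could be slightly larger than $N/2$. We therefore use the hypothesis in its $\min\{\abs U,N-\abs U\}$ form, which gives
\[
e_H(U,\overline U)\ge\beta c\min\{\abs U,N-\abs U\}\ge\beta c\,\tfrac{1-2\delta}{1-\delta}\abs U
\]
since $N-\abs U\ge\abs U(1-2\delta)/(1-\delta)$. Each $H$-edge $uv$ of color $j$ with $u\in U$, $v\notin U$ gives an external edge $(u,j)(v,j)$ of $\mathcal R(H)$, and distinct $H$-edges give distinct external edges. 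Such an external edge crosses $S$ unless $(u,j)\notin S$ or $(v,j)\in S$. The first failure happens for at most $\delta c$ colors per $u\in U$, because $u$ is nearly full, so it loses at most $\delta c\abs U$ edges. The second failure is charged to $v\notin U$. Each vertex $(v,j)\in S_v$ has exactly one external edge, so these failures number at most $\sum_{v\notin U}\abs{S_v}\le\abs S/2\le c\abs U$.

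That last charge is too large compared with $\beta c\abs U$, so we fix it by using the internal edges of the non-nearly-full clouds. If $\sum_{v\notin U}\abs{S_v}\ge\theta c\abs U$ for a small constant $\theta$, then the internal edges give at least $h_0\delta\theta c\abs U\ge h_0\delta\theta\abs S/2$ crossing edges. If instead $\sum_{v\notin U}\abs{S_v}<\theta c\abs U$, the external crossing edges number at least $\bigl(\beta\tfrac{1-2\delta}{1-\delta}-\delta-\theta\bigr)c\abs U$. Choosing $\delta,\theta$ small relative to $\beta$, for instance $\delta=\theta=\beta/8$ with $\beta\le1$, makes this at least $\beta c\abs U/2\ge\beta\abs S/4$.

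Combining the cases gives $\alpha_0=\min\{h_0\delta\theta/2,\beta/4\}$, which depends only on $\beta$ and $h_0$. The expected main obstacle is exactly this bookkeeping in the nearly-full case: making sure the $\min$-form of the hypothesis handles $\abs U$ slightly above $N/2$, and that the edges absorbed by partially filled outside clouds are paid for by internal expansion. The rest is routine.
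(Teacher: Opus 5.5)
Your argument is correct and essentially the paper's proof: non-full clouds pay through the internal expansion of $J_c$, and the expansion of $H$ on the set $U$ of (nearly) full clouds supplies external crossing edges. The paper uses threshold $c/2$ and folds your two loss terms, together with the $\abs U>N/2$ issue, into the single quantity $b=\sum_u\min\{\abs{S_u},c-\abs{S_u}\}$, which bounds both the lost external edges and $\abs S-c\min\{\abs U,N-\abs U\}$, so one case split on $b$ suffices.

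There is one arithmetic slip. From $\abs U\le N/(2(1-\delta))$ you only get $N-\abs U\ge(1-2\delta)\abs U$, not $\frac{1-2\delta}{1-\delta}\abs U$. This does not matter: with $\delta=\theta=\beta/8$ and $\beta\le 1$ (forced by taking $U$ to be a single vertex), you still have $\beta(1-2\delta)-\delta-\theta\ge\beta/2$.
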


The proof of the above lemma is deferred to \cref{app:proof_4.5}.

The above lemmas show that a $\YES$-instance graph remains an expander after the replacement graph construction. Now we prove that a $\NO$-instance graph remains far from being an expander after the replacement graph construction.

\begin{figure}[H]
\centering
\begin{tikzpicture}[
  font=\small,
  hvertex/.style={circle,draw=black!70,fill=white,
    minimum size=4.2mm,inner sep=0pt},
  cloud/.style={circle,draw=black!65,fill=white,
    minimum size=9mm,inner sep=1pt,font=\scriptsize},
  firstblock/.style={rounded corners=3pt,draw=blue!65!black,
    fill=blue!5,line width=.75pt},
  secondblock/.style={rounded corners=3pt,draw=red!65!black,
    fill=red!5,line width=.75pt}
]
\draw[firstblock] (-.10,-.90) rectangle (1.55,.90);
\draw[secondblock] (2.10,-.90) rectangle (3.75,.90);
\node[blue!65!black] at (.725,1.13) {$V_1$};
\node[red!65!black] at (2.925,1.13) {$V_2$};

\node[hvertex] (u1) at (.28,.36) {$u$};
\node[hvertex] (u2) at (1.17,.36) {$v$};
\node[hvertex] (u3) at (.725,-.40) {$w$};
\draw[black!60] (u1)--(u2)--(u3)--(u1);

\node[hvertex] (v1) at (2.48,.36) {$x$};
\node[hvertex] (v2) at (3.37,.36) {$y$};
\node[hvertex] (v3) at (2.925,-.40) {$z$};
\draw[black!60] (v1)--(v2)--(v3)--(v1);

\node[font=\tiny,align=center] at (1.825,0) {no\\edges};
\node[font=\scriptsize] at (1.825,-1.18)
  {$H=H[V_1]\mathbin{\dot\cup}H[V_2]$};

\draw[-{Latex[length=2.2mm]},line width=.75pt]
  (4.00,0) -- node[above,font=\scriptsize] {replace} (4.85,0);

\draw[firstblock] (5.10,-1.00) rectangle (7.90,1.00);
\draw[secondblock] (8.35,-1.00) rectangle (11.15,1.00);
\node[blue!65!black,font=\scriptsize] at (6.50,1.25)
  {$S_1=\bigcup_{u\in V_1}C_u$};
\node[red!65!black,font=\scriptsize] at (9.75,1.25)
  {$S_2=\bigcup_{u\in V_2}C_u$};

\node[cloud] (c1) at (5.75,.35) {$C_u$};
\node[cloud] (c2) at (7.25,.35) {$C_v$};
\node[cloud] (c3) at (6.50,-.48) {$C_w$};
\draw[blue!55!black,line width=.8pt]
  (c1)--(c2)--(c3)--(c1);

\node[cloud] (d1) at (9.00,.35) {$C_x$};
\node[cloud] (d2) at (10.50,.35) {$C_y$};
\node[cloud] (d3) at (9.75,-.48) {$C_z$};
\draw[red!55!black,line width=.8pt]
  (d1)--(d2)--(d3)--(d1);

\node[font=\tiny,align=center] at (8.125,0) {no\\edges};
\node[font=\scriptsize] at (8.125,-1.31)
  {$\mathcal R(H)=\mathcal R(H[V_1])
    \mathbin{\dot\cup}\mathcal R(H[V_2])$};
\end{tikzpicture}
\caption{A $\NO$ instance before and after the replacement construction
(in the non-overflow case). Only a few vertices,
clouds, and edges are shown schematically; blue and red distinguish the
two visible block classes, not matching colors.}
\label{fig:replacement-no-instance}
\end{figure}

As shown in \Cref{fig:replacement-no-instance}, let $V_1,V_2$ denote the  classes of visible vertices
corresponding to the two hidden blocks from the hard distribution construction. Every matching edge of the intermediate graph remains
inside its class, so
$H=H[V_1]\mathbin{\dot\cup}H[V_2]$. Replacing each vertex $v$ by cloud $C_v$ and
adding the corresponding external edges preserves this separation. Thus,
for $S_i=\bigcup_{u\in V_i}C_u$, there is no edge between $S_1$ and $S_2$
and $\partial_{\mathcal R(H)}(S_i)=\varnothing$. 

\begin{lemma}
\label{lem:replacement-no-farness}
For every constant $\alpha'>0$ and all sufficiently large $N$, let
$H$ be drawn from $P^{(c),\mathrm{feas}}_{M_0,2}$ and let
$G=\mathcal R(H)$. Every graph in the support of $G$ is $\eps$-far from
being an $\alpha'$-vertex-expander, where 
$\eps=\frac{\alpha'}{12}$.
In particular, this holds with probability one.
\end{lemma}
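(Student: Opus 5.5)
We follow the proof of \Cref{lem:intermediate-no-farness}, now applied to $G=\mathcal R(H)$. That graph has $cN$ vertices and maximum degree $4$, so being $\eps$-far means that at least $\eps\cdot 4\cdot cN$ edge modifications are needed.

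\emph{Step 1: a closed set of the right size.} Let $V_1,V_2$ be the visible classes of the two hidden blocks. As shown in the proof of \Cref{lem:intermediate-no-farness}, for sufficiently large $N$ each class satisfies $N/3\leq\abs{V_i}$. Let $V_i$ be the smaller class, so $\abs{V_i}\leq N/2$, and set $S=S_i=\bigcup_{u\in V_i}C_u$. Then
\[
 \frac{cN}{3}\leq\abs S=c\abs{V_i}\leq\frac{cN}{2},
\]
so $S$ is a legitimate test set in \Cref{def:vertex-expansion} for the graph on $cN$ vertices. By the discussion before the lemma, $\partial_{\mathcal R(H)}(S)=\varnothing$. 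Internal edges stay inside clouds, and external edges join clouds of vertices that are adjacent in $H$, hence in the same class.

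\emph{Step 2: counting the modifications.} Let $G'$ be any $\alpha'$-vertex-expander obtained from $G$. Since $\abs S$ is at most half the number of vertices, $G'$ must satisfy $\abs{\partial_{G'}(S)}\geq\alpha'\abs S\geq\alpha' cN/3$. Deleting an edge never creates boundary vertices, and adding an edge adds at most one new vertex to $\partial(S)$. So at least $\alpha'cN/3$ edge additions are required. Dividing by the normalization $4\cdot cN$ gives farness $\alpha'/12=\eps$.

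\emph{Step 3: the overflow case.} The argument holds for every graph in the support, because the feasible distribution excludes overflow. This gives the ``probability one'' claim.

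The only delicate point is Step 1: we must use the bounds on $\abs{V_i}$ with $M_0$, as in \Cref{lem:intermediate-no-farness}, and check that the normalization uses the degree bound $4$ of $\mathcal R(H)$ and its vertex count $cN$. The rest is immediate.
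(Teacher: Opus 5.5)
Your proposal is correct and follows essentially the same argument as the paper. In both, you take $S$ to be the union of the clouds over the smaller visible class, so that $cN/3\le\abs S\le cN/2$ and $\partial_G(S)=\varnothing$; you observe that each added edge creates at most one new boundary vertex, so at least $\alpha' cN/3$ modifications are needed; and you divide by the normalization $4cN$ to get $\alpha'/12$, with the overflow case excluded by feasibility exactly as in the paper.
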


\begin{proof}
{
Let us denote $\bar n=Nc$ for the number of vertices of $G$. Assume that in the hard instance construction of $H$, no block overflows, and let $V_1,V_2$ be the two
visible block classes of $H$. Since $\abs{V_1}+\abs{V_2}=N$ and
$\abs{V_i}\leq M_0/2$, the definition of $M_0$ gives
\[
 \abs{V_i}\geq N-\frac{M_0}{2}
 \geq \frac{9N}{20}-2
 \geq \frac N3
\]
for all sufficiently large $N$. Every matching edge lies within one of the hidden
blocks, so there are no crossing edges between $V_1$ and $V_2$.
The smaller of the corresponding two unions of replacement clouds is a
set $S$ satisfying
\[
 \frac{\bar n}{3}\leq\abs S\leq\frac{\bar n}{2},
 \qquad
 \partial_G(S)=\varnothing.
\]

Thus, transforming $G$ into an $\alpha'$-vertex-expander
requires the fixed set $S$ to acquire at least
$\alpha'\abs S\geq\alpha'\bar n/3$ distinct outside neighbors. Each
added crossing edge creates at most one such neighbor, thus at least $\alpha'\bar n/3$ edge modifications are necessary. Dividing by $4\bar n$ because of the degree and size of the replacement graph $G=\mathcal{R}(H)$, gives
distance at least $\alpha'/12$.}
\end{proof}

Now we argue that the replacement graph problem is at least as hard as the intermediate graph problem, up to a constant factor. We prove this by a reduction argument.

\begin{lemma}
\label{lem:replacement-query-reduction}
Let $H$ be an intermediate graph as described in \cref{sec:expansioninstance} and let $\mathcal R(H)$ be
the corresponding replacement graph as described above. For every quantum algorithm $\mathcal A$ making
$q$ adjacency-list queries to $\mathcal R(H)$, there is a quantum
algorithm $\mathcal B$ making at most $2q$ queries to
$H$ such that, for every $H$,
\[
 \Pr[\mathcal B^H\text{ accepts}]
 =
 \Pr[\mathcal A^{\mathcal R(H)}\text{ accepts}].
\]
Consequently, a $q$-query algorithm distinguishing two distributions of
replacement graphs yields a $2q$-query algorithm distinguishing the
corresponding distributions of intermediate graphs with the same
advantage.
\end{lemma}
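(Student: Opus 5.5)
The plan is to construct $\mathcal B$ by simulating $\mathcal A$ step by step and replacing each oracle call $O_{\mathcal R(H)}$ with a small reversible circuit. That circuit uses one call to $O_H$ and one call to its inverse ($O_H$ is self-inverse, since it XORs into the answer register), which gives the factor $2$. The first step is to write down the adjacency-list oracle of $\mathcal R(H)$ explicitly. A vertex of $\mathcal R(H)$ is a pair $(u,j)$ with $u\in V(H)$ and $j\in[c]$. Its neighbors are ordered so that indices $i\in\{1,2,3\}$ return the three internal neighbors $(u,j')$, where $j'$ is the $i$-th neighbor of $j$ in the fixed graph $J_c$. Index $i=4$ returns the external neighbor $(v,j)$ if $u$ has a color-$j$ edge $\{u,v\}$ in $H$, and $\bot$ otherwise. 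In $H$ the colors index the adjacency list, so the color-$j$ neighbor of $u$ is exactly $\cO_H(u,j)$. Hence $\cO_{\mathcal R(H)}((u,j),i)$ is a fixed, input-independent function of $(u,j,i)$ for $i\le3$. For $i=4$ it is the function $(u,j,w)\mapsto (w,j)$ (or $\bot$ when $w=\bot$) applied to $w=\cO_H(u,j)$.

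The second step is to implement $\ket{(u,j),i,y}\mapsto\ket{(u,j),i,y\oplus\cO_{\mathcal R(H)}((u,j),i)}$ coherently. This proceeds in four substeps:
\begin{enumerate}
\item Using a fresh ancilla register initialized to $\ket{0}$, controlled on $i=4$, apply $O_H$ to $\ket{u,j}\ket{0}$ to obtain $\ket{\cO_H(u,j)}$ in the ancilla. For $i\le 3$ the ancilla stays $\ket0$; this can be arranged by always querying and only using the result when $i=4$, or by a controlled query.
\item Compute into $y$ the XOR of the answer, using a classical reversible map. For $i\le 3$ this map is determined by $J_c$ alone. For $i=4$ it is $w\mapsto(w,j)$, with $\bot\mapsto\bot$.
\item Uncompute the ancilla with a second application of $O_H$, controlled in the same way. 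This restores $\ket0$ because the ancilla is in a basis state determined by $(u,j)$.
\item Check that on every basis state the net effect equals $O_{\mathcal R(H)}$ exactly, with no garbage left behind.
\end{enumerate}
Since the ancilla is restored exactly, the simulated state of $\mathcal B$ after each step is $\ket{\psi_t^{\mathcal R(H)}}\otimes\ket0$. The final measurement distribution is therefore identical, which gives $\Pr[\mathcal B^H\text{ accepts}]=\Pr[\mathcal A^{\mathcal R(H)}\text{ accepts}]$ for every $H$.

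The third step, the consequence about distributions, then follows by averaging this pointwise identity over any distribution on $H$, such as the feasible $\YES$/$\NO$ distributions. Hence the acceptance gap is preserved exactly.

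The main obstacle I expect is bookkeeping rather than anything conceptual. Two points need care. First, vertex and index encodings must be handled correctly: a query index $i>4$, or one whose vertex label does not correspond to a valid pair, must return $\bot$ and be handled without querying. Second, the control on $i=4$ must be arranged so that exactly two $O_H$ calls are used per simulated query regardless of the branch. I would handle the second point by always issuing the two calls on a dummy index when $i\ne4$, which keeps the query count at exactly $2q$.
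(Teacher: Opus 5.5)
Your proposal is correct and takes essentially the same route as the paper. The three internal-neighbor indices are answered from the fixed graph $J_c$. For the external neighbor, you compute $\cO_H(u,j)$ into a clean register with one query, write $(v,j)$ or $\bot$ into the answer register only when the index is $4$, and uncompute with a second query; this is an exact, garbage-free simulation using at most $2q$ queries. The statement about distributions then follows by averaging this pointwise identity over the distribution on $H$, as you describe.
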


\begin{proof}
Following the construction described above, note that every vertex of
$\mathcal R(H)$ has the form $(u,j)$.  Its first three neighbors are determined by
the copy of the fixed graph $J_c$ on the cloud $C_u$, and hence they can be
returned without making any query to $H$.  The possible fourth neighbor is
determined by the color-$j$ edge incident to $u$ in $H$.  A query to $(u,j)$
in $H$ returns a vertex $v$ if such an edge $\{u,v\}$ is present and returns
$\bot$ otherwise.  Accordingly, the simulation returns $(v,j)$ in the first
case and $\bot$ in the second case.

For the query simulation, the above computation is performed coherently.
For each query made by $\mathcal A$, the algorithm $\mathcal B$ first queries
$H$ to compute the color-$j$ neighbor of $u$ in a clean work register.  It
uses this value only when the queried neighbor is the fourth neighbor, writes the
corresponding answer in the answer register, and then queries $H$ once more
to return the work register to zero.  The queries to the first three neighbors are answered
reversibly from the fixed description of $J_c$.  Thus one adjacency-list
query to $R(H)$ is simulated exactly, even when the query is in
superposition, using at most two queries to $H$ and leaving no garbage.

Replacing every oracle call of $\mathcal A$ by this simulation gives an
algorithm $\mathcal B$ making at most $2q$ queries to $H$.  Since the
simulation is exact, the final states of the two computations agree, and
hence
\[
  \Pr[{\mathcal B}^H \text{ accepts}]
  =
  \Pr[{\mathcal A}^{R(H)} \text{ accepts}].
\]
The final statement follows immediately.

\end{proof}

\begin{corollary}\label{cor:replacement-property-gap}
There is a constant $\alpha>0$ such that, for every fixed
$0<\alpha'<\alpha$, setting $\eps=\alpha'/12$ gives the following.
\begin{enumerate}
\item If $H$ is drawn from $P^{(c),\mathrm{feas}}_{M_0,1}$, then $\mathcal R(H)$ is an $\alpha$-vertex-expander with probability $1-o(1)$.
\item If $H$ is drawn from $P^{(c),\mathrm{feas}}_{M_0,2}$, then $\mathcal R(H)$ is $\eps$-far from being an $\alpha'$-vertex-expander with probability $1$.
\end{enumerate}
\end{corollary}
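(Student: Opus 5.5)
The corollary combines the lemmas just proved, so the plan is to chain them and fix the constants. For part~1, start from \Cref{lem:logdegree-expansion}. With probability $1-o(1)$ over $H\sim P^{(c),\mathrm{feas}}_{M_0,1}$, every $U\subseteq V(H)$ with $\abs U\leq N/2$ satisfies $\abs{e_H(U,V(H)\setminus U)}\geq \frac{c}{20}\abs U$. To put this in the symmetric form required by \Cref{lem:replacement-expansion}, note that $e_H(U,V\setminus U)=e_H(V\setminus U,U)$. Applying the bound to whichever of $U$ and $V\setminus U$ has size at most $N/2$ gives
\[
\abs{e_H(U,V(H)\setminus U)}\geq \tfrac{1}{20}\,c\,\min\set{\abs U,N-\abs U}
\]
for all $U$. \Cref{lem:replacement-expansion} then applies with $\beta=1/20$. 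It shows that $\mathcal R(H)$ has vertex expansion at least some $\alpha>0$ that depends only on $\beta$ and on the absolute constant $h_0$ from \Cref{lem:3regular-expanders}. This $\alpha$ is therefore an absolute constant, independent of $N$ and $c$, and we fix it as the constant in the statement. The conclusion holds on the same $1-o(1)$ event.

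For part~2, fix any $0<\alpha'<\alpha$ and set $\eps=\alpha'/12$. \Cref{lem:replacement-no-farness} states that for all sufficiently large $N$, every graph in the support of $\mathcal R(H)$ with $H\sim P^{(c),\mathrm{feas}}_{M_0,2}$ is $\frac{\alpha'}{12}$-far from every $\alpha'$-vertex-expander. Hence this holds with probability $1$. The lemma works with the feasible (non-overflow) distribution, which is exactly the distribution in the corollary.

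The only delicate point is checking that $\alpha$ really is uniform. The expansion constant in \Cref{lem:replacement-expansion} must not depend on $c=\Theta(\log N)$, and \Cref{lem:logdegree-expansion} is applied at exactly $(M_0,1)$ with the same $M_0$ and $c$ as in the construction. The restriction $\alpha'<\alpha$ is only needed so that the pair $(\alpha,\alpha')$ forms a valid testing gap in the sense of the earlier remark. Beyond this, the argument is direct bookkeeping.
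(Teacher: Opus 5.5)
Your proposal is correct and matches the paper's proof: part~1 chains \Cref{lem:logdegree-expansion} into \Cref{lem:replacement-expansion} with $\beta=1/20$, and part~2 is a direct application of \Cref{lem:replacement-no-farness}. You also make explicit a step the paper leaves implicit, namely turning the one-sided cut bound into the symmetric form $\min\{\abs{U},N-\abs{U}\}$ by applying it to the smaller of $U$ and $V(H)\setminus U$, and you do this correctly.
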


\begin{proof}
Applying \Cref{lem:logdegree-expansion,lem:replacement-expansion} with
$\beta=1/20$, we have the first statement. Now if we apply
\Cref{lem:replacement-no-farness} with the  value of $\alpha'$ stated in the corollary, we have the second result.
\end{proof}

\begin{remark}
Similarly to the proof of the bipartiteness testing lower bound, here we will also consider two feasible acceptance probability values $p(M_0,1)$ and $p(M_0,2)$ for the $\YES$ and $\NO$ distributions, respectively. However, unlike in the case of the bipartiteness lower bound, here every graph sampled from $P_{M_0,2}$ is far from expander. As a result, we do not consider the difference polynomial for every parameter here. This is the approach that was considered in \cite{ambainis2011quantum}.
\end{remark}

\subsection{The rational acceptance probability with logarithmically many colors}

Here we prove that the acceptance probability distribution analysis for bipartiteness testing in \Cref{sec:bip_lb} remains valid for $c=\Theta(\log N)$ colored matchings.

\begin{restatable}{lemma}{lemreplacementrational}\label{lem:replacement-rational}
Consider a quantum algorithm that makes $q$ queries. In the hard distribution denoted by $P^{(c)}_{M,\ell}$, its expected acceptance probability has the form $\frac{F(M,\ell)}{D_q(M,\ell)}$ with $\deg(F)=O(q\log q)$,
where $D_q$ is the denominator in \eqref{eq:denominator}. On the grid $\mathcal G_N$, the expected acceptance probability conditioned on no overflow event satisfies
\begin{equation}\label{eq:replacement-rational-error}
 p(M,\ell)=\frac{F(M,\ell)}{D_q(M,\ell)}+\eta(M,\ell),
 \qquad
 |\eta(M, \ell)| \leq
 \exp\!\left(O(q\log N)-\Omega(N^{1/3})\right).
\end{equation}
\end{restatable}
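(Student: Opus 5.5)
The proof has two parts: the rational form of the formal acceptance probability, and the overflow estimate. Both rerun the arguments of \Cref{sec:bip_lb}; what needs checking is that nothing in them depends on the number of colors $c$.

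\textbf{Step 1: the rational form.} By \Cref{lem:beals}, the acceptance probability is a multilinear polynomial of degree at most $2q$ in the response indicators $X_{a,z}$. Each nonzero canonical monomial fixes a consistent colored pattern $F$. Such an $F$ involves at most $2q$ queried pairs $(v,j)$, so it has at most $2q$ edges, although these edges may carry up to $2q$ distinct colors out of $[c]$.

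Under $P^{(c)}_{M,\ell}$ the $c$ matchings in a block are independent. Hence, once a placement of the components into blocks is fixed, the pattern probability factors over colors. The color-$j$ factor is $R_{r_j}=\prod_{a=1}^{r_j}\frac{\ell}{M-(2a-1)\ell}$, where $r_j$ is the number of color-$j$ edges of $F$ falling in a given block.

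This is exactly the structure used in \cite[Lemma 3]{ambainis2011quantum} and in \Cref{lem:topology-power}. The multiset of denominator factors is a product of terms $M-(2a-1)\ell$ whose multiplicities are controlled by the edge counts per color and block. Since $\sum_j r_j\le 2q$, the multiplicity of each factor $M-(2a-1)\ell$ is at most $\lfloor 2q/a\rfloor$, because at most $2q/a$ colors can have $r_j\ge a$.

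Consequently the denominator divides $D_q$, and the numerator degree is bounded as in \Cref{lem:acl-rational}. Neither bound involves $c$: colors absent from $F$ contribute probability $1$. Summing monomials over the common denominator gives $F(M,\ell)/D_q(M,\ell)$ with $\deg F=O(q\log q)$.

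\textbf{Step 2: the overflow estimate.} The argument follows \Cref{lem:actual-formal}. First, \Cref{lem:overflow} does not depend on $c$, since the block assignment of visible vertices ignores the matchings. It gives overflow probability at most $\exp(-\Omega(N^{1/3}))$ on $\mathcal G_N$.

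Next, write $p^{\mathrm{form}}=(1-\pi)p^{\mathrm{feas}}+\pi\, p^{\mathrm{ovf}}$, where $\pi$ is the overflow probability. Here $p^{\mathrm{ovf}}$ is the acceptance probability under the arbitrary graph returned on overflow, and $p^{\mathrm{ovf}}$ is bounded by the coefficient sum of the acceptance polynomial.

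Now apply \Cref{lem:coefficient-bound}. The number of possible outcomes of a query is now $N+1$, namely a vertex or $\bot$, so it is still $O(N)$. The coefficient sum is therefore $\exp(O(q\log N))$. Then $|p^{\mathrm{feas}}-p^{\mathrm{form}}|\le 2\pi\exp(O(q\log N))$, which is the claimed $\eta$.

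\textbf{Main obstacle.} The step that needs care is the multiplicity count in Step 1. One must check that spreading edges over up to $2q$ distinct colors never produces a factor $M-(2a-1)\ell$ with multiplicity larger than $\lfloor 2q/a\rfloor$. The count $\#\{j: r_j\ge a\}\le 2q/a$ must also hold when the edges of several components fall in the same block. This is where I would redo the partition sum of \cite[Lemma 3]{ambainis2011quantum}, tracking each color and block separately.
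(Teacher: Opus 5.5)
Your proposal is correct and follows essentially the same route as the paper. The paper likewise rechecks that each factor $M-(2a-1)\ell$ comes from at most $\lfloor 2q/a\rfloor$ (block, color) pairs carrying at least $a$ prescribed edges, so $D_q$ stays a common denominator and the rest is inherited from the Ambainis--Childs--Liu rational form; it then combines the color-independent overflow bound with the coefficient-sum estimate. One detail to add: there are now $Nc$ possible queries rather than $3N$, which changes the coefficient bound only to $\exp(O(q\log(Nc)))=\exp(O(q\log N))$.
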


\begin{proof}
By \Cref{lem:beals,lem:acl-rational}, the degree-$2q$ acceptance
polynomial can be averaged monomial by monomial, and every formal
monomial expectation has denominator dividing $D_q$. The multiplicity
count in \cref{eq:denominator} is unchanged when the number of matching
colors is $c$: a degree-$2q$ monomial specifies at most $2q$ colored
matching edges, while an occurrence of $M-(2a-1)\ell$ requires at least
$a$ specified edges in one matching and one block and hence occurs at
most $\lfloor 2q/a\rfloor$ times. Thus $D_q$ remains a common
denominator, and \cref{eq:den-degree} gives
$\deg D_q=O(q\log q)$. By \Cref{lem:acl-rational}, the numerator has the same degree bound. Changing the number of available
colors from $3$ to $\Theta(\log N)$ does not change this calculation.

There are $Nc$ possible queries and $O(N)$ possible responses per query. The coefficient argument of \Cref{lem:coefficient-bound} therefore gives total coefficient absolute value
\[
 \exp(O(q\log(Nc)))=\exp(O(q\log N)).
\]
On $\mathcal G_N$, every block has capacity at least $1.09$ times its
mean occupancy. The overflow bound \Cref{lem:overflow} applies
unchanged, because it depends only on the block assignments, and gives
\[
 \Pr[\text{some block overflows}]
 \leq \exp(-{\Omega}(N^{1/3})).
\]

Similar to the proof of \Cref{lem:actual-formal}, the formal
and feasible expectations of a Boolean monomial differ only on this
event. Summing with \Cref{lem:coefficient-bound} proves
 \cref{eq:replacement-rational-error}. 
\end{proof}

\paragraph{Size-adjustment and a subtlety}
Note that the replacement graph construction described above constructs graphs with $Nc$ vertices. To prove our lower bound for  graphs on an arbitrary number of vertices, we need to add additional vertices. However, adding isolated vertices, which is a common approach to achieve such reductions, will not work here. In particular, adding isolated vertices will make the graph in the $\YES$ instance no longer an expander, a promise that is required for the lower bound. Instead, we will add additional vertices by subdividing the existing edges. For example, we will add a vertex $z$ of degree $2$ by subdividing a fixed internal cloud edge $xy$ into the path $x-z-y$. Repeating this procedure $O(\log N)$ times yields a graph with $\Theta(Nc)$ vertices.  The following two lemmas show that subdividing a sufficiently small
number of edges preserves both the $\NO$ and the $\YES$ instance promises. We first argue the $\NO$ instances.

{

\begin{restatable}{lemma}{lemnosubdivision}\label{lem:no-subdivision}
Let $\alpha'>0$ be a fixed constant. Let $H$ be any graph in the support of
$\mathcal{P}^{(c),\mathrm{feas}}_{M_0,2}$, let
$G = \mathcal{R}(H), \overline n := |V(G)| = Nc$. Suppose $G'$ is the graph obtained from $G$ by subdividing $r \le \overline n/4$ edges, each at most once. Then, for all sufficiently large $N$, $G'$ is $\alpha'/12$-far from every $\alpha'$-vertex-expander.    
\end{restatable}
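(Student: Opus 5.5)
We follow the proof of \Cref{lem:replacement-no-farness}, checking that subdivision neither creates edges across the cut nor shrinks the isolated side below a constant fraction. Let $V_1,V_2$ be the two visible block classes of $H$. As in \Cref{lem:replacement-no-farness}, every matching edge of $H$ lies inside one class, and $\abs{V_i}\geq N/3$ for large $N$. Put $S_i=\bigcup_{u\in V_i}C_u$, so that $\abs{S_i}=c\abs{V_i}\geq\overline n/3$ and no edge of $G$ joins $S_1$ to $S_2$.

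Each subdivided edge $xy$ has both endpoints in a single $S_i$, since every edge of $G$ does. We add its subdivision vertex $z$ to that same side, defining $S_i'\supseteq S_i$. In $G'$ the new vertex $z$ is adjacent only to $x$ and $y$, and all other edges of $G'$ are edges of $G$. Hence $G'$ still has no edge between $S_1'$ and $S_2'$, and so $\partial_{G'}(S_i')=\varnothing$.

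Next we control the sizes. Write $n'=\abs{V(G')}=\overline n+r$, which lies between $\overline n$ and $\tfrac54\overline n$. Let $S$ be the smaller of $S_1',S_2'$. Then $\abs S\leq n'/2$. Moreover $\abs S\geq\min_i\abs{S_i}\geq\overline n/3\geq\tfrac{4}{15}n'$.

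An $\alpha'$-vertex-expander on $V(G')$ must give $S$ at least $\alpha'\abs S$ outside neighbors. Deleting edges creates no outside neighbors of $S$, and each added edge creates at most one. Therefore at least $\alpha'\abs S\geq\alpha'\overline n/3$ edge modifications are needed.

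The remaining step is the normalization. The maximum degree of $G'$ is still $4$, because subdivision vertices have degree $2$ and the degrees of the original vertices are unchanged. The farness denominator is thus $4n'\leq 5\overline n$, which gives distance at least $\alpha'\overline n/(3\cdot 5\overline n)=\alpha'/15$.

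This falls short of the claimed $\alpha'/12$, and matching that constant is the main obstacle. The loss comes from measuring distance against the enlarged vertex count $n'$ rather than $\overline n$. There are two ways to recover it.

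First, one can read the lemma with $r=o(\overline n)$, which is the case actually used: only $O(\log N)$ subdivisions are needed. Then $n'=(1+o(1))\overline n$, and the bound becomes $(1-o(1))\alpha'/12$.

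Second, one can sharpen the lower bound on $\abs S$ to recover the exact constant. Each $S_i'$ has size at least $\abs{S_i}\geq\tfrac{9}{20}\overline n-2c$, a bound that is better than $\overline n/3$. Combining this with $n'\leq\tfrac54\overline n$ gives $\alpha'\abs S/(4n')\geq\alpha'\cdot\tfrac{9}{20}\big/5=\tfrac{9}{100}\alpha'$, which exceeds $\alpha'/12$. So using the sharper block-size bound $\abs{V_i}\geq 9N/20-2$ closes the gap, and this is the step I would carry out carefully.
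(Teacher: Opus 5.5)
Your final argument (the second fix, combining $\abs{S_i}\geq\tfrac{9}{20}\overline n-2c$ with $n'\leq\tfrac54\overline n$) is correct and is essentially the paper's proof, which uses the same sharper block-size bound to get $\tfrac{9}{20}\overline n-2c\geq\tfrac{5}{12}\overline n\geq n'/3$ and then divides $\alpha' n'/3$ by $4n'$. You did drop the $-2c$ term: the exact bound is $\alpha'\bigl(\tfrac{9}{100}-\tfrac{2}{5N}\bigr)$, which is at least $\alpha'/12$ once $N\geq 60$ (the same threshold the paper needs), so the $r=o(\overline n)$ reinterpretation is unnecessary and would not prove the lemma as stated anyway.
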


The proof of this lemma is deferred to \cref{app:proofof_4.11}.

}

\begin{restatable}{lemma}{lemboundedsubdivision}\label{lem:bounded-subdivision}
Let $G$ be a graph of maximum degree at most four that, for some parameter $\rho>0$, satisfies
\[
 |e_G(U,V(G)\setminus U)|
 \geq \rho \min\set{\abs U, \abs{V(G)}-\abs U}
\]
for every $U\subseteq V(G)$. If at most $\abs{V(G)}/4$ edges are subdivided, each at most once, the resulting graph has vertex expansion at least a positive constant depending only on $\rho$.   
\end{restatable}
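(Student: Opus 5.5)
Let $G'$ denote the graph obtained from $G$ by subdividing a set $E_s$ of $r\le n/4$ edges, where $n=\abs{V(G)}$. Let $Z$ be the set of new degree-$2$ vertices, so $\abs{V(G')}=n+r\le \tfrac54 n$. The plan is first to prove an edge-expansion bound for $G'$ and then convert it to vertex expansion. Since $G'$ still has maximum degree at most four, any set $U'$ satisfies $\abs{\partial_{G'}(U')}\ge \abs{e_{G'}(U',V(G')\setminus U')}/4$. It therefore suffices to show that
\[
\abs{e_{G'}(U',V(G')\setminus U')}\ge \rho'\abs{U'}
\]
for every $U'$ with $\abs{U'}\le \abs{V(G')}/2$, where $\rho'>0$ depends only on $\rho$.

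Fix such a $U'$ and let $U=U'\cap V(G)$. The first step is a projection argument. Every crossing edge $xy$ of $G$ with $x\in U$ and $y\notin U$ gives a crossing edge of $G'$. This is immediate if $xy$ is not subdivided. If $xy$ is subdivided as $x-z-y$, then one of $xz$ or $zy$ crosses in $G'$, whichever side $z$ lies on. Distinct $G$-edges map to distinct $G'$-edges, so
\[
e_{G'}(U')\ge e_G(U,V(G)\setminus U)\ge \rho\min\{\abs U,n-\abs U\}.
\]

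The second step compares this with $\abs{U'}$. The case split follows the two quantities in that minimum.

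\emph{Case 1: $\abs U\le n-\abs U$.} Here $e_{G'}(U')\ge\rho\abs U$. We still need to account for $U'\cap Z$. Each subdivision vertex $z\in U'\cap Z$ with a neighbor outside $U'$ contributes a crossing edge, and each such edge is counted at most twice. Any other $z\in U'\cap Z$ has both of its neighbors in $U$, and since every vertex of $U$ has degree at most $4$, there are at most $2\abs U$ such $z$. Hence $\abs{U'\cap Z}\le 2e_{G'}(U')+2\abs U$, which gives $\abs{U'}\le 3\abs U+2e_{G'}(U')$. Combined with $e_{G'}(U')\ge\rho\abs U$, this yields $e_{G'}(U')\ge \tfrac{\rho}{3+2\rho}\abs{U'}$.

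\emph{Case 2: $\abs U> n/2$.} Here $\abs{V(G)\setminus U}<n/2$, and we apply the same argument to the complement $W'=V(G')\setminus U'$. I expect this to be the main obstacle, because we must rule out that $W'$ is tiny while $U'$ is large. Since $\abs{U'}\le \tfrac12\abs{V(G')}\le \tfrac58 n$, we have $\abs{W'}\ge \abs{V(G')}/2\ge n/2$. The number of original vertices in $W'$ is at least $\abs{W'}-r\ge n/2-n/4=n/4$. This is exactly where the hypothesis $r\le n/4$ is used.

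Now $n-\abs U\ge n/4$, so $e_{G'}(U')\ge \rho n/4\ge \tfrac{\rho}{5}\abs{U'}$, using $\abs{U'}\le \tfrac54 n$.

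In both cases $e_{G'}(U')\ge \min\{\rho/(3+2\rho),\rho/5\}\abs{U'}$. Dividing by $4$ gives vertex expansion at least $\min\{\rho/(3+2\rho),\rho/5\}/4$, a positive constant that depends only on $\rho$.
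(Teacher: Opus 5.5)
Your proof is correct and follows essentially the paper's approach. It uses the same three ingredients: the projection bound $e_{G'}(U')\ge\rho\min\{|U|,n-|U|\}$, degree counting for subdivision vertices (the paper's $2|B|-4|A|$ bound), and $r\le n/4$ to ensure many original vertices lie outside $U'$ when $|U|>n/2$. The only difference is organizational: the paper splits on whether $|A|\ge|S|/6$, whereas you split on $|U|\le n/2$ and merge the projection and counting bounds into the single inequality $|U'|\le 3|U|+2e_{G'}(U')$.
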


We defer the proof of this lemma to \cref{app:proofof_4.12}.

\subsection{Combining everything}

\theoexpansionlb*

\begin{proof}
We first prove the result for the intermediate graph on $N'=N/c$ vertices, and then by the reduction in \cref{lem:replacement-query-reduction} the main result follows.
This way, we first assume that the replacement graph has exactly $\bar n=N'c$ vertices, where $c=\Theta(\log N)$. 

Suppose a tester for the intermediate graph makes $q$ queries. 
Let us amplify the error probability of the tester to at most $1/20$ using a constant number of repetitions. By slight abuse of notation, we will denote $q$ for the amplified query complexity as well. Moreover, let $p(M,\ell)$ be its expected acceptance probability.

By \Cref{cor:replacement-property-gap}, we know that $
 p(M_0,1)\geq\frac{19}{20}-o(1)
 \text{, and }
 p(M_0,2)\leq\frac{1}{20}+o(1)$.

Assume toward a contradiction that $q\leq c_0\frac{N'^{1/3}}{\log N'}$
for a sufficiently small constant $c_0$. Now let us consider $F$ and $\eta$ as in \Cref{lem:replacement-rational}. Since the grid and denominator are the same as in the signed proof, \Cref{lem:denominator-approx} gives a polynomial $A_q(M,\ell)$ satisfying
$\deg A_q=O(q\log q)$ and $\abs{D_qA_q-1} \leq 10^{-2}$ throughout the grid. Now let us consider the polynomial:
\[
 P(M,\ell)=\frac12F(M,\ell)A_q(M,\ell).
\]
Then $\deg P=O(q\log q)$ and we have
\[
 P(M,\ell)=\frac{p(M,\ell)-\eta(M,\ell)}2
            D_q(M,\ell)A_q(M,\ell).
\]
By \Cref{lem:replacement-rational}, $\eta(M, \ell) = o(1)$ uniformly under the assumed query bound. Since $0\leq p\leq1$, the polynomial is bounded by one on $\mathcal G_{N'}$. Moreover, $P(M_0,1)>0.46$ and $\abs{P(M_0,2)}<0.04$
for sufficiently large $N'$. Rescaling $M=2X$ and applying \Cref{lem:acl-degree} gives $\deg P=\Omega(\sqrt R)=\Omega(N'^{1/3})$.
The upper bound $\deg P=O(q\log q)$ therefore implies $q=\Omega\!\left(\frac{N'^{1/3}}{\log N'}\right)=\Omega\!\left(\frac{(N/c)^{1/3}}{\log  N}\right)=\Omega\!\left(\frac{N^{1/3}}{\log^{4/3} N}\right)$, 
which is in contradiction with our initial assumption for small enough constant $c_0$.

Note that the argument above considers graphs with $N'c$ vertices. For graphs whose number of vertices are not of this form, we will apply the size-adjustment argument described before. This completes the proof.
\end{proof}

\section*{Acknowledgments}
The authors are grateful to Frédéric Magniez for initial discussions and the project idea of improving the expansion testing lower bound and exploring the possibility of adapting the method for bipartiteness testing.

CK is supported by French PEPR integrated project EPiQ 
(ANR-22-PETQ-0007) and partially supported by ANR Grant FLITTLA (ANR-21-CE48-0023).
SS's research is supported by the NRF Investigatorship award (NRF-NRFI10-2024-0006), CQT Young Researcher Career Development Grant (25-YRCDG-SS) and the grant ANR-18-IDEX-0001 between Université Paris Cité and National University of Singapore. 
DS's research is supported by the German Federal Ministry of Research, Technology and Space (QuSol, 13N17173), by the Deutsche Forschungsgemeinschaft under Germany’s Excellence Strategy – EXC-2111 – 390814868, and by the Munich Quantum Valley.

\paragraph{Statement of AI usage:}
ChatGPT-6 Astra was used extensively in the development of this paper, including in exploring and refining proof ideas and in drafting and revising the text. 
In particular, the idea of introducing an intermediate problem in the lower bound proof of bipartiteness testing, as well as the modifications of the \cite{ambainis2011quantum} proof that enable proving $\widetilde\Omega(N^{1/3})$ lower bounds rather than $\widetilde\Omega(N^{1/4})$, are the result of several conversations with the mentioned AI system, which proposed these arguments developed in this paper. 
The authors subsequently verified, simplified, and organized the argument, and rewrote its exposition. Important concepts and ideas were manually factored out and written down in a (hopefully) intuitive, understandable way. The authors take full responsibility for the correctness and content of the paper.

\newpage

\bibliographystyle{alpha}

\bibliography{reference}

\begin{thebibliography}{}

\end{thebibliography}


\newcommand{\etalchar}[1]{$^{#1}$}
\begin{thebibliography}{CFMdW10}

\bibitem[AA23]{DBLP:conf/www/AdriaensA23}
Florian Adriaens and Simon Apers.
\newblock Testing cluster properties of signed graphs.
\newblock In {\em {WWW}}, pages 49--59. {ACM}, 2023.

\bibitem[ACL11]{ambainis2011quantum}
Andris Ambainis, Andrew~M Childs, and Yi-Kai Liu.
\newblock Quantum property testing for bounded-degree graphs.
\newblock In {\em International Workshop on Approximation Algorithms for Combinatorial Optimization}, pages 365--376. Springer, 2011.

\bibitem[Ape20]{DBLP:journals/quantum/Apers20}
Simon Apers.
\newblock Expansion testing using quantum fast-forwarding and seed sets.
\newblock {\em Quantum}, 4:323, 2020.

\bibitem[AS04]{aaronson2004quantum}
Scott Aaronson and Yaoyun Shi.
\newblock Quantum lower bounds for the collision and the element distinctness problems.
\newblock {\em Journal of the ACM (JACM)}, 51(4):595--605, 2004.

\bibitem[BBC{\etalchar{+}}01]{beals2001quantum}
Robert Beals, Harry Buhrman, Richard Cleve, Michele Mosca, and Ronald de~Wolf.
\newblock Quantum lower bounds by polynomials.
\newblock {\em Journal of the ACM (JACM)}, 48(4):778--797, 2001.

\bibitem[BCG{\etalchar{+}}20]{Ben-DavidCGKPW20}
Shalev Ben{-}David, Andrew~M. Childs, Andr{\'{a}}s Gily{\'{e}}n, William Kretschmer, Supartha Podder, and Daochen Wang.
\newblock Symmetries, graph properties, and quantum speedups.
\newblock In Sandy Irani, editor, {\em 61st {IEEE} Annual Symposium on Foundations of Computer Science, {FOCS} 2020, Durham, NC, USA, November 16-19, 2020}, pages 649--660. {IEEE}, 2020.

\bibitem[BFNR03]{buhrman2003quantum}
Harry Buhrman, Lance Fortnow, Ilan Newman, and Hein R{\"o}hrig.
\newblock Quantum property testing.
\newblock In {\em Proceedings of the fourteenth annual ACM-SIAM symposium on Discrete algorithms}, pages 480--488, 2003.

\bibitem[BY22]{bhattacharyya2022property}
Arnab Bhattacharyya and Yuichi Yoshida.
\newblock {\em Property Testing: Problems and Techniques}.
\newblock Springer Nature, 2022.

\bibitem[CAH24]{DBLP:conf/tqc/ChenAH24}
Kuo{-}Chin Chen, Simon Apers, and Min{-}Hsiu Hsieh.
\newblock (quantum) complexity of testing signed graph clusterability.
\newblock In {\em {TQC}}, volume 310 of {\em LIPIcs}, pages 8:1--8:16. Schloss Dagstuhl - Leibniz-Zentrum f{\"{u}}r Informatik, 2024.

\bibitem[CFMdW10]{chakraborty2010new}
Sourav Chakraborty, Eldar Fischer, Arie Matsliah, and Ronald de~Wolf.
\newblock New results on quantum property testing.
\newblock In {\em 30th International Conference on Foundations of Software Technology and Theoretical Computer Science}, page 145, 2010.

\bibitem[CS10]{DBLP:conf/propertytesting/CzumajS10}
Artur Czumaj and Christian Sohler.
\newblock Sublinear-time algorithms.
\newblock In {\em Property Testing - Current Research and Surveys}. 2010.

\bibitem[FMSS03]{friedl2003quantum}
Katalin Friedl, Fr{\'e}d{\'e}ric Magniez, Miklos Santha, and Pranab Sen.
\newblock Quantum testers for hidden group properties.
\newblock In {\em International Symposium on Mathematical Foundations of Computer Science}, pages 419--428. Springer, 2003.

\bibitem[Gol17]{goldreich2017introduction}
Oded Goldreich.
\newblock {\em Introduction to property testing}.
\newblock Cambridge University Press, 2017.

\bibitem[GR99]{goldreich99sublinear}
Oded Goldreich and Dana Ron.
\newblock A sublinear bipartiteness tester for bounded degree graphs.
\newblock {\em Combinatorica}, 19:335--373, 1999.

\bibitem[GR02]{goldreich2002property}
Oded Goldreich and Dana Ron.
\newblock Property testing in bounded degree graphs.
\newblock {\em Algorithmica}, 32:302--343, 2002.

\bibitem[GR11]{goldreich2011testing}
Oded Goldreich and Dana Ron.
\newblock On testing expansion in bounded-degree graphs.
\newblock {\em Studies in Complexity and Cryptography}, pages 68--75, 2011.

\bibitem[Har53]{Harary53}
Frank Harary.
\newblock On the notion of balance of a signed graph.
\newblock {\em Michigan Mathematical Journal}, 2(2):143--146, 1953.

\bibitem[HLW06]{HLW06}
Shlomo Hoory, Nathan Linial, and Avi Wigderson.
\newblock Expander graphs and their applications.
\newblock {\em Bulletin of the American Mathematical Society}, 43(4):439--561, October 2006.

\bibitem[MdW16]{montanaro2016survey}
Ashley Montanaro and Ronald de~Wolf.
\newblock A survey of quantum property testing.
\newblock {\em Theory Comput.}, 7:1--81, 2016.

\bibitem[NC00]{nielsen_chuang}
Michael~A. Nielsen and Isaac~L. Chuang.
\newblock {\em Quantum Computation and Quantum Information}.
\newblock Cambridge University Press, Cambridge, UK, 2000.

\bibitem[NS10]{Nachmias_shapira_expansion}
Asaf Nachmias and Asaf Shapira.
\newblock Testing the expansion of a graph.
\newblock {\em Information and Computation}, 208(4):309--314, 2010.

\bibitem[Pat92]{paturi1992degree}
Ramamohan Paturi.
\newblock On the degree of polynomials that approximate symmetric boolean functions (preliminary version).
\newblock In {\em Proceedings of the twenty-fourth annual ACM symposium on Theory of computing}, pages 468--474, 1992.

\bibitem[Ron09]{DBLP:journals/fttcs/Ron09}
Dana Ron.
\newblock Algorithmic and analysis techniques in property testing.
\newblock {\em Foundations and Trends in Theoretical Computer Science}, 2009.

\bibitem[RS11]{DBLP:journals/siamdm/RubinfeldS11}
Ronitt Rubinfeld and Asaf Shapira.
\newblock Sublinear time algorithms.
\newblock {\em SIAM Journal on Discrete Mathematics (SIDMA)}, 2011.

\bibitem[Sho94]{Shor1994}
Peter~W. Shor.
\newblock Algorithms for quantum computation: Discrete logarithms and factoring.
\newblock In {\em Proceedings of the 35th Annual Symposium on Foundations of Computer Science (FOCS)}, pages 124--134. IEEE, 1994.

\bibitem[Zas18]{Zaslavsky18}
Thomas Zaslavsky.
\newblock Negative (and positive) circles in signed graphs: A problem collection.
\newblock {\em AKCE International Journal of Graphs and Combinatorics}, 15(1):31--48, 2018.

\end{thebibliography}

\appendix
\renewcommand{\theequation}{A.\arabic{equation}}
\setcounter{equation}{1}

\section{Remaining proofs} \label{sec:appendix}
\subsection{Proof of \Cref{lem:coefficient-bound}}
\label{app:proof_2.8}
\lemcoefficientbound*

\begin{proof}
The first part of the claim was proved in \cite[Lemma 15]{ambainis2011quantum}: let $S$ be a set of variables that appear in the polynomial, and let $c_S$ denote its coefficient, then $|c_S|\le2^k$.

    To finish the proof, notice that there are $3N$ possible queries and $O(N)$ non-null responses per query. Therefore the number of canonical monomials of degree $k$ is at most $(CN^2)^k$ for an absolute constant $C$. Then
\[
 \sum_S\abs{c_S}
 \leq\sum_{k=0}^{2q}(2CN^2)^k
 =\exp(O(q\log N)).
\]
\end{proof}

\subsection{Proof of \cref{lem:topology-power}}
\label{app:proof_3.8}
\lemtopologypower*

\begin{proof}
Let $
C_1,\ldots,C_c $
be the connected components of \(F\). For each \(i\in[c]\), let
$
v_i := |V(C_i)|,$
and, for each color \(j\in\{1,2,3\}\), let \(d_{i,j}\) denote the number
of color-\(j\) edges in \(C_i\). Thus $
\sum_{i=1}^c v_i = v$ and $\sum_{i=1}^c \sum_{j=1}^3 d_{i,j} = e$.

For \(d\geq 0\), let us define
\[
R_d(M,\ell)
:=
\prod_{a=1}^{d}
\frac{\ell}{M-(2a-1)\ell},
\qquad
R_0(M,\ell):=1.
\]
Conditioned on a collection of vertices being in the same hidden
block, \(R_d(M,\ell)\) is the probability that \(d\) prescribed disjoint
edges occur in one fixed perfect matching.

We now describe how the connected components of \(F\) may be distributed
among the hidden blocks. Let \(L\) be a partition of \([c]\). A class
\(S\in L\) represents a set of connected components that are placed in
the same hidden block. Define
\begin{equation}
\label{15}
f_L(M,\ell)
:=
\prod_{S\in L}
\prod_{j=1}^3
R_{\sum_{i\in S} d_{i,j}}(M,\ell).
\end{equation}
Thus, conditioned on all components indexed by a class \(S\) lying in
one common hidden block, \(f_L(M,\ell)\) is the probability that all
prescribed matching edges of \(F\) occur.

We use the refinement order on partitions: \(L'\leq L\) means that
\(L'\) is a refinement of \(L\). For \(L'\leq L\), let \(c_{L',L}\)
denote the corresponding inclusion--exclusion coefficient on the
partition lattice; equivalently, \(c_{L,L}=1\), and for \(L'<L\),
\(c_{L',L}\) is the signed number of refinement chains from \(L'\) to
\(L\). Define
\begin{equation}
\label{16}
f'_L(M,\ell)
:=
\sum_{L'\leq L}
c_{L',L}\,f_{L'}(M,\ell).
\end{equation}

We next express the topology moment \(W_F(M,\ell)\) in terms of these
quantities. For a class \(S\in L\), the probability that all vertices in
the components \(\{C_i:i\in S\}\) choose one common hidden block is $\ell^{-\left(\sum_{i\in S}v_i\right)+1}$.
Hence, for the whole partition \(L\), this probability is
$
\prod_{S\in L}
\ell^{-\left(\sum_{i\in S}v_i\right)+1}
=
\ell^{-v+|L|}.$
By the partition inclusion--exclusion formula used in
\cite[Lemma~3]{ambainis2011quantum}, we therefore have
\begin{equation}
\label{17}
W_F(M,\ell)
=
\sum_L
\ell^{-v+|L|}
f'_L(M,\ell).
\end{equation}

We now track the powers of \(\ell\) in each summand. From the definition of \(R_d\),
\begin{equation}
\label{18}
R_d(M,\ell)
=
\ell^d R'_d(M,\ell),
\qquad
R'_d(M,\ell)
:=
\prod_{a=1}^{d}
\frac{1}{M-(2a-1)\ell}.
\end{equation}
Since \(F\) contains \(e\) prescribed colored edges in total, every
\(f_L(M,\ell)\), and hence every \(f'_L(M,\ell)\), contains the explicit
factor $
\ell^e.$

More precisely, let us define \(\widetilde f_L\) and
\(\widetilde f'_L\) by replacing every \(R_d\) in
\cref{15}--\cref{16} by \(R'_d\). Then $f_L(M,\ell)=\ell^e\widetilde f_L(M,\ell)$
and consequently
\begin{equation}
\label{19}
f'_L(M,\ell)=\ell^e\widetilde f'_L(M,\ell).
\end{equation}

We now use the divisibility statement proved in
\cite[Proposition~6]{ambainis2011quantum}. More precisely, the proof of that
proposition shows that, after the explicit factors \(\ell^d\) have been
removed from the terms \(R_d\), the numerator of
\(\widetilde f'_L(M,\ell)\), when written over a denominator consisting
only of factors of the form $
M-(2a-1)\ell, $
is divisible by $
\ell^{c-|L|}.$
Therefore there exist polynomials \(G_L(M,\ell)\) and denominators
\(Q_L(M,\ell)\), where \(Q_L\) is a product of factors
\(M-(2a-1)\ell\), such that
\[
\widetilde f'_L(M,\ell)
=
\frac{
\ell^{c-|L|}G_L(M,\ell)
}{
Q_L(M,\ell)
}.
\]
Using \cref{19}, we obtain
\begin{equation}
\label{20}
f'_L(M,\ell)
=
\frac{
\ell^{e+c-|L|}G_L(M,\ell)
}{
Q_L(M,\ell)
}.  
\end{equation}

Substituting \cref{20} into the \(L\)-summand of \cref{17} gives
\begin{equation}
\ell^{-v+|L|}
f'_L(M,\ell)
=
\frac{
\ell^{-v+|L|}
\ell^{e+c-|L|}
G_L(M,\ell)
}{
Q_L(M,\ell)
}. 
\end{equation}
The dependence on \(|L|\) cancels: $
-v+|L|+e+c-|L|
=
e-v+c$.
Hence,
\begin{equation}
    \label{21}
\ell^{-v+|L|}
f'_L(M,\ell)
=
\frac{
\ell^{e-v+c}G_L(M,\ell)
}{
Q_L(M,\ell)
}.
\end{equation}

Thus every term in the sum \cref{17} has a numerator divisible by $
\ell^{e-v+c}.$ By \Cref{lem:acl-rational}, all the denominators \(Q_L(M,\ell)\) divide the common
denominator
\[
D_q(M,\ell)
=
\prod_{a=1}^{2q}
\left(M-(2a-1)\ell\right)^{\lfloor 2q/a\rfloor}.
\]
Putting all terms in \cref{17} over this common denominator therefore
gives a polynomial \(B_F(M,\ell)\) such that
\[
W_F(M,\ell)
=
\frac{
\ell^{e-v+c}B_F(M,\ell)
}{
D_q(M,\ell)
}.
\]
Moreover, by the degree bound in \Cref{lem:acl-rational}, the numerator has
degree \(O(q\log q)\), and therefore $\deg B_F = O(q\log q)$.

Finally, since $\beta(F)=e-v+c$,
we conclude that
\[
W_F(M,\ell)
=
\frac{
\ell^{\beta(F)}B_F(M,\ell)
}{
D_q(M,\ell)
}.
\]
\end{proof}

\subsection{Proof of \Cref{prop:formal-gap}}
\label{app:proof_3.10}

\propformalgap*

\begin{proof}
By Lemma~\ref{lem:beals}, the acceptance probability of the \(q\)-query quantum
algorithm can be written, after putting the response indicators in
canonical multilinear form, as a real polynomial
\[
P(X)=\sum_{\mu} \alpha_\mu m_\mu(X)
\]
of degree at most \(2q\), where each \(m_\mu\) is a canonical monomial
in the oracle-response indicators.

For \(\sigma\in\{\mathrm{yes},\mathrm{no}\}\), let
\[
p^{\mathrm{form}}_{\sigma}(M,\ell)
:=
\mathbb{E}^{\mathrm{form}}_{\sigma,M,\ell}[P(X)]
\]
denote the formal average acceptance probability. Hence
\[
\Delta^{\mathrm{form}}(M,\ell)
=
p^{\mathrm{form}}_{\YES}(M,\ell)
-
p^{\mathrm{form}}_{\NO}(M,\ell)
\]
satisfies
\begin{equation}
\label{eq:formal-gap-expansion}  
\Delta^{\mathrm{form}}(M,\ell)
=
\sum_{\mu}\alpha_\mu
\left(
\mathbb{E}^{\mathrm{form}}_{\YES,M,\ell}[m_\mu]
-
\mathbb{E}^{\mathrm{form}}_{\NO,M,\ell}[m_\mu]
\right).  
\end{equation}

We analyze each monomial separately.

If \(m_\mu\) is infeasible, then its expectation is zero under both
formal distributions, so it contributes nothing to
\cref{eq:formal-gap-expansion}. We may therefore restrict attention
to feasible monomials.

Fix such a monomial \(m_\mu\). By the canonical form described after
Lemma~\ref{lem:beals}, \(m_\mu\) determines a consistent colored multigraph pattern
\(F_\mu\), together with a prescribed sign on each of its distinct
edges. Write $
v_\mu:=v(F_\mu),\ 
e_\mu:=e(F_\mu),\ 
c_\mu:=c(F_\mu)$,
and $
\beta_\mu
:=
\beta(F_\mu)
=
e_\mu-v_\mu+c_\mu$.

Let
$W_{F_\mu}(M,\ell)$
be the formal topology moment from Definition~\ref{def:topology-moment}; that is, the
probability that all the colored edges prescribed by \(F_\mu\) occur,
ignoring their signs.

Conditional on this topology, the YES and NO distributions differ
only in the probability of the prescribed sign pattern. Define
\[
s^{\NO}_\mu
:=
\Pr_{\NO}
[\text{the prescribed signs on }F_\mu]
\quad\text{and}\quad
s^{\YES}_\mu
:=
\Pr_{\YES}
[\text{the prescribed signs on }F_\mu].
\]
These quantities depend only on the fixed signed pattern
\(F_\mu\), and not on \(M\) or \(\ell\). Therefore
\[
\mathbb{E}^{\mathrm{form}}_{\NO,M,\ell}[m_\mu]
=
W_{F_\mu}(M,\ell)s^{\NO}_\mu
\quad\text{and}\quad
\mathbb{E}^{\mathrm{form}}_{\YES,M,\ell}[m_\mu]
=
W_{F_\mu}(M,\ell)s^{\YES}_\mu .
\]
Consequently,
\begin{equation}
\mathbb{E}^{\mathrm{form}}_{\YES,M,\ell}[m_\mu]
-
\mathbb{E}^{\mathrm{form}}_{\NO,M,\ell}[m_\mu]
=
\delta_\mu W_{F_\mu}(M,\ell),
\label{eq:monomial-gap}
\end{equation}
where $\delta_\mu
:=
s^{\YES}_\mu-s^{\NO}_\mu$.

We now distinguish two cases. First suppose that \(F_\mu\) is a forest. By Lemma~\ref{lem:forest-agreement}, for every
prescribed signing of a forest the $\YES$ and $\NO$ sign probabilities are
equal. Hence $s^{\YES}_\mu=s^{\NO}_\mu$,
so $\delta_\mu=0$.
Thus every forest monomial contributes exactly zero to
\(\Delta^{\mathrm{form}}(M,\ell)\).

Now suppose that \(F_\mu\) is not a forest. Then $\beta_\mu=\beta(F_\mu)\geq 1$.
By \Cref{lem:topology-power}, there exists a polynomial
\(B_{F_\mu}(M,\ell)\) of degree \(O(q\log q)\) such that
\begin{equation}
W_{F_\mu}(M,\ell)
=
\frac{
\ell^{\beta_\mu}B_{F_\mu}(M,\ell)
}{
D_q(M,\ell)
}.
\label{eq:topology-moment-use}
\end{equation}
Combining \cref{eq:monomial-gap} and
\cref{eq:topology-moment-use}, the contribution of \(m_\mu\) to the
formal acceptance gap is
\[
\alpha_\mu\delta_\mu
W_{F_\mu}(M,\ell)
=
\frac{
\alpha_\mu\delta_\mu
\ell^{\beta_\mu}B_{F_\mu}(M,\ell)
}{
D_q(M,\ell)
}.
\]
Since \(\beta_\mu\geq 1\), we may factor one power of \(\ell\):
\begin{equation}
\alpha_\mu\delta_\mu
W_{F_\mu}(M,\ell)
=
\frac{\ell}{D_q(M,\ell)}
\left(
\alpha_\mu\delta_\mu
\ell^{\beta_\mu-1}
B_{F_\mu}(M,\ell)
\right).
\label{eq:one-ell-factor}
\end{equation}

Since all forest monomials vanish, summing
\cref{eq:one-ell-factor} over the feasible nonforest monomials gives
\[
\Delta^{\mathrm{form}}(M,\ell)
=
\frac{\ell}{D_q(M,\ell)}
\sum_{\substack{\mu:\,m_\mu\text{ feasible}\\
                 F_\mu\text{ not a forest}}}
\alpha_\mu\delta_\mu
\ell^{\beta_\mu-1}
B_{F_\mu}(M,\ell).
\]
Let us define
\begin{equation}
H(M,\ell)
:=
\sum_{\substack{\mu:\,m_\mu\text{ feasible}\\
                 F_\mu\text{ not a forest}}}
\alpha_\mu\delta_\mu
\ell^{\beta_\mu-1}
B_{F_\mu}(M,\ell).
\label{eq:H-definition}
\end{equation}
Then \(H(M,\ell)\) is a polynomial and
\[
\Delta^{\mathrm{form}}(M,\ell)
=
\frac{\ell H(M,\ell)}{D_q(M,\ell)}.
\]

It remains only to verify the degree bound. Since \(F_\mu\) arises
from a monomial of degree at most \(2q\), it has at most \(2q\) distinct
prescribed edges, and therefore $\beta_\mu\leq e_\mu\leq 2q$.
Moreover, \Cref{lem:topology-power} gives $\deg B_{F_\mu}=O(q\log q)$.
Hence every summand in \cref{eq:H-definition} has degree at most
\[
(\beta_\mu-1)+\deg B_{F_\mu}
\leq
2q+O(q\log q)
=
O(q\log q).
\]
Taking a finite sum does not increase the maximum degree, so $\deg H=O(q\log q)$.
This proves the proposition.
\end{proof}

\subsection{Proof of \Cref{lem:actual-formal}}\label{app:actual-formal}

\lemactualformal*

\begin{proof}
Let $P(X)=\sum_{\mu} a_\mu m_\mu(X)$ be the canonical acceptance polynomial of the \(q\)-query algorithm.
By Lemma~\ref{lem:coefficient-bound}, for some absolute constant \(C>0\), $\sum_\mu |a_\mu| \le \exp(Cq\log N)$. Let us fix \(\sigma\in\{\YES,\NO\}\). Note that when no overflow event happens, the formal and feasible distributions are similar, so we can couple them in a natural way.
Hence, for every canonical
monomial \(m_\mu\),
\[
|\mathbb{E}^{\mathrm{feas}}_{\sigma}[m_\mu]
-
\mathbb{E}^{\mathrm{form}}_{\sigma}[m_\mu]|
\le
\Pr[\mathrm{overflow}].
\]
From \cref{lem:overflow}, we know that for every \((M,\ell)\in \mathcal G_N\),
$\Pr[\mathrm{overflow}]
\le \exp(-cN^{1/3})
$ for some absolute constant \(c>0\). Therefore
\begin{align*}
|p^{\mathrm{feas}}_{\sigma}(M,\ell)
-
p^{\mathrm{form}}_{\sigma}(M,\ell)|
&\le
\sum_\mu |a_\mu|
\left|\left(\mathbb{E}^{\mathrm{feas}}_{\sigma}[m_\mu]
-
\mathbb{E}^{\mathrm{form}}_{\sigma}[m_\mu]\right)\right|
 \\
&\le
\exp(Cq\log N-cN^{1/3}).
\end{align*}

Now we set
\[
\eta(M,\ell)
:=
\Delta(M,\ell)-\Delta^{\mathrm{form}}(M,\ell).
\]
Since
$\Delta
=p^{\mathrm{feas}}_{\YES}
-p^{\mathrm{feas}}_{\NO}
$
and
$ \Delta^{\mathrm{form}}
= p^{\mathrm{form}}_{\YES}
- p^{\mathrm{form}}_{\NO},
$ using
the triangle inequality, we have
\[
|\eta(M,\ell)|
\le
2\exp(Cq\log N-cN^{1/3}).
\]

Finally, if
$q\le c_0\frac{N^{1/3}}{\log N}
$ for a sufficiently small absolute constant \(c_0>0\), then
\[
Cq\log N-cN^{1/3}
\le
-(c-Cc_0)N^{1/3}
=
-\Omega(N^{1/3}),
\]
and hence
$
\eta(M,\ell)=o(1)
$ holds for every parameter pair in \(\mathcal G_N\).
\end{proof}

\subsection{Proof of \Cref{lem:reciprocal}}
\label{app:proof_3.14}
\lemreciprocal*

\begin{proof}
For simplicity, in the proof we  will abbreviate the functions' notations and only write $D, A, z_i$.

For each $i\in[d]$, the assumption $\abs{z_i}\leq 1/4<1$
allows us to use the geometric-series identity
\begin{equation}\label{eq:single-geometric-series}
 \frac{1}{1-z_i}
 =
 \sum_{k_i=0}^{\infty}z_i^{k_i}.
\end{equation}
This series is absolutely convergent, since
$\sum_{k_i=0}^{\infty}\abs{z_i}^{k_i} = \frac{1}{1-\abs{z_i}}
\leq \frac{4}{3}$.
Consequently, the product of the $d$ series is also absolutely
convergent. More explicitly,
\[
 \sum_{k_1,\ldots,k_d\geq 0}
 \abs{z_1^{k_1}\cdots z_d^{k_d}}
 =
 \prod_{i=1}^{d}
 \left(
   \sum_{k_i=0}^{\infty}\abs{z_i}^{k_i}
 \right)
 \leq
 \left(\frac{4}{3}\right)^d
 <\infty.
\]
Absolute convergence permits us to multiply the series in
\cref{eq:single-geometric-series} and regroup the resulting
monomials according to their total degree. Hence
\begin{equation}\label{eq:product-geometric-expansion}
 \prod_{i=1}^{d}(1-z_i)^{-1}
 =
 \sum_{k_1,\ldots,k_d\geq 0}
 z_1^{k_1}\cdots z_d^{k_d}
 =
 \sum_{j=0}^{\infty}S_j(z_1,\ldots,z_d),
\end{equation}
where
\begin{equation}\label{eq:Sj-definition-detailed}
 S_j(z_1,\ldots,z_d)
 :=
 \sum_{\substack{
       k_1,\ldots,k_d\geq 0\\
       k_1+\cdots+k_d=j}}
 z_1^{k_1}\cdots z_d^{k_d}.
\end{equation}
Thus $S_j$ is the sum of all monomials of total degree $j$ in
the auxiliary variables $z_1,\ldots,z_d$. Since each $z_i$ is
affine in the original variables, $S_j$ has degree at most $j$
in those variables.

{

While considering the polynomial $D$, we will consider $d=d_q$ and enumerate the index set $\mathcal I_q$ (\cref{eq:index-set-IK}) in any order. The polynomial in \cref{eq:SKj} can be represented as

$$S_{q,j}(M, \ell) = S_j((z_i(M, \ell)_{i \in \mathcal I_q})).$$

Since $S_j$ is symmetric, it is invariant under the ordering. Note that $S_{q,0}=1$ and $\deg_{M, \ell}S_{q,j} \leq j$.

}

We next bound $S_j$ uniformly on the region $\mathcal R$.
The number of nonnegative integer solutions of
$k_1+\cdots+k_d=j$ is 
$\binom{d+j-1}{j}$. Moreover, every monomial in \cref{eq:Sj-definition-detailed}
satisfies
$\abs{z_1^{k_1}\cdots z_d^{k_d}}
\leq \left(\frac14\right)^{k_1+\cdots+k_d} =4^{-j}$.
Therefore, by the triangle inequality, we can say that
\begin{equation}\label{eq:Sj-first-bound}
 \abs{S_j}
 \leq
 \binom{d+j-1}{j}4^{-j}.
\end{equation}
Using the elementary estimate $\binom{d+j-1}{j} \leq 2^{d+j-1}$, we obtain
\begin{equation}\label{eq:Sj-geometric-bound}
 \abs{S_j}
 \leq
 2^{d+j-1}4^{-j}
 =
 2^{d-j-1}.
\end{equation}

Let us now set
$J:=4d+10$ and consider the truncated polynomial
\begin{equation}\label{eq:A-detailed-definition}
 A
 :=
 \frac{1}{C}\sum_{j=0}^{J}S_j.
\end{equation}
Because $\deg S_j\leq j$, we have
$ \deg A\leq J=4d+10$.

It remains to bound the approximation error. By
\cref{eq:product-geometric-expansion} and
\cref{eq:Sj-geometric-bound}, we can say that
\begin{equation}\label{eq:reciprocal-tail-bound}
\left|\prod_{i=1}^{d}(1-z_i)^{-1}
 -\sum_{j=0}^{J}S_j\right|
 \leq \sum_{j>J}\abs{S_j} \leq
 \sum_{j>J}2^{d-j-1}
 = 2^{d-J-1}.    
\end{equation}

We now convert the additive approximation in
\cref{eq:reciprocal-tail-bound} into the desired relative
approximation. From $\abs{z_i}\leq 1/4$,
$\abs{1-z_i} \leq 1+\abs{z_i} \leq
\frac54$,
and hence we have
\begin{equation}\label{eq:product-upper-bound}
 \prod_{i=1}^{d}\abs{1-z_i}
 \leq
 \left(\frac54\right)^d.
\end{equation}
Using the fact that
$D=C\prod_{i=1}^{d}(1-z_i)
$
and the definition of $A$, we have
\begin{align*}
 DA-1
 &=
 \prod_{i=1}^{d}(1-z_i)
 \left(
   \sum_{j=0}^{J}S_j
   -
   \prod_{i=1}^{d}(1-z_i)^{-1}
 \right).
\end{align*}
Therefore, by \cref{eq:reciprocal-tail-bound} and
\cref{eq:product-upper-bound}, we have:
$$ \abs{DA-1} \leq
 \left(\frac54\right)^d
 2^{d-J-1} =
 2^{-J-1}
 \left(\frac52\right)^d.$$

Substituting $J=4d+10$ gives us
$ 2^{-J-1}\left(\frac52\right)^d
= 2^{-4d-11}\left(\frac52\right)^d
\leq 10^{-2}.$
Thus $\abs{DA-1}\leq 10^{-2}$ holds throughout $\mathcal R$.

All estimates above use only the assumption that
$\abs{z_i}\leq 1/4$ for every $i \in [d]$. Therefore the same polynomial $A$ works
simultaneously at every point of the region $\mathcal R$.    
\end{proof}

\subsection{Proof of \Cref{lem:denominator-approx}}
\label{app:proof_3.15}
\lemdenominatorapprox*

\begin{proof}

Let \begin{equation}\label{eq:index-set-IK_app}
 \mathcal I_q
 :=\set{(a,r):1\leq a\leq 2q,\ 1\leq r\leq\lfloor 2q/a\rfloor}.
\end{equation}
Then $\abs{\mathcal I_q}=d_q$, like in \cref{eq:den-degree}. Consider
$i=(a,r)\in\mathcal I_q$.
Since $1\leq a\leq 2q$, $1\leq\ell\leq R$, and
$|M-M_0|\leq R^{3/2}$, we have
\begin{equation}\label{eq:zi-uniform-bound}
|z_i(M,\ell)|=
\frac{|M_0-M+(2a-1)\ell|}{M_0}
\leq \frac{|M-M_0|+(2a-1)\ell}{M_0}\leq
\frac{R^{3/2}+(4q-1)R}{M_0}.    
\end{equation}

We bound the two terms in the numerator separately. From the
definition of $R$, we can say that
\[
    R^{3/2}
    \leq
    \left(\frac{N^{2/3}}{100}\right)^{3/2}
    =
    \frac{N}{1000}.
\]
Moreover, $M_0\geq\frac{11N}{10}$. Consequently, we have
\begin{equation}
    \frac{R^{3/2}}{M_0}
    \leq
    \frac{N/1000}{11N/10}
    =
    \frac1{1100}.
    \label{eq:M-variation-bound}
\end{equation}

For the second term, we can write
\begin{align}
\frac{(4q-1)R}{M_0}
    &\leq
    \frac{4qR}{M_0}
    \leq
    \frac{
        4q\,(N^{2/3}/100)
    }{
        11N/10
    }
    =
    \frac{2}{55}
    \frac{q}{N^{1/3}} 
    = o(1),
    \label{eq:ell-shift-bound}
\end{align}
where the last equality follows from
$q=o(N^{1/3})$.

Combining
\cref{eq:zi-uniform-bound},
\cref{eq:M-variation-bound}, and
\cref{eq:ell-shift-bound}, we obtain
\[
    \sup_{\substack{
        i\in\mathcal I_q\\
        1\leq\ell\leq R\\
        |M-M_0|\leq R^{3/2}}}
    |z_i(M,\ell)|
    \leq
    \frac1{1100}+o(1).
\]
In particular, for all sufficiently large $N$,
$|z_i(M,\ell)|\leq\frac14
$
simultaneously for every $i\in\mathcal I_q$ and every
$(M,\ell)$ in the stated region.

We may therefore apply Lemma~\ref{lem:reciprocal}
with $C=M_0^{d_q}$, $d=d_q$,
and with the repeated affine functions
$\bigl(z_i(M,\ell)\bigr)_{i\in\mathcal I_q}$.
The total-degree truncation constructed in that lemma is
precisely
\[
    M_0^{-d_q}
    \sum_{j=0}^{4d_q+10}S_{q,j}(M,\ell),
\]
which is the polynomial $A_q(M,\ell)$ defined above. Hence
    $\left|
        D_q(M,\ell)A_q(M,\ell)-1
    \right|
    \leq 10^{-2}$
holds throughout the entire region.

It remains to bound the degree. We have
$    d_q=
    \sum_{a=1}^{2q}
    \left\lfloor\frac 2qa\right\rfloor \leq
    2q\sum_{a=1}^{2q}\frac1a
    \leq
    2q(1+\log 2q).
    \label{eq:dK-harmonic-bound}
$
Therefore
$
    J_q
    =
    4d_q+10
    =
    O\!\bigl(q\log(q)\bigr).
$
Since every $S_{q,j}$ has degree at most $j$ in $(M,\ell)$, we have 
    $\deg A_q
    \leq J_q
    =
    O\!\bigl(q\log(q)\bigr)$.
This proves the lemma.
\end{proof}

\subsection{Proof of \cref{lem:logdegree-expansion}}
\label{app:proof_4.2}
\lemlogdegreeexpansion*

\begin{proof}
Let us condition on the positions of the $N$ visible vertices among the $M_0$ ambient vertices. Consider a set $U$ of $u\leq N/2$ visible vertices and consider one uniformly random perfect matching. We repeatedly choose an unmatched vertex of $U$ and expose its partner, for $\lceil u/2\rceil$ steps. Such a vertex is always available, since each exposure removes at most two vertices of $U$.

Before the $(t+1)$-th exposure, at most $t$ visible vertices outside $U$ have been used. Hence at least $N-u-t\geq N/4$
unmatched visible vertices outside $U$ remain. The exposed partner is uniform among the remaining ambient vertices. Since $M_0\leq6N/5$ for sufficiently large $N$, the conditional probability that this exposure gives a crossing edge is at least $1/5$.

The number of crossing edges exposed in the $c$ independent matchings therefore stochastically dominates
\[
 Z\sim\Bin\!\left(c\left\lceil\frac u2\right\rceil,\frac15\right).
\]
The mean of the random variable $Z$ is at least $cu/10$. Using Chernoff bound, we have
\[
 \Prb\!\left[|e_H(U,V\setminus U)|<\frac{cu}{20}\right]
 \leq \exp\!\left(-\frac{cu}{80}\right).
\]
Taking a union bound over all sets $U$ gives
\[
 \sum_{u=1}^{\lfloor N/2\rfloor}
 \binom Nu\exp\!\left(-\frac{cu}{80}\right)
 \leq
 \sum_{u=1}^{\lfloor N/2\rfloor}
 \left(\frac{eN}{u}N^{-5}\right)^u=o(1),
\]
where we used $c \geq400\log N$. This proves the lemma.
\end{proof}

\subsection{Proof of \cref{lem:replacement-expansion}}
\label{app:proof_4.5}
\lemreplacementexpansion*

\begin{proof}
Let $S\subseteq V(\mathcal R(H))$ and assume $t:=\abs S\leq Nc/2$. For each cloud let us define $S_u=S\cap C_u$, and define $U=\set{u:\abs{S_u}>c/2}$ and
 $S_0=\bigcup_{u\in U}C_u$.
Let $b=\abs{S\triangle S_0}
  =\sum_u\min\set{\abs{S_u},c-\abs{S_u}}$.
  
By the expansion of the cloud graph, the number of internal edges crossing the cut is at least
\begin{equation}\label{eq:internal-replacement-bound}
 |e_{\mathrm{int}}(S,S^c)|\geq h_0b.
\end{equation}
For the external edges, let us start from the cut determined by the union of whole clouds $S_0$. Changing the cut-membership of one replacement-graph vertex changes the status of at most one external edge. Hence
\[
 |e_{\mathrm{ext}}(S,S^c)|
 \geq |e_H(U,U^c)|-b.
\]
Moreover, both $\abs{S_0}=c\abs U$ and its complement have size at least $t-b$. Therefore
\begin{equation}\label{eq:external-replacement-bound}
 |e_{\mathrm{ext}}(S,S^c)|
 \geq \beta(t-b)-b
 =\beta t-(\beta+1)b.
\end{equation}

Now if $b\geq\beta t/(2(\beta+1))$, then \cref{eq:internal-replacement-bound} is at least $\frac{h_0\beta}{2(\beta+1)}t$.
Otherwise, \cref{eq:external-replacement-bound} is at least $\beta t/2$. Thus we have
\[
 |e_{\mathcal R(H)}(S,S^c)|\geq
 \rho t,
 \qquad
 \rho:=\min\set{\frac\beta2,
                   \frac{h_0\beta}{2(\beta+1)}}>0.
\]
Since the maximum degree is four, every outside boundary vertex is incident to at most four crossing edges. Hence $\abs{\partial_{\mathcal R(H)}S}
 \geq \frac{\rho}{4}\abs S$.
Choosing $\alpha=\rho/4$, we have the proof.
\end{proof}

\subsection{Proof of \Cref{lem:no-subdivision}}\label{app:proofof_4.11}

\lemnosubdivision*

\begin{proof}
Let $V_1,V_2$ be the two classes of visible vertices corresponding to
the two hidden blocks in the construction of $H$. Similar to the proof of \cref{lem:replacement-no-farness}, let us define
\[
    S_i := \bigcup_{u\in V_i} C_u,
    \qquad i\in\{1,2\}.
\]
Since no block overflows, we have
$|V_i|\ge N-\frac{M_0}{2} \ge \frac{9N}{20}-2.$
Consequently, we have
\begin{equation}\label{eq:no-subdivision-original-side-size}
|S_i| = c|V_i| \ge \frac{9Nc}{20}-2c= \frac{9\overline n}{20}-2c.   
\end{equation}
Moreover, every edge of $G$ has both endpoints in either $S_1$ or
$S_2$, and hence there is no edge between $S_1$ and $S_2$.

Every edge that is subdivided therefore has both endpoints in one of
the sets $S_i$, $i \in [2]$. Let us assign its new subdivision vertex to that same set.
Let $Z_i$ be the set of subdivision vertices assigned to $S_i$, and let us write
\[
    S_i' := S_i\cup Z_i, \qquad i\in\{1,2\}.
\]
Then $V(G')=S_1'\mathbin{\dot\cup}S_2'$ and
$|e_{G'}(S_1',S_2')|=0$.
Let $ n' := |V(G')|=\overline n+r$.
Since $r\le \overline n/4$, we have
$\frac{n'}{3} \le \frac{5\overline n}{12}.
$
On the other hand, since $\overline n=Nc$, for all sufficiently large $N$, $\frac{9\overline n}{20}-2c \ge \frac{5\overline n}{12}.
$
This implies that
$\frac{\overline n}{30}\ge 2c$,
which holds whenever $N\ge 60$. It follows from
\cref{eq:no-subdivision-original-side-size} that
\[
    |S_i'|
    \ge |S_i|
    \ge \frac{n'}{3},
    \qquad i\in\{1,2\}.
\]

Let $S$ be the smaller of $S_1'$ and $S_2'$. Then we have
\[
    \frac{n'}{3}
    \le |S|
    \le \frac{n'}{2},
    \qquad
    \partial_{G'}(S)=\varnothing.
\]
Suppose that $\widetilde G$ is an $\alpha'$-vertex-expander obtained
from $G'$ by edge modifications. In $\widetilde G$, the fixed set $S$
must have at least
$|\partial_{\widetilde G}(S)| \ge \alpha'|S|\ge \frac{\alpha'n'}{3}$
distinct outside neighbors. Initially, $S$ has no outside neighbor.
Each added edge creates at most one new outside neighbor of $S$, while
edge deletions create none. Therefore at least
$\frac{\alpha'n'}{3}$ edge modifications are necessary.

Moreover, subdividing edges does not increase the maximum degree, so $G'$ has
maximum degree at most four. Finally,  dividing the required number of
modifications by the bounded-degree normalization $4n'$ gives us
$\frac{\alpha'n'/3}{4n'}= \frac{\alpha'}{12}$.
Thus $G'$ is $\alpha'/12$-far from every
$\alpha'$-vertex-expander. This completes the proof.
\end{proof}

\subsection{Proof of \Cref{lem:bounded-subdivision}}\label{app:proofof_4.12}

\lemboundedsubdivision*

\begin{proof}
Let $m=\abs{V(G)}$,  $r\leq m/4$ be the number of subdivision vertices, and let $G'$ be the subdivided graph. Consider $S\subseteq V(G')$ with $\abs S\leq(m+r)/2$. Let us define $A=S\cap V(G)$ and let $B$ be the subdivision vertices in $S$.

Every original edge crossing $(A,V(G)\setminus A)$ is replaced by a path whose endpoints lie on opposite sides, and therefore gives at least one crossing edge of $G'$. Hence the cut has at least
\begin{equation}\label{eq:subdivision-first-bound}
 \rho\min\set{\abs A,m-\abs A}
\end{equation}
crossing edges. In addition, the vertices in $B$ have $2\abs B$ incident path edges, of which at most $4\abs A$ end in $A$. Thus the cut also has at least
\begin{equation}\label{eq:subdivision-second-bound}
 \max\set{0,2\abs B-4\abs A}
\end{equation}
crossing edges.

Suppose first that $\abs A\geq\abs S/6$. If $\abs A\leq m/2$, then \cref{eq:subdivision-first-bound} is at least $\rho\abs S/6$. If $\abs A>m/2$, then
\[
 m-\abs A\geq m-\abs S
 \geq \frac{m-r}{2}
 \geq \frac{3m}{8}.
\]
Since $\abs S\leq(m+r)/2\leq5m/8$, this is at least $3\abs S/5$, and the first bound is again a constant multiple of $\abs S$.

Finally, if $\abs A<\abs S/6$, then $\abs B>5\abs S/6$, so the expression in \cref{eq:subdivision-second-bound} is larger than $\abs S$. Thus $G'$ has constant edge expansion. Since the maximum degree is still four, dividing the number of crossing edges by four gives constant vertex expansion. This completes the proof.
\end{proof}

\end{document}